%% LyX 2.0.6 created this file.  For more info, see http://www.lyx.org/.
%% Do not edit unless you really know what you are doing.
\documentclass[11pt,english,onecolumn, draftcls]{IEEEtran}
\usepackage[T1]{fontenc}
\usepackage[latin9]{inputenc}
\usepackage{color}
\usepackage{calc}
\usepackage{amsmath}
\usepackage{amssymb}
\usepackage{graphicx}
\usepackage{esint}

\makeatletter
%%%%%%%%%%%%%%%%%%%%%%%%%%%%%% Textclass specific LaTeX commands.
  \newtheorem{assumption}{Assumption}
  \newtheorem{definitn}{Definition}
  \newtheorem{remrk}{Remark}
  \newtheorem{lemma}{Lemma}
  \newtheorem{example}{Example}
  \newtheorem{constraints}{Constraints}
  \newtheorem{thm}{Theorem}
  \newtheorem{cor}{Corollary}
  \newtheorem{problem}{Problem}

%%%%%%%%%%%%%%%%%%%%%%%%%%%%%% User specified LaTeX commands.
\usepackage{cite}

\author{Xiongbin~Rao,~\IEEEmembership{Student~Member,~IEEE} and~Vincent~K.~N.~Lau,~\IEEEmembership{Fellow,~IEEE}%\\
%Dept. of Electronic and Computer Engineering \\The Hong Kong University of Science and Technology\\Clear Water Bay, Kowloon, Hong Kong\\ Email: \{xrao,stevenr,eeknlau\}@ust.hk
 %\thanks{Copyright (c) 2012 IEEE. Personal use of this material is permitted. However, permission to use this material for any other purposes must be obtained from the IEEE by sending a request to pubs-permissions@ieee.org.}% 
\thanks{The authors are with the Department of Electronic and Computer Engineering (ECE), the Hong Kong University of Science and Technology (HKUST), Hong Kong (e-mail: \{xrao,eeknlau\}@ust.hk).}%
 }

\makeatother

\usepackage{babel}
\begin{document}

\title{Interference Alignment with Partial CSI Feedback in MIMO Cellular
Networks}
\maketitle
\begin{abstract}
Interference alignment (IA) is a linear precoding strategy that can
achieve optimal capacity scaling at high SNR in interference networks.
However, most existing IA designs require full channel state information
(CSI) at the transmitters, which would lead to significant CSI signaling
overhead. There are two techniques, namely \emph{CSI quantization}
and \emph{CSI feedback filtering}, to reduce the CSI feedback overhead.
In this paper, we consider IA processing with CSI feedback filtering
in MIMO cellular networks. We introduce a novel metric, namely the
\emph{feedback dimension}, to quantify the first order CSI feedback
cost associated with the CSI feedback filtering. The CSI feedback
filtering poses several important challenges in IA processing. First,
there is a hidden \emph{partial CSI knowledge} constraint in IA precoder
design which cannot be handled using conventional IA design methodology.
Furthermore, existing results on the feasibility conditions of IA
cannot be applied due to the partial CSI knowledge. Finally, it is
very challenging to find out how much CSI feedback is actually needed
to support IA processing. We shall address the above challenges and
propose a new IA feasibility condition under partial CSIT knowledge
in MIMO cellular networks. Based on this, we consider the CSI feedback
profile design subject to the degrees of freedom requirements, and
we derive closed-form trade-off results between the CSI feedback cost
and IA performance in MIMO cellular networks. \end{abstract}
\begin{keywords}
MIMO cellular networks, interference alignment (IA), partial CSI feedback,
CSI feedback dimension, IA feasibility condition.
\end{keywords}

\section{Introduction}

It is well known that inter-cell interference is one of the most important
performance bottlenecks in wireless networks. There are many works
on interference mitigation techniques and conventional approaches
either treat interference as noise or rely on \emph{interference avoidance}
by means of channel orthogonalization \cite{sreekanth2008sum}. However,
these schemes are far from optimal \cite{host2005multiplexing}. Recently,
interference alignment (IA) was proposed as an effective means to
mitigate interference in $K$-user interference channels \cite{cadambe2008interference,gomadam2011distributed}.
By aligning the interference from different transmitters (Txs) into
a lower dimensional subspace at each receiver (Rx), IA can achieve
the optimal capacity scaling with respect to (w.r.t.) SNR. As such,
there is a surge in the research interest of IA and it has been extended
to other topologies such as MIMO cellular networks in \cite{ruan2012interference,zhuang2011interference}.

Despite the fact the IA can achieve substantial throughput gain, conventional
IA designs \cite{cadambe2008interference,gomadam2011distributed,ruan2012interference,zhuang2011interference}
require full channel state information at the Tx side (CSIT). Such
full CSIT requirement is quite difficult to achieve in practice due
to limited CSI feedback capacity in the reverse link in practice.
As such, naive IA design will be very sensitive to CSIT errors \cite{thukral2009interference,krishnamachari2009interference}
and it is important to take into account the CSI feedback constraint
in the IA design. There are, in general, two ways to reduce the CSI
feedback overhead, namely \emph{CSI quantization} and \emph{CSI filtering}.
In \cite{thukral2009interference,krishnamachari2009interference},
the authors considered using Grassmannian codebooks to quantize and
feedback the channel direction information (CDI) for IA processing.
In \cite{rao2012limited,el2011grassmannian}, some adaptive quantization
schemes are proposed to exploit the channel statistics so as to enhance
the limited CSI feedback efficiency. However, these schemes considered
CSI quantization of the full CDI in the interference networks only. 

In fact, full CDI may not always be needed to achieve IA processing
at the Txs. We illustrate two examples in which substantially reduced
CSI is fed back to achieve IA processing. Furthermore, the \emph{CSI
quantization} and the \emph{CSI filtering} techniques are complementary
to each other and in some situations, the CSI filtering will be a
first order contributor towards enhancing the CSI feedback efficiency
in MIMO cellular networks. The CSI filtering techniques to reduce
feedback overhead are relatively less explored. In \cite{de2012interference},
a CSI filtering scheme by CSI truncation is proposed to reduce the
CSI feedback in MIMO interference network. In \cite{suh2011downlink},
a CSI filtering scheme with zero-forcing IA is proposed to eliminate
the intercell CSI feedback in MIMO cellular networks. However, a more
systematic understanding is still needed to determine how much CSI
feedback is required for IA processing. In this paper, we propose
a systematic framework of CSI filtering and analyze the associated
tradeoff between CSI feedback cost and IA degrees of freedom (DoF)
performance in MIMO cellular networks. There are several unique challenges
that need to be tackled. 
\begin{itemize}
\item \textbf{How to quantify the CSI Feedback Cost?} It may be natural
to measure the CSI feedback cost in MIMO cellular networks in terms
of the total number of the feedback bits. However, this metric mixes
the \emph{CSI filtering} and \emph{CSI quantization} together. To
obtain some key design insights, it is desirable to have a metric
that can solely focus on the CSI filtering aspect because the CSI
quantization is complementary and can always be considered on top
of the CSI filtering as in Figure \ref{fig:Role-of-CSI}. 
\item \textbf{IA Feasibility Conditions under Partial CSI Feedback:} It
is well known that the IA scheme is not always feasible and the feasibility
conditions are topology specific. The IA feasibility condition is
studied for MIMO interference channels in \cite{yetis2010feasibility,gonzalez2012general,razaviyayn2011degrees,ruan2012feasibility},
and for MIMO cellular networks in \cite{liu2013feasibility}. However,
these works have assumed full CSIT%
\footnote{For instance, in conventional IA formulation \cite{yetis2010feasibility,gonzalez2012general,razaviyayn2011degrees,ruan2012feasibility},
the IA precoders / decorrelators $\{\mathbf{V}_{i},\mathbf{U}_{i}:\forall i\}$
are found to be a function of the \emph{entire} CSIs $\{\mathbf{H}_{ji}:\forall j,i\}$
such that $\textrm{rank}\left(\mathbf{U}_{j}^{\dagger}\mathbf{H}_{jj}\mathbf{V}_{j}\right)=d,\;\mathbf{U}_{j}^{\dagger}\mathbf{H}_{ji}\mathbf{V}_{i}=\mathbf{0},\forall j,i\neq j$. %
}and hence the precoders can be designed as a function of the full
CSI. While in MIMO cellular networks with CSI feedback filtering,
the precoders can only be designed based on the \emph{partial CSI
knowledge} from CSI feedback filtering and hence the IA feasibility
conditions are different.
\item \textbf{CSI Feedback Design:} Further, it remains a question what
is the CSI filtering scheme with the \emph{least} amount of CSI feedback
overhead to support the required IA DoFs for a given antenna configuration.
Such a question involves minimization of the CSI feedback cost subject
to IA feasibility constraint. However, this problem is highly non-trivial
because of the combinatorial nature of CSI filtering scheme design.
\end{itemize}

In this paper, we will address the challenges above as follows. We
first define a novel CSI feedback cost metric, namely the \emph{CSI
feedback dimension}. The CSI feedback dimension enables us to isolate
the CSI quantization effects from the CSI filtering design so as to
obtain tractable and first order design insights. Based on the proposed
metric, we propose the idea of IA processing under partial CSI feedback
in MIMO cellular networks. After that, we investigate the feasibility
conditions and derive the associated precoder / decorrelator solutions
for IA under a given partial CSI feedback scheme. Based on these results,
we attempt to find out the \emph{least} amount of CSI feedback overhead
by formulating the problem of minimizing CSI feedback dimension subject
to IA constraints with a given IA DoFs in the network for a given
antenna configuration. Using specific insights from the problem, we
derive a low complexity asymptotically optimal solution and obtain
closed-form tradeoff results between the number of DoFs and the CSI
feedback dimension. Finally, we compare the proposed IA design with
various state-of-the-art baselines and illustrate that the proposed
solution achieves significant CSI feedback cost reduction in MIMO
cellular networks. 

\textit{Notation}s: Uppercase and lowercase boldface letters denote
matrices and vectors respectively. The operators $(\cdot)^{T}$, $(\cdot)^{\dagger}$,
$\textrm{rank}(\cdot)$, $|\cdot|$, $\textrm{tr}(\cdot)$, $\textrm{dim}_{s}(\cdot)$,
$\otimes$, $\left\lfloor \cdot\right\rfloor $, $\left\lceil \cdot\right\rceil $,
$\left\Vert \cdot\right\Vert $ and $\textrm{vec}(\cdot)$ are the
transpose, conjugate transpose, rank, cardinality, trace, dimension
of subspace, Kronecker product, integer floor, integer ceiling, Frobenius
norm and vectorization respectively; $\mathbf{I}_{d}$, $\mathbb{Z}$
and $\mathbb{U}(A,B)=\left\{ \mathbf{U}\in\mathbb{C}^{A\times B}:\mathbf{U}^{\dagger}\mathbf{U}=\mathbf{I}\right\} $
denote the identity matrix, the set of non-negative integers, and
the set of $A\times B$ ($A\geq B$) semi-unitary matrices respectively;
$\mathbb{P}(\mathbf{A})=\{a\mathbf{A}:a\in\mathbb{C}\}$ and $\textrm{span}(\{\mathbf{A}_{i}\})$
denotes the vector space spanned by all the column vectors of the
matrices in $\{\mathbf{A}_{i}\}$, and $d\mid M$ denotes that integer
$M$ is divisible by integer $d$.

\section{System Model}

\subsection{MIMO Cellular Networks}

Consider a MIMO cellular network with $G$ base stations (BSs) and
each BS serves $K$ mobile stations (MSs) as illustrated in Figure
2. Consider that each BS and MS are equipped with $N$ and $M$ antennas
respectively, and $d$ data streams are transmitted to each MS from
its serving BS. We focus on the case when $M\leq(G-1)Kd+d$ because
otherwise, i.e., $M>(G-1)Kd+d$, the number of antennas at the MS
is over-sufficient to cancel all the inter-cell interference using
pure zero forcing at the MS.

Denote the transmit SNR at each BS as $P$, the $k$-th MS of BS $j$
as the $(j,k)$-th MS, the channel matrix from the $i$-th BS to the
$(j,k)$-th MS as $\mathbf{H}_{jk,i}\in\mathbb{C}^{M\times N}$. The
received signal at the $(j,k)$-th MS is given by:

\[
\mathbf{y}_{jk}=\mathbf{U}_{jk}^{\dagger}\left(\mathbf{H}_{jk,j}\mathbf{V}_{jk}\mathbf{x}_{jk}+\vphantom{\sum_{p=1}^{K}}\right.\underset{\textrm{intra-cell interference}}{\underbrace{\sum_{\underset{\neq k}{p=1}}^{K}\mathbf{H}_{jk,j}\mathbf{V}_{jp}\mathbf{x}_{jp}}}+\underset{\textrm{inter-cell interference}}{\underbrace{\sum_{\underset{\neq j}{i=1}}^{G}\sum_{p=1}^{K}\mathbf{H}_{jk,i}\mathbf{V}_{ip}\mathbf{x}_{ip}}}\left.\vphantom{\sum_{p=1}^{K}}+\mathbf{n}_{jk}\right),\forall j,k
\]
where $\mathbf{x}_{jk}\sim\mathcal{CN}(\mathbf{0},\;\frac{P}{Kd}\mathbf{I}_{d})$
is the encoded information symbol sent from the $j$-th BS to the
$(j,k)$-th MS, $\mathbf{V}_{jk}\in\mathbb{C}^{N\times d}$ and $\mathbf{U}_{jk}\in\mathbb{C}^{M\times d}$
are the corresponding precoder and decorrelator matrix for the $(j,k)$-th
MS, $\mathbf{n}_{jk}\sim\mathcal{CN}(\mathbf{0},\;\mathbf{I}_{M})$
is the white Gaussian noise.
\begin{assumption}
[Channel Matrices]\label{Channel-MatricesAssume}Assume the elements
of $\mathbf{H}_{jk,i}$ are i.i.d. complex Gaussian random variables
with zero mean and unit variance. The CSIs are observable at the MSs
and the CSI feedback from the $(j,k)$-th MS will be received error-free
by BS $j$. Furthermore, we assume the BSs $\{1,\cdots,G\}$ have
backhaul connections such that the feedback CSI can be shared among
them.
\end{assumption}

\subsection{CSI Feedback Filtering and Feedback Cost}

The CSI feedback reduction in MIMO cellular networks contains two
processes in general, namely the CSI filtering and the CSI quantization
as illustrated in Figure \ref{fig:Role-of-CSI}. To simplify the analysis,
we shall consider these two factors separately. We consider CSI filtering
only in Sections II-IV (no quantization is performed) and then analyze
the effects of CSI quantization (block (b)) in Section V. Since IA
processing aims at nulling off interferences at the MS, only the CDI%
\footnote{For example, in IA designs, if $\mathbf{\mathbf{U}}^{\dagger}\mathbf{H}\mathbf{V}=\mathbf{0}$,
then we have $\mathbf{\mathbf{U}}^{\dagger}(a\mathbf{H})\mathbf{V}=0$,
$\forall a\in\mathbb{C}$. Hence, it is sufficient to feeding back
the CDI for IA, i.e., $\mathbb{P}(\mathbf{H})=\{a\mathbf{H}:a\in\mathbb{C}\}$,
which is contained in $\mathbb{G}(1,MN)$ \cite{dai2008quantization}.%
}, i.e., $\mathbb{P}(\mathbf{H}_{jk,i})=\{a\mathbf{H}_{jk,i}:a\in\mathbb{C}\}$,
$\forall j,k,i$, is required to design the IA transceivers. Hence,
we shall consider CSI feedback over the Grassmannian manifold. Let
$\mathcal{H}_{jk}=(\mathbf{H}_{jk,1},\mathbf{H}_{jk,2},\cdots\mathbf{H}_{jk,G})\in\prod_{i=1}^{G}\mathbb{C}^{M\times N}$
be the tuple of\emph{ }CSI matrices observed at the $(j,k)$-th MS
and let $\mathbb{G}(A,B)$ be the Grassmannian manifold of $A$ dimensional
subspaces in $\mathbb{C}^{B\times1}$. The \emph{CSI feedback filtering}
at each MS is modeled by the following model.

\begin{figure}
\begin{centering}
\includegraphics{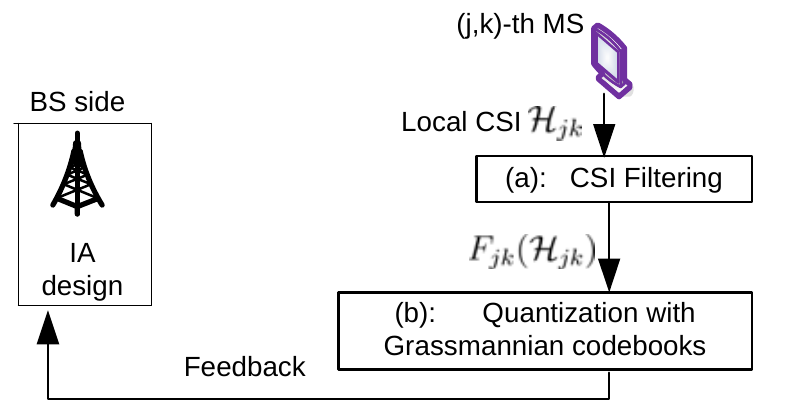}
\par\end{centering}

\caption{\label{fig:Role-of-CSI}Role of CSI filtering in the CSI feedback
reduction.}
\end{figure}

\begin{definitn}
[CSI Feedback Filtering]The partial CSI feedback generated by the
$(j,k)$-th MS is a $l_{jk}$-tuple, which  can be characterized by
a \emph{feedback filtering function} $F_{jk}$: $\prod_{i=1}^{G}\mathbb{C}^{M\times N}\rightarrow\prod_{i=1}^{l_{jk}}\mathbb{G}(A_{jk}^{[i]},B_{jk}^{[i]})$.
That is:
\begin{equation}
\mathbb{\mathcal{H}}_{jk}^{fed}=F_{jk}(\mathcal{H}_{jk})\label{eq:feedback-info}
\end{equation}
where $l_{jk}$ denotes the number of subspaces in $\mathbb{\mathcal{H}}_{jk}^{fed}$,
$\mathbb{\mathcal{H}}_{jk}^{fed}\in\mathbb{G}(A_{jk}^{[1]},B_{jk}^{[2]})\times\mathbb{G}(A_{jk}^{[2]},B_{jk}^{[2]})\times\cdots\mathbb{G}(A_{jk}^{[l_{jk}]},B_{jk}^{[l_{jk}]})$
is the partial CSI generated at the $(j,k)$-th MS, $\mathbb{G}(A_{jk}^{[i]},B_{jk}^{[i]})$
is the associated Grassmannian manifold containing the $i$-th element
in the CSI feedback tuple $\mathbb{\mathcal{H}}_{jk}^{fed}$, and
$A_{jk}^{[i]},$ $B_{jk}^{[i]}$ are parameters characterizing that
the $i$-th element in $\mathbb{\mathcal{H}}_{jk}^{fed}$ is a $A_{jk}^{[i]}$-dimensional
subspace in $\mathbb{C}^{B_{jk}^{[i]}\times1}$. \hfill \QED
\end{definitn}

In other words, the output of the CSI feedback filtering is a tuple
of subspaces where each subspace corresponds to a point in the associated
Grassmannian manifold \cite{dai2008quantization}. For example, consider
two CSI matrices $\mathbf{H}_{1},\mathbf{H}_{2}\in\mathbb{C}^{2\times3}$.
If we feedback $\mathbb{P}(\mathbf{H}_{1})=\{a\mathbf{H}_{1}:a\in\mathbb{C}\}$,
$\mathbb{P}(\mathbf{H}_{2})=\{a\mathbf{H}_{2}:a\in\mathbb{C}\}$,
then this corresponds to the feedback filtering function $F=\left(\begin{array}{cc}
\mathbb{P}(\mathbf{H}_{1}), & \mathbb{P}(\mathbf{H}_{2})\end{array}\right)\in\mathbb{G}(1,6)\times\mathbb{G}(1,6)$. Note that under \emph{given} feedback filtering functions $\{F_{jk}\}$,
the partial CSI $\{F_{jk}(\mathcal{H}_{jk})\}$ will be fed back to
the BSs for the IA precoder designs $\{\mathbf{V}_{jk}:\forall j,k\}$.
To highlight the role of feedback cost reduction due to CSI filtering
at the MS, we define the notion of \emph{feedback dimension} below. 
\begin{definitn}
[CSI Feedback Dimension]\label{Feedback-DimensionDefine-the}Define
the feedback dimension $D$ as the sum of the dimension of the Grassmannian
manifolds \cite{dai2008quantization} $\{\mathbb{G}(A_{jk}^{[i]},B_{jk}^{[i]}):i=1,\cdots,l_{jk};j=1,\cdots,G;k=1,\cdots,K\}$,
\begin{equation}
D=\sum_{j=1}^{G}\sum_{k=1}^{K}\sum_{i=1}^{l_{jk}}A_{jk}^{[i]}(B_{jk}^{[i]}-A_{jk}^{[i]}).\label{eq:definition_feedback_dimensioin}
\end{equation}
 \hfill \QED
\end{definitn}

\begin{remrk}
[Interpretation of CSI Feedback Dimension]The feedback dimension
in Def. \ref{Feedback-DimensionDefine-the} is a first order measure
of CSI feedback cost in MIMO cellular networks because it isolates
the contribution of CSI feedback reduction due to \emph{CSI feedback
filtering} from \emph{CSI quantization}. First, a Grassmannian manifold
of dimension $D$ is locally homeomorphic%
\footnote{The locally homeomorphic relationship between a Grassmannian $\mathbb{G}$
with dimension $D$ and $\mathbb{C}^{D\times1}$ means: there exist
a mapping $f:\mathbb{G}\rightarrow\mathbb{C}^{D\times1}$, such that
for any point $x\in\mathbb{G}$, there exists an open set $U\subseteq\mathbb{G}$
containing $x$ and the image $f(U)$ is open in $\mathbb{C}^{D\times1}$
\cite{hirsch1976differential}.%
} to $\mathbb{C}^{D\times1}$. Intuitively, this means that a Grassmannian
manifold of dimension $D$ locally looks like the $D$-dimensional
Euclidean space and a feedback dimension $D$ means that $D$ scalars
are required to feedback to the BS side. Second, the feedback dimension
is also directly proportional to the total number of bits allocated
for CSI feedback in MIMO cellular networks. As in Theorem 5 in Section
V, we demonstrate that with a total number of CSI feedback bits $D\log\mbox{SNR}$,
it is sufficient to support certain DoF in MIMO cellular networks.
\end{remrk}

\subsection{Interference Alignment under Partial CSI Feedback}

One commonly adopted IA formulation in MIMO cellular networks is to
find out the precoder and decorrelator solutions $\{\mathbf{U}_{jk},\mathbf{V}_{jk}\}$
based on the full CSIT knowledge, such that the following set of conditions
can be satisfied:
\begin{eqnarray}
\textrm{rank}(\mathbf{U}_{jk}^{\dagger}\mathbf{H}_{jk,j}\mathbf{V}_{jk})=d,\forall j,k;\label{eq:MIA_0}\\
\mathbf{U}_{jk}^{\dagger}\mathbf{H}_{jk,j}\mathbf{V}_{jp}=\mathbf{0},\;\forall j,k\neq p; &  & \mbox{(intracell interference nulling)}\label{eq:MIA_1}\\
\mathbf{U}_{jk}^{\dagger}\mathbf{H}_{jk,i}\left[\begin{array}{ccc}
\mathbf{V}_{i1} & \cdots & \mathbf{V}_{iK}\end{array}\right]=\mathbf{0},\forall j,k,i\neq j. &  & \mbox{(intercell interference nulling)}\label{eq:MIA_2}
\end{eqnarray}
However, in the above formulation of IA constraints (\ref{eq:MIA_0})-(\ref{eq:MIA_2}),
the precoders $\{\mathbf{V}_{jk}:\forall j,k\}$ serve to null both
the intracell interference in (\ref{eq:MIA_1}) and intercell interference
in (\ref{eq:MIA_2}). As such, this formulation makes it hard to find
out the CSI dependencies of the precoders $\{\mathbf{V}_{jk}:\forall j,k\}$
\cite{ruan2012interference}. Consequently, it is difficult to know
which part of CSI can be filtered out while still achieving the IA
(\ref{eq:MIA_0})-(\ref{eq:MIA_2}). To simplify the interference
nulling structure, we consider using a \emph{two-stage precoding}
structure for the precoders $\{\mathbf{V}_{jk}\}$.
\begin{definitn}
[Two Stage Precoding at the BS]\emph{\label{Two-Stage-Precoding}Two
stage precoding} is applied at each of the BSs $\{1,\cdots,G\}$,
i.e., the precoder $\mathbf{V}_{jk}$ is given by $\mathbf{V}_{jk}=\mathbf{T}_{j}\mathbf{V}_{jk}^{s}$,
where the semi-unitary matrix $\mathbf{T}_{j}\in\mathbb{U}(N,Kd)$,
$N\geq Kd$, is the \emph{outer precoder} for intercell interference
nulling and $\mathbf{V}_{jk}^{s}\in\mathbb{U}(Kd,d)$ is the \emph{inner
precoder} for intracell interference nulling between the MSs. \hfill \QED
\end{definitn}

With two stage precoding, the IA constraints (\ref{eq:MIA_0})-(\ref{eq:MIA_2})
can be reformulated as: Find out the outer precoders $\{\mathbf{T}_{i}\in\mathbb{U}(N,Kd):\forall i\}$,
inner precoders $\{\mathbf{V}_{jk}^{s}\in\mathbb{C}^{Kd\times d}:\forall j,k\}$
and decorrelators $\{\mathbf{U}_{jk}:\forall j,k\}$ based on the
full CSIT knowledge such that:
\begin{eqnarray}
\textrm{rank}(\mathbf{U}_{jk}^{\dagger}\mathbf{H}_{jk,j}\mathbf{T}_{j}\mathbf{V}_{jk}^{s})=d,\forall j,k;\label{eq:MIA_3}\\
\mathbf{U}_{jk}^{\dagger}\mathbf{H}_{jk,j}\mathbf{T}_{j}\mathbf{V}_{jp}^{s}=\mathbf{0},\;\forall j,k\neq p; &  & \mbox{\mbox{(intracell interference nulling)}}\label{eq:MIA_4}\\
\mathbf{U}_{jk}^{\dagger}\mathbf{H}_{jk,i}\mathbf{T}_{i}=\mathbf{0},\forall j,k,i\neq j. &  & \mbox{\mbox{(intercell interference nulling)}}\label{eq:MIA_5}
\end{eqnarray}

As can be seen above, the outer precoders $\{\mathbf{T}_{i}\}$ serve
to null the intercell interference only (as in (\ref{eq:MIA_5})),
and based on the outer precoders $\{\mathbf{T}_{i}\}$, the inner
precoders $\{\mathbf{V}_{ip}^{s}\}$ serves to null the intracell
interference only (as in (\ref{eq:MIA_4})). This decoupled interference
nulling structure enables us to find how the precoders adapt to the
CSI and may guide us to design efficient CSI feedback reduction schemes.
Note that the two formulations of IA constraints, i.e., (\ref{eq:MIA_0})-(\ref{eq:MIA_2})
and (\ref{eq:MIA_3})-(\ref{eq:MIA_5}), are in fact equivalent. 
\begin{lemma}
[Equivalent IA Formulation]\label{Equivalent-IA-Formulation}With
full CSIT, there exist $\{\mathbf{U}_{jk},\mathbf{V}_{jk}\}$ satisfying
constraints (\ref{eq:MIA_0})-(\ref{eq:MIA_2}) iff there exist $\{\mathbf{T}_{i}\}$,
$\{\mathbf{V}_{jk}^{s}\}$, $\{\mathbf{U}_{jk}\}$ satisfying (\ref{eq:MIA_3})-(\ref{eq:MIA_5}).
\end{lemma}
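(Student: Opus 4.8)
The plan is to prove the two implications separately; the forward (``if'') direction is a one-line substitution, while the reverse (``only if'') direction carries the only real content. For the ``if'' direction, suppose $\{\mathbf{T}_i\},\{\mathbf{V}_{jk}^s\},\{\mathbf{U}_{jk}\}$ satisfy \eqref{eq:MIA_3}--\eqref{eq:MIA_5}, and set $\mathbf{V}_{jk}:=\mathbf{T}_j\mathbf{V}_{jk}^s$. Then \eqref{eq:MIA_0} is literally \eqref{eq:MIA_3}, \eqref{eq:MIA_1} is literally \eqref{eq:MIA_4}, and \eqref{eq:MIA_2} follows from \eqref{eq:MIA_5} by right-multiplying $\mathbf{U}_{jk}^\dagger\mathbf{H}_{jk,i}\mathbf{T}_i=\mathbf{0}$ (for $i\neq j$) by each $\mathbf{V}_{ip}^s$ and collecting the columns $p=1,\dots,K$. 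Nothing subtle happens here.

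For the ``only if'' direction, suppose $\{\mathbf{U}_{jk},\mathbf{V}_{jk}\}$ satisfy \eqref{eq:MIA_0}--\eqref{eq:MIA_2}. The construction I have in mind is: for each BS $j$, let $\mathcal{V}_j:=\textrm{span}(\{\mathbf{V}_{j1},\dots,\mathbf{V}_{jK}\})$, take $\mathbf{T}_j\in\mathbb{U}(N,Kd)$ to be an orthonormal basis of $\mathcal{V}_j$, set $\mathbf{V}_{jk}^s:=\mathbf{T}_j^\dagger\mathbf{V}_{jk}$, and keep the same decorrelators $\{\mathbf{U}_{jk}\}$. The key step --- and the one I expect to require a short argument rather than a substitution --- is to show $\textrm{dim}_s(\mathcal{V}_j)=Kd$, so that $\mathbf{T}_j$ (which has exactly $Kd$ columns) spans $\mathcal{V}_j$ \emph{exactly}: it must neither have too few columns to cover $\mathcal{V}_j$, nor carry extra directions outside $\mathcal{V}_j$ that would violate the intercell condition \eqref{eq:MIA_5}. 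This I would get from \eqref{eq:MIA_0}--\eqref{eq:MIA_1} alone: if $\sum_{k=1}^{K}\mathbf{V}_{jk}\mathbf{a}_k=\mathbf{0}$ for some $\mathbf{a}_k\in\mathbb{C}^{d\times1}$, then left-multiplying by $\mathbf{U}_{jm}^\dagger\mathbf{H}_{jm,j}$ kills every term with $k\neq m$ by \eqref{eq:MIA_1} and leaves $(\mathbf{U}_{jm}^\dagger\mathbf{H}_{jm,j}\mathbf{V}_{jm})\mathbf{a}_m=\mathbf{0}$; since that $d\times d$ matrix is nonsingular by \eqref{eq:MIA_0}, $\mathbf{a}_m=\mathbf{0}$ for all $m$. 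Hence all $Kd$ columns of $\mathbf{V}_{j1},\dots,\mathbf{V}_{jK}$ are linearly independent and $\textrm{dim}_s(\mathcal{V}_j)=Kd$ (consistency with $\mathbf{T}_j\in\mathbb{U}(N,Kd)$ being automatic, since $\mathcal{V}_j\subseteq\mathbb{C}^{N\times1}$ forces $Kd\leq N$).

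With $\textrm{span}(\mathbf{T}_j)=\mathcal{V}_j$ established, the verification of \eqref{eq:MIA_3}--\eqref{eq:MIA_5} is immediate. Because $\mathbf{T}_j\mathbf{T}_j^\dagger$ is the orthogonal projector onto $\mathcal{V}_j$ and $\mathbf{V}_{jk}\in\mathcal{V}_j$, we have $\mathbf{T}_j\mathbf{V}_{jk}^s=\mathbf{T}_j\mathbf{T}_j^\dagger\mathbf{V}_{jk}=\mathbf{V}_{jk}$, so \eqref{eq:MIA_3} and \eqref{eq:MIA_4} collapse to \eqref{eq:MIA_0} and \eqref{eq:MIA_1}. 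For \eqref{eq:MIA_5}, every column of $\mathbf{T}_i$ lies in $\mathcal{V}_i$, and by \eqref{eq:MIA_2} the map $\mathbf{U}_{jk}^\dagger\mathbf{H}_{jk,i}$ annihilates each $\mathbf{V}_{ip}$ and therefore all of $\mathcal{V}_i$, giving $\mathbf{U}_{jk}^\dagger\mathbf{H}_{jk,i}\mathbf{T}_i=\mathbf{0}$ for $i\neq j$. The only place a reader is likely to pause is the direct-sum / dimension-count step; everything else is substitution and the projector identity.
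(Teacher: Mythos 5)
Your proof is correct, and the route you take --- trivial substitution for the forward direction, and for the converse taking $\mathbf{T}_j$ as an orthonormal basis of $\textrm{span}(\{\mathbf{V}_{j1},\dots,\mathbf{V}_{jK}\})$ after establishing that the $Kd$ columns are linearly independent via the rank condition (\ref{eq:MIA_0}) combined with the intracell nulling (\ref{eq:MIA_1}) --- is exactly the argument the paper relies on. The only cosmetic point worth noting is that if one insists on $\mathbf{V}_{jk}^{s}$ being semi-unitary as in Definition \ref{Two-Stage-Precoding}, one should replace $\mathbf{T}_j^\dagger\mathbf{V}_{jk}$ by an orthonormal basis of its column space, which leaves all rank and nulling conditions intact.
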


Based on the new IA constraints (\ref{eq:MIA_3})-(\ref{eq:MIA_5}),
we then investigate how the CSI can be filtered to reduce the CSI
feedback dimension. In the literature, there are some CSI feedback
designs \cite{thukral2009interference,krishnamachari2009interference,rao2012limited}
that feedback the full CDI, i.e., $F_{jk}=\left(\begin{array}{ccc}
\cdots, & \mathbb{P}(\mathbf{H}_{jk,i}), & \cdots\end{array}\right)_{\forall i}$, $\forall j,k$, which correspond to a CSI feedback dimension of
$G^{2}K(MN-1)$. By using the two stage precoding structure, we show
in Example 1 and 2 below that the IA constraints (\ref{eq:MIA_3})-(\ref{eq:MIA_5})
can still be achieved with substantially reduced feedback cost.

\begin{figure}
\begin{centering}
\includegraphics[scale=0.8]{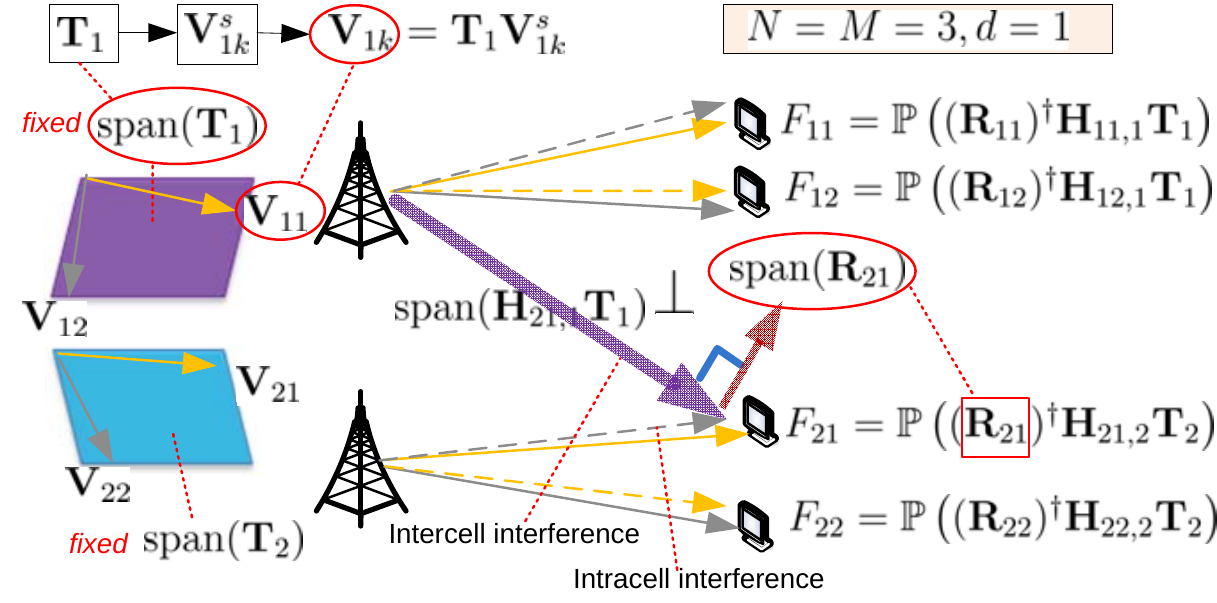}
\par\end{centering}

\caption{\label{fig:Example-1}Toy Example 1: Two-stage precoding with \emph{fixed}
outer precoder at the BSs can help to reduce the CSI feedback dimension
for IA. BS 1 has\emph{ fixed} outer precoder\textcolor{blue}{{} }$\mathbf{T}_{1}\in\mathbb{U}(3,2)$
and the $(2,1)$-th, $(2,2)$-th MSs can cancel the intercell interference
by designing the decorrelator $\mathbf{U}_{21}=\mathbf{R}_{21}$,
$\mathbf{U}_{22}=\mathbf{R}_{22}\in\mathbb{U}(3,1)$ to be orthogonal
to $\textrm{span}(\mathbf{H}_{21,1}\mathbf{T}_{1})$, $\textrm{span}(\mathbf{H}_{22,1}\mathbf{T}_{1})$
respectively (similarly for BS 2).}

\end{figure}

\begin{example}
[Two Stage Precoding with Fixed Outer Precoders]Consider a MIMO
cellular network as illustrated in Figure \ref{fig:Example-1}. Suppose
BS 1, 2 use \emph{fixed} outer precoder $\mathbf{T}_{1},\mathbf{T}_{2}\in\mathbb{U}(3,2)$.
The intercell interference space at the (2,1)-th MS is given by $\textrm{span}(\mathbf{H}_{21,1}\mathbf{T}_{1})$.
This can be cancelled by choosing a decorrelator at the $(2,1)$-th
MS as: $\mathbf{U}_{21}=\mathbf{R}_{21}\in\mathbb{U}(3,1)$, where
$\mathbf{R}_{21}$ is orthogonal to the intercell interference, i.e.,
$(\mathbf{R}_{21})^{\dagger}\mathbf{H}_{21,1}\mathbf{T}_{1}=\mathbf{0}$.
The remaining freedom at BS 2 are the inner precoders $\{\mathbf{V}_{21}^{s},\mathbf{V}_{22}^{s}\}$
which are designed to cancel the intracell interference, i.e. $(\mathbf{R}_{22})^{\dagger}\mathbf{H}_{22,2}\mathbf{T}_{2}\mathbf{V}_{21}^{s}=\mathbf{0}$,
$(\mathbf{R}_{21})^{\dagger}\mathbf{H}_{21,2}\mathbf{T}_{2}\mathbf{V}_{22}^{s}=\mathbf{0}$.
As such, the BS 2 only needs to know $F_{21}=\mathbb{P}\left((\mathbf{R}_{21})^{\dagger}\mathbf{H}_{21,2}\mathbf{T}_{2}\right)$,
$F_{22}=\mathbb{P}\left((\mathbf{R}_{22})^{\dagger}\mathbf{H}_{22,2}\mathbf{T}_{2}\right)$
to compute the inner precoders (similarly for BS 1). Hence, using
a feedback function $F_{1k}=\mathbb{P}((\mathbf{R}_{1k})^{\dagger}\mathbf{H}_{1k,1}\mathbf{T}_{1})$,
$F_{2k}=\mathbb{P}((\mathbf{R}_{2k})^{\dagger}\mathbf{H}_{2k,2}\mathbf{T}_{2})$,
$\forall k=1,2$, the IA conditions in (\ref{eq:MIA_3})-(\ref{eq:MIA_5})
can be achieved with a feedback dimension of $4\times\left(2\times1-1\right)=4$
instead of $4\times2\times(3\times2-1)=40$ in full CDI feedback. 
\end{example}

\begin{example}
[CSI Submatrix Feedback]Consider a MIMO cellular network with $G=2$
BSs and $K=3$ MSs for each BS. The BS and MS have $N=5$, $M=3$
antennas respectively and $d=1$ data stream is transmitted for each
MS. Suppose the CSI filtering functions at the MSs are given by: $F_{jk}=\left(\mathbb{P}\left(\mathbf{H}_{jk,1}^{s}\right),\mathbb{P}\left(\mathbf{H}_{jk,2}^{s}\right)\right)$,
$\forall j,k$, where $\mathbf{H}_{jk,i}^{s}$ is the $2\times5$
upper submatrix of $\mathbf{H}_{jk,i}\in\mathbb{C}^{3\times5}$, i.e.,
$\mathbf{H}_{jk,i}^{s}=[\begin{array}{cc}
\mathbf{I}_{2} & \mathbf{0}_{2\times1}\end{array}]\mathbf{H}_{jk,i}$ $\forall j,k,i$. Based on this CSI feedback $\{F_{jk}\}$, the IA
conditions (\ref{eq:MIA_3})-(\ref{eq:MIA_5}) can be achieved%
\footnote{Note that a $G=2$, $K=3$, $N=5$, $M=2$, $d=1$ MIMO cellular network
with full CSIT is IA feasible \cite{liu2013feasibility}. %
} by using the first 2 antennas at the MSs only with conventional IA
design \cite{liu2013feasibility}. As a result, the feedback dimension
is only $6\times2\times(2\times5-1)=108$ compared with $6\times2\times(3\times5-1)=168$
under full CDI feedback. 
\end{example}

Note that the strategy described in Example 1 is first mentioned in
\cite{suh2011downlink} and it can be generalized to MIMO cellular
networks with a \emph{subset} of BSs to have \emph{fixed} outer precoders. 
\begin{remrk}
Examples 1 and 2 are only simple toy examples to illustrate two effective
CSI feedback filtering policy (two-stage precoding with fixed outer
precoders and CSI submatrix feedback respectively) to reduce the CSI
feedback dimension. While these are trivial in these simple toy examples,
the challenge is to have a CSI feedback filtering solution that embrace
\emph{both} strategies to minimize the CSI feedback dimension for
general topology under DoF and IA feasibility constraints.
\end{remrk}

In the following, we shall formally give the structural form for the
CSI filtering function $F_{jk}$ that embraces the above two policies.
We first partition the BSs into two sets and define CSI \emph{submatrix}
feedback as follows.

\begin{definitn}
[Partitioning of BSs]The group of BSs \{$1$,\ldots{},$G$\} are
partitioned into two subsets, namely the type-I BSs, $\mathbb{B}_{g}^{I}=\{1,\cdots,g\}$
and the type-II BSs, $\mathbb{B}_{g}^{II}=\{g+1,\cdots,G\}$. The
type-II BSs use \emph{fixed} outer precoder $\mathbf{T}_{i}^{II}\in\mathbb{U}(N,Kd)$,
$i\in\mathbb{B}_{g}^{II}$. 
\end{definitn}

\begin{definitn}
[CSI Submatrix Feedback]The CSI submatrices $\{\mathbf{H}_{jk,i}^{s}\}$
are considered for CSI filtering feedback, where $\{\mathbf{H}_{jk,i}^{s}\}$
correspond to the CSI on the first $m_{jk}\leq M$ antennas at the
$(j,k)$-th MS, $\forall j,k$, the first $n_{i}\leq N$ antennas
at the $i$-th BS, $i\in\mathbb{B}_{g}^{I}$ , and degenerated $n_{i}=N$
antennas at the $i$-th BS, $i\in\mathbb{B}_{g}^{II}$. That is:
\begin{equation}
\mathbf{H}_{jk,i}^{s}=\begin{cases}
\left[\begin{array}{cc}
\mathbf{I}_{m_{jk}} & \mathbf{0}\end{array}\right]\mathbf{H}_{jk,i}\left[\begin{array}{cc}
\mathbf{I}_{n_{i}} & \mathbf{0}\end{array}\right]^{T}, & \forall j,k,i\in\mathbb{B}_{g}^{I}\\
\left[\begin{array}{cc}
\mathbf{I}_{m_{jk}} & \mathbf{0}\end{array}\right]\mathbf{H}_{jk,i}, & \forall j,k,i\in\mathbb{B}_{g}^{II}
\end{cases}.\label{eq:sub_matrix}
\end{equation}
\hfill \QED
\end{definitn}

Note that $\{m_{jk}:\forall j,k\}$, $\{n_{i}:\forall i\in\mathbb{B}_{g}^{I}\}$
characterizes the size%
\footnote{Note that when $m_{jk}\ne M$ or $n_{i}\neq N$ in $\mathcal{L}$,
instead of directly selecting the upper left $m_{jk}\times n_{i}$
submatrix as in (\ref{eq:sub_matrix}), there is in fact \emph{extra
space} of carefully selecting the $m_{jk}$ or $n_{i}$ effective
antennas to improve the direct link power gain. However, in this paper,
we are more interested in the tradeoff between the \emph{first-order}
DoF performance and the CSI feedback cost, and note the possible power
gain mentioned will not affect the performance in the DoF sense \cite{cadambe2008interference}.
Therefore, to better illustrate the insights, we consider a simple
effective antenna reduction scheme as in (\ref{eq:sub_matrix}) and
focus on the feedback dimension reduction of $\mathcal{L}$. %
} of the CSI submatrices $\{\mathbf{H}_{jk,i}^{s}\}$. Denote $\mathbb{N}^{r}(\cdot)$
as the left null space, i.e., $\mathbb{N}^{r}(\mathbf{A})=\{\mathbf{u}\mid\mathbf{u}^{\dagger}\mathbf{A}=\mathbf{0}\}$.
Based on the above two definitions, we have the following definition
on the CSI filtering functions $\{F_{jk}\}$. 
\begin{definitn}
[Structural Form of $F_{jk}$]\label{Structural-Form-of-CSI-Feedback}The
CSI filtering functions $F_{jk}(\mathcal{H}_{jk})$ in (\ref{eq:feedback-info})
are given by
\begin{equation}
F_{jk}(\mathcal{H}_{jk})=\left(\cdots,\mathbb{P}\left(\mathbf{H}_{jk,i}^{e}\right),\cdots\right)_{i\in\mathbb{B}_{g}^{I}\bigcup\{j\}},\forall j,k;\label{eq:F_function}
\end{equation}
\begin{equation}
\mathbf{H}_{jk,i}^{e}=\begin{cases}
(\mathbf{R}_{jk})^{\dagger}\mathbf{H}_{jk,i}^{s}\in\mathbb{C}^{A_{jk}\times n_{i}}, & \forall j,k,i\in\mathbb{B}_{g}^{I}\\
(\mathbf{R}_{jk})^{\dagger}\mathbf{H}_{jk,i}^{s}\mathbf{T}_{j}^{II}\in\mathbb{C}^{A_{jk}\times Kd}, & \forall k,j\in\mathbb{B}_{g}^{II}
\end{cases}.\label{eq:effective_CSi}
\end{equation}
where $\mathbf{H}_{jk,i}^{e}$ denotes the effective CSI, $\mathbf{R}_{jk}\in\mathbb{U}(m_{jk},A_{jk})$
is a semi-unitary matrix that defines%
\footnote{We define $\mathbf{R}_{jk}=\mathbf{I}$ when $\mathbb{B}_{g}^{II}\backslash\{j\}=\emptyset$. %
} the left null space of the intercell interference from all type-II
BSs at the $(j,k)$-th MS: 
\begin{equation}
\textrm{span}(\mathbf{R}_{jk})=\mathbb{N}^{r}\left(\left[\begin{array}{ccc}
\cdots & \mathbf{H}_{jk,i}^{s}\mathbf{T}_{i}^{II} & \cdots\end{array}\right]_{i\in\mathbb{B}_{g}^{II}\backslash\{j\}}\right),\forall j,k;\label{eq:s_r_expression}
\end{equation}
\begin{equation}
A_{jk}=m_{jk}-\sum_{i\in\mathbb{B}_{g}^{II}\backslash\{j\}}Kd,\,\forall j,k.\label{eq:A_definition}
\end{equation}
\hfill \QED
\end{definitn}

Note that there is no need to feedback the intercell cross link CSIs
$\left\{ \mathbf{H}_{jk,i}^{s}:\forall j,k,i\in\mathbb{B}_{g}^{II}\backslash\{j\}\right\} $
because the intercell interference from type-II BSs can be canceled
by setting the decorrelator $\mathbf{U}_{jk}$ to be in the subspace
spanned by $\mathbf{R}_{jk}$. The above feedback structure in Def.
\ref{Structural-Form-of-CSI-Feedback} corresponds to the tuple $\mathbb{\mathcal{H}}_{jk}^{fed}=F_{jk}(\mathcal{H}_{jk})\in\mathbb{G}\left(1,B_{jk,1}\right)\times\cdots\mathbb{G}\left(1,B_{jk,l_{jk}}\right)$,
where the length $l_{jk}=|\mathbb{B}_{g}^{I}\bigcup\{j\}|$ and 
\begin{equation}
B_{jk,i}=\begin{cases}
KdA_{jk}, & i=l_{jk},j\in\mathbb{B}_{g}^{II}\\
n_{i}A_{jk}, & j\in\mathbb{B}_{g}^{I}\mbox{ or }i<l_{jk}
\end{cases}\label{eq:B_definition}
\end{equation}
as in (\ref{eq:feedback-info}). Based on the above, we define the
notion of CSI \emph{feedback profile}, which gives a parametrization
of $\{F_{jk}\}$. 
\begin{definitn}
[Feedback Profile of $\{F_{jk}\}$]\label{Feedback-Profile-of}Define
the feedback profile of $\{F_{jk}\}$ as a set of parameters:
\begin{equation}
\mathcal{L}=\left\{ \{m_{jk}:\forall j,k\},g,\{n_{i}:\forall i\in\mathbb{B}_{g}^{I}\}\right\} .\label{eq:feedback_topology}
\end{equation}
\hfill \QED
\end{definitn}

Note that $m_{jk}$ and $n_{i}$ in $\mathcal{L}$ control the size
of the CSI submatrices to feedback and $g=|\mathbb{B}_{g}^{I}|$ is
the number of the type-I BS. In fact, there is a one-to-one correspondence
between the feedback profile $\mathcal{L}$ and the feedback function
in (\ref{eq:F_function}). Note that the proposed feedback profile
$\mathcal{L}$ embraces Example 1, 2 with the corresponding $\mathcal{L}=\{\{m_{jk}=3,\forall j,k\},g=0\}$
in Example 1 and $\mathcal{L}=\{\{m_{jk}=2:\forall j,k\},g=2,\{n_{i}=5:i\leq2\}\}$
in Example 2. Furthermore, it also includes some existing works as
special cases:
\begin{itemize}
\item \textbf{Special Case I} \emph{(Full CDI Feedback):} When $\mathcal{L}=\{\{m_{jk}=M,\forall j,k\},g=G,\{n_{i}=N:\forall i\}\}$,
$\mathcal{L}$ will be reduced to conventional full CDI feedback in
\cite{thukral2009interference,krishnamachari2009interference,rao2012limited}. 
\item \textbf{Special Case II} \emph{(Zero-forcing IA Feedback)}: When $G=2$,
$M=N=K+1$, $d=1$ and $\mathcal{L}=\{\{m_{jk}=M,\forall j,k\},g=0\}$
(all BSs are typ-II BSs), $\mathcal{L}$ will be reduced to the feedback
scheme in \cite{suh2011downlink} (Example 1 corresponds to one such
example). 
\end{itemize}

For a given feedback profile $\mathcal{L}$, the total feedback dimension
is given by,
\begin{equation}
D(\mathcal{L})=\sum_{j=1}^{G}\sum_{k=1}^{K}\sum_{i=1}^{g}(n_{i}A_{jk}-1)+\sum_{j=g+1}^{G}\sum_{k=1}^{K}(KdA_{jk}-1).\label{eq:sum_feedback_dimension_expression}
\end{equation}

Next, we discuss IA constraints under the proposed CSI filtering $\mathcal{L}$,
to achieve $d$ data streams for each MS in the following.
\begin{constraints}
[IA under $\mathcal{L}$]\label{IA-design-based-general-L}Given
the CSI feedback profile $\mathcal{L}$ and the outer precoders $\{\mathbf{T}_{i}^{II}\in\mathbb{U}(N,Kd):i\in\mathbb{B}_{g}^{II}\}$
for the type-II BSs, find the outer precoders $\{\mathbf{T}_{i}^{I}\in\mathbb{U}(N,Kd):i\in\mathbb{B}_{g}^{I}\}$
for type-I BSs, the inner precoders $\{\mathbf{V}_{jk}^{s}\in\mathbb{U}(Kd,d):\forall j,k\}$
for all BSs and decorrelators $\{\mathbf{U}_{jk}\}$ for all MSs,
such that: 
\begin{eqnarray}
\textrm{rank}(\mathbf{U}_{jk}^{\dagger}\mathbf{H}_{jk,j}\mathbf{T}_{j}\mathbf{V}_{jk}^{s})=d,\forall j,k;\label{eq:MIA-1}\\
\mathbf{U}_{jk}^{\dagger}\mathbf{H}_{jk,j}\mathbf{T}_{j}\mathbf{V}_{jp}^{s}=\mathbf{0},\forall j,k\neq p; & \mbox{} & \mbox{(intracell IA constraints)}\label{eq:IA_condition-1}\\
\mathbf{U}_{jk}^{\dagger}\mathbf{H}_{jk,i}\mathbf{T}_{i}=\mathbf{0},\forall j,k,i\neq j; &  & \mbox{(\mbox{intercell IA} constraints)}\label{eq:IA_condition_2}
\end{eqnarray}
\begin{equation}
\begin{array}{cc}
\begin{array}{c}
\left\{ \mathbf{T}_{j}^{I}:i\in\mathbb{B}_{g}^{I}\right\} ,\{\mathbf{V}_{jk}^{s}:\forall j,k\}\mbox{ can only be }\\
\mbox{ adaptive to \ensuremath{\{F_{jk}(\mathcal{H}_{jk}):\forall j,k\}\;}according to \ensuremath{\mathcal{L}}.}
\end{array} & \begin{array}{c}
(\mbox{CSI knowledge}\\
\mbox{constraint)}
\end{array}\end{array}\label{eq:IA_hidden}
\end{equation}
where $\mathbf{T}_{j}=\mathbf{T}_{j}^{I}$, $j\in\mathbb{B}_{g}^{I}$
and $\mathbf{T}_{j}=\mathbf{T}_{j}^{II}$, $j\in\mathbb{B}_{g}^{II}$
for notation convenience. \hfill \QED
\end{constraints}

Note that (\ref{eq:MIA-1})-(\ref{eq:IA_condition_2}) refers to the
\emph{IA constraints} and (\ref{eq:IA_hidden}) refers to the \emph{CSI
knowledge constraint}. Specifically, compared with conventional IA
with full CSIT in (\ref{eq:MIA_3})-(\ref{eq:MIA_5}), there are two
unique challenges, namely the \emph{CSI knowledge} and \emph{feasibility},
associated with Constraints I under partial CSIT knowledge. First,
the CSI knowledge constraint is an implicit constraint which is difficult
to handle. Second, adjusting the feedback profile $\mathcal{L}$ may
reduce the CSI feedback dimension $D(\mathcal{L})$ in (\ref{eq:sum_feedback_dimension_expression})
but the IA constraints may no longer be feasible. The following summarizes
the challenges we face.\vspace{-1cm}

\begin{center}
\framebox{\begin{minipage}[t]{1\columnwidth}%
Challenge 1: Adjust the feedback profile $\mathcal{L}$ so as to minimize
the feedback dimension $D(\mathcal{L})$ subject to Constraints 1.%
\end{minipage}}
\par\end{center}

\section{IA Feasibility Conditions Under a Given Feedback Profile $\mathcal{L}$}

In this section, we shall investigate Constraints 1 and find out the
requirements on $\mathcal{L}$ to make the IA problem in Constraint
\ref{IA-design-based-general-L} feasible, i.e., for what kind of
CSI feedback profile $\mathcal{L}$, Constraints \ref{IA-design-based-general-L}
can have feasible solutions $\{\mathbf{T}_{j}^{I}\}$, $\{\mathbf{V}_{jk}^{s},\mathbf{U}_{jk}\}$.
We further derive the corresponding IA transceiver solutions $\{\mathbf{T}_{j}^{I}\}$,
$\{\mathbf{V}_{jk}^{s},\mathbf{U}_{jk}\}$ to satisfy the conditions
in Constraints \ref{IA-design-based-general-L} for a given feedback
profile $\mathcal{L}$.

\subsection{IA Constraints Transformation}

To investigate (\ref{eq:IA_hidden}) in Constraints \ref{IA-design-based-general-L},
we shall first have a better understanding on how to utilize the partial
CSI knowledge $\{F_{jk}\}$. Specifically, the information available
at BS $j$ from the feedback CSI $\{F_{jk}:\forall k\}$ is denoted
by the set of matrices $\mathbb{H}_{j}$,
\begin{equation}
\mathbb{H}_{j}=\left\{ \mathbf{\tilde{H}}_{jk,i}^{e}=a_{jk,i}\mathbf{H}_{jk,i}^{e}:\forall k,i\in\mathbb{B}_{g}^{I}\bigcup\{j\}\right\} \label{eq:available_CSI}
\end{equation}
where $\{a_{jk,i}\}$ are some%
\footnote{As an example, one common approach \cite{dai2008quantization} to
feedback $\mathbb{P}(\mathbf{H})=\{a\mathbf{H}:a\in\mathbb{C}\},$
$\mathbf{H}\in\mathbb{C}^{A\times B}$ is to feedback the unitary
vector $\frac{1}{||\mathbf{H}||}\textrm{vec}(\mathbf{H})$. In this
case, the scalar $a=\frac{1}{||\mathbf{H}||}$.%
} non-zero scalars. Based on $\{\mathbb{H}_{j}\}$, we study Constraints
\ref{IA-design-based-general-L}.

\vspace{-1cm}

\begin{center}
\framebox{\begin{minipage}[t]{1\columnwidth}%
Challenge 2: Constraints \ref{IA-design-based-general-L} is difficult
because 1) the conditions (\ref{eq:MIA-1}) and (\ref{eq:IA_condition-1})
are coupled as $\{\mathbf{H}_{jk,j}\}$ act as both the direct link
in (\ref{eq:MIA-1}) and the cross link in (\ref{eq:IA_condition-1});
2) the \emph{CSI knowledge constraint} (\ref{eq:IA_hidden}) requires
that the precoders can only be designed based on the partial CSI knowledge
$\{\mathbb{H}_{j}\}$.%
\end{minipage}}
\par\end{center}

We first introduce an equivalent IA constraint transformation, which
can explicitly handle the CSI knowledge constraint and the coupling
issues.

\begin{constraints}
[IA Constraint Transformation under $\mathcal{L}$]\label{Transformed-Problem-II}Find
$\mathbf{\tilde{T}}_{i}^{I}\in\mathbb{U}(n_{i},Kd)$, $n_{i}\geq Kd$,
$i\in\mathbb{B}_{g}^{I}$, and $\mathbf{\tilde{U}}_{jk}\in\mathbb{U}(A_{jk},d)$,
$A_{jk}\geq d$, $\forall j,k$, satisfying the following equations:
\begin{equation}
(\mathbf{\tilde{U}}_{jk})^{\dagger}\mathbf{\tilde{H}}_{jk,i}^{e}\mathbf{\tilde{T}}_{i}^{I}=\mathbf{0},\,\forall j,k,i\in\mathbb{B}_{g}^{I}\backslash\{j\}.\label{eq:transformed_II}
\end{equation}
\hfill \QED
\end{constraints}

Note that constraint in (\ref{eq:transformed_II}) involves intercell
IA constraints from the \emph{type-I} BSs only. This is because the
intercell interference from the type-II BSs has already been cancelled
at the MSs via designing the decorrelator $\mathbf{U}_{jk}$ in the
subspace spanned by $\mathbf{R}_{jk}$. Furthermore, Constraint \ref{Transformed-Problem-II}
contains no intracell IA constraints because of the two stage precoding
structures in (\ref{eq:MIA-1})-(\ref{eq:IA_condition_2}). The equivalent
relationship between Constraints \ref{IA-design-based-general-L}
and Constraints \ref{Transformed-Problem-II} is established in the
lemma below. 
\begin{lemma}
[Equivalence of Constraints \ref{IA-design-based-general-L} and
Constraints \ref{Transformed-Problem-II} ]\label{Equivalence-of-Problem-II}Given
the CSI feedback profile $\mathcal{L}$ and the outer precoders $\{\mathbf{T}_{i}^{II}\in\mathbb{U}(N,Kd):i\in\mathbb{B}_{g}^{II}\}$
for type-II BSs:

(a): Constraints \ref{Transformed-Problem-II} is feasible iff Constraints
\ref{IA-design-based-general-L} is feasible. 

(b): If $\{\mathbf{\tilde{T}}_{i}^{I}\}$ and $\{\mathbf{\tilde{U}}_{jk}\}$
are solutions of Constraints \ref{Transformed-Problem-II}, then $\{\mathbf{T}_{i}^{I}\}$,
$\{\mathbf{V}_{jk}^{s},\mathbf{U}_{jk}\}$ given by
\begin{equation}
\mathbf{T}_{i}^{I}=\left[\begin{array}{c}
\mathbf{\tilde{T}}_{i}^{I}\\
\mathbf{0}
\end{array}\right],i\in\mathbb{B}_{g}^{I},\quad\mathbf{U}_{jk}=\left[\begin{array}{c}
\mathbf{R}_{jk}\mathbf{\tilde{U}}_{jk}\\
\mathbf{0}
\end{array}\right],\forall j,k;\label{eq:structure}
\end{equation}
\begin{equation}
\mathbf{V}_{jk}^{s}=v_{(d)}\left(\sum_{\underset{\neq k}{p=1}}^{K}\left((\mathbf{\tilde{U}}_{jp})^{\dagger}\mathbf{\tilde{H}}_{jp,j}^{e}\mathbf{\tilde{T}}_{i}^{I}\right)^{\dagger}\left((\mathbf{\tilde{U}}_{jp}\mathbf{\tilde{H}}_{jp,j}^{e}\mathbf{\tilde{T}}_{i}^{I}\right)\right),\forall k,j\in\mathbb{B}_{g}^{I};\label{eq:intra_cell_relation}
\end{equation}
\begin{equation}
\mathbf{V}_{jk}^{s}=v_{(d)}\left(\sum_{\underset{\neq k}{p=1}}^{K}\left((\mathbf{\tilde{U}}_{jp})^{\dagger}\mathbf{\tilde{H}}_{jp,j}^{e}\right)^{\dagger}\left((\mathbf{\tilde{U}}_{jp})^{\dagger}\mathbf{\tilde{H}}_{jp,j}^{e}\right)\right),\forall k,j\in\mathbb{B}_{g}^{II}\label{eq:intra_cell_relation-1}
\end{equation}
are the solutions for Constraints \ref{IA-design-based-general-L}
almost surely, where $v_{(d)}(\mathbf{A})$ is the matrix of eigenvectors
corresponding to the $d$ least eigenvalues of a Hermitian matrix
$\mathbf{A}$.
\end{lemma}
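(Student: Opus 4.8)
The plan is to prove the two statements together by showing that the transceiver construction in (\ref{eq:structure})--(\ref{eq:intra_cell_relation-1}) turns a solution of Constraints \ref{Transformed-Problem-II} into a solution of Constraints \ref{IA-design-based-general-L}, and conversely that any solution of Constraints \ref{IA-design-based-general-L} can be restricted to a solution of Constraints \ref{Transformed-Problem-II}; part (a) then follows immediately from part (b) plus its converse. I would organize the forward direction into three checks, one per IA condition, and handle the CSI-knowledge constraint (\ref{eq:IA_hidden}) separately at the end.

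First I would verify the intercell IA constraints (\ref{eq:IA_condition_2}). Split the index $i\ne j$ into $i\in\mathbb{B}_g^{II}\setminus\{j\}$ and $i\in\mathbb{B}_g^{I}\setminus\{j\}$. For the type-II case, recall that $\mathbf{U}_{jk}=\big[\begin{smallmatrix}\mathbf{R}_{jk}\tilde{\mathbf{U}}_{jk}\\ \mathbf{0}\end{smallmatrix}\big]$, so $\mathbf{U}_{jk}^{\dagger}\mathbf{H}_{jk,i}\mathbf{T}_i^{II} = \tilde{\mathbf{U}}_{jk}^{\dagger}\mathbf{R}_{jk}^{\dagger}\mathbf{H}_{jk,i}^{s}\mathbf{T}_i^{II}$ (the zero padding in $\mathbf{U}_{jk}$ selects exactly the first $m_{jk}$ rows of $\mathbf{H}_{jk,i}$, which is $\mathbf{H}_{jk,i}^{s}$ by (\ref{eq:sub_matrix})), and this vanishes because $\mathrm{span}(\mathbf{R}_{jk})$ is the left null space in (\ref{eq:s_r_expression}). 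For the type-I case, the same row/column truncation argument gives $\mathbf{U}_{jk}^{\dagger}\mathbf{H}_{jk,i}\mathbf{T}_i^{I} = \tilde{\mathbf{U}}_{jk}^{\dagger}\mathbf{R}_{jk}^{\dagger}\mathbf{H}_{jk,i}^{s}[\begin{smallmatrix}\tilde{\mathbf{T}}_i^I\\ \mathbf{0}\end{smallmatrix}] = \tilde{\mathbf{U}}_{jk}^{\dagger}(\mathbf{R}_{jk}^{\dagger}\mathbf{H}_{jk,i}^{s}[\begin{smallmatrix}\mathbf{I}_{n_i}&\mathbf{0}\end{smallmatrix}]^{T})\tilde{\mathbf{T}}_i^I$; absorbing the scalar $a_{jk,i}^{-1}$ this is precisely $\tilde{\mathbf{U}}_{jk}^{\dagger}\tilde{\mathbf{H}}_{jk,i}^{e}\tilde{\mathbf{T}}_i^I$ up to a nonzero scalar, which is zero by (\ref{eq:transformed_II}). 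Next, the intracell IA constraints (\ref{eq:IA_condition-1}): with $\mathbf{T}_j$ and $\mathbf{U}_{jk}$ fixed as above, $\mathbf{U}_{jk}^{\dagger}\mathbf{H}_{jk,j}\mathbf{T}_j$ reduces (again by truncation) to a nonzero scalar multiple of $\tilde{\mathbf{U}}_{jk}^{\dagger}\tilde{\mathbf{H}}_{jk,j}^{e}\tilde{\mathbf{T}}_j^I$ in the type-I case and of $\tilde{\mathbf{U}}_{jk}^{\dagger}\tilde{\mathbf{H}}_{jk,j}^{e}$ in the type-II case; call this matrix $\mathbf{G}_{jk}\in\mathbb{C}^{d\times Kd}$. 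Then (\ref{eq:IA_condition-1}) demands $\mathbf{G}_{jk}\mathbf{V}_{jp}^{s}=\mathbf{0}$ for all $p\ne k$, and (\ref{eq:intra_cell_relation})--(\ref{eq:intra_cell_relation-1}) choose $\mathbf{V}_{jk}^{s}$ as the $d$ eigenvectors of $\sum_{p\ne k}\mathbf{G}_{jp}^{\dagger}\mathbf{G}_{jp}$ with smallest eigenvalue; since this is a $Kd\times Kd$ Hermitian matrix of rank at most $(K-1)d$, its null space has dimension at least $d$, so the $d$ smallest eigenvalues are zero and $\mathbf{G}_{jp}\mathbf{V}_{jk}^{s}=\mathbf{0}$ for all $p\ne k$, which is (\ref{eq:IA_condition-1}).

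For the rank condition (\ref{eq:MIA-1}) I would invoke a genericity (almost-sure) argument: $\mathbf{U}_{jk}^{\dagger}\mathbf{H}_{jk,j}\mathbf{T}_j\mathbf{V}_{jk}^{s} = c\,\mathbf{G}_{jk}\mathbf{V}_{jk}^{s}$ for a nonzero $c$, and we need this $d\times d$ matrix to be full rank. Note $\mathbf{V}_{jk}^{s}$ is built only from $\{\tilde{\mathbf{H}}_{jp,j}^{e}:p\ne k\}$ (equivalently from $\{\mathbf{G}_{jp}:p\ne k\}$), whereas $\mathbf{G}_{jk}$ depends on the independent block $\mathbf{H}_{jk,j}$ through $\tilde{\mathbf{H}}_{jk,j}^{e}$; under Assumption \ref{Channel-MatricesAssume} the entries of $\mathbf{H}_{jk,j}$ are i.i.d.\ continuous, so $\mathbf{G}_{jk}$ is a generic matrix independent of the fixed subspace $\mathrm{span}(\mathbf{V}_{jk}^{s})$, and $\mathrm{rank}(\mathbf{G}_{jk}\mathbf{V}_{jk}^{s})=d$ holds with probability one — the set of $\mathbf{H}_{jk,j}$ violating it is contained in the zero set of a nontrivial polynomial (the $d\times d$ minor). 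This is the step I expect to be the main obstacle to write cleanly, because one must argue that the decorrelators $\tilde{\mathbf{U}}_{jk}$ and the subspaces $\mathbf{R}_{jk}$, although themselves functions of the channels, do not conspire to make that polynomial vanish identically; the cleanest route is to exhibit one channel realization (or perturbation) for which the minor is nonzero, then conclude by analyticity.

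Finally, the CSI-knowledge constraint (\ref{eq:IA_hidden}): $\tilde{\mathbf{T}}_i^I$ solves (\ref{eq:transformed_II}), whose data are exactly the matrices $\{\tilde{\mathbf{H}}_{jk,i}^{e}\}$ collected in $\mathbb{H}_j$ of (\ref{eq:available_CSI}), so $\mathbf{T}_i^I$ depends only on $\{F_{jk}(\mathcal{H}_{jk})\}$; likewise (\ref{eq:intra_cell_relation})--(\ref{eq:intra_cell_relation-1}) express $\mathbf{V}_{jk}^{s}$ through the same feedback quantities (the unknown scalars $a_{jk,i}$ cancel in the products defining $\mathbf{G}_{jp}^{\dagger}\mathbf{G}_{jp}$ up to positive scaling, which does not change the eigenvectors), so the CSI-knowledge constraint is met. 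This completes part (b). For the converse needed in part (a), given any solution $\{\mathbf{T}_i,\mathbf{V}_{jk}^{s},\mathbf{U}_{jk}\}$ of Constraints \ref{IA-design-based-general-L}, one checks that projecting $\mathbf{U}_{jk}$ onto $\mathrm{span}(\mathbf{R}_{jk})$ (which loses no IA-relevant dimension, by the same truncation identities and a dimension count using (\ref{eq:A_definition})) and reading off the corresponding $n_i$-dimensional blocks of $\mathbf{T}_i^I$ yields $\tilde{\mathbf{U}}_{jk},\tilde{\mathbf{T}}_i^I$ satisfying (\ref{eq:transformed_II}); I would note this direction is the routine one and relegate the detailed dimension bookkeeping to the appendix.
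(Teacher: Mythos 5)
Your forward direction (part (b)) is sound and matches the construction the paper intends: the block/truncation identities reduce each intercell term either to $\tilde{\mathbf{U}}_{jk}^{\dagger}\mathbf{R}_{jk}^{\dagger}\mathbf{H}_{jk,i}^{s}\mathbf{T}_{i}^{II}=\mathbf{0}$ (by the null-space definition (\ref{eq:s_r_expression})) or to a nonzero multiple of $\tilde{\mathbf{U}}_{jk}^{\dagger}\tilde{\mathbf{H}}_{jk,i}^{e}\tilde{\mathbf{T}}_{i}^{I}=\mathbf{0}$ (by (\ref{eq:transformed_II})), and the rank-at-most-$(K-1)d$ observation correctly shows the $d$ smallest eigenvalues in (\ref{eq:intra_cell_relation})--(\ref{eq:intra_cell_relation-1}) are zero. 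For the rank condition (\ref{eq:MIA-1}), however, your ``exhibit one realization and conclude by analyticity'' route is shaky: $\tilde{\mathbf{U}}_{jk}$, $\mathbf{R}_{jk}$ and $\tilde{\mathbf{T}}_{j}^{I}$ are solutions of a feasibility system and need not be analytic (or even continuous) functions of the channels, so the ``zero set of a nontrivial polynomial'' framing does not directly apply. The clean fix is the conditioning argument you only gesture at: neither (\ref{eq:transformed_II}), nor (\ref{eq:s_r_expression}), nor the sums in (\ref{eq:intra_cell_relation})--(\ref{eq:intra_cell_relation-1}) (which run over $p\neq k$) involve the direct link $\mathbf{H}_{jk,j}$, so the solution of Constraints \ref{Transformed-Problem-II} may be chosen measurably as a function of the remaining channels; conditioned on those, $\mathbf{U}_{jk}^{\dagger}\mathbf{H}_{jk,j}\mathbf{T}_{j}\mathbf{V}_{jk}^{s}$ is a fresh Gaussian block sandwiched between fixed full-column-rank matrices, hence rank $d$ almost surely. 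State that explicitly.

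The genuine gap is the converse of part (a), which you dismiss as routine. Your proposed recipe --- project $\mathbf{U}_{jk}$ onto $\textrm{span}(\mathbf{R}_{jk})$ and read off the first $n_{i}$ rows of $\mathbf{T}_{i}^{I}$ --- does not work for a general solution of Constraints \ref{IA-design-based-general-L}: nothing in Constraints \ref{IA-design-based-general-L} forces $\mathbf{U}_{jk}$ to vanish on receive antennas $m_{jk}+1,\dots,M$ (the MS has full local CSI and may null $[\cdots\mathbf{H}_{jk,i}\mathbf{T}_{i}^{II}\cdots]$ using all $M$ rows of the full channel, landing nowhere near the zero-padded $\mathbf{R}_{jk}$-subspace), nor does it force $\mathbf{T}_{i}^{I}$ to vanish on transmit antennas $n_{i}+1,\dots,N$; after your projection/truncation the resulting matrices need not be semi-unitary and need not satisfy (\ref{eq:transformed_II}). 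The real content of this direction is to exploit the CSI-knowledge constraint (\ref{eq:IA_hidden}): the precoders are statistically independent of the channel blocks outside the fed-back submatrices, so any energy $\mathbf{T}_{i}^{I}$ places on the unknown transmit antennas generates interference through an independent Gaussian block that the decorrelators must absorb, and an almost-sure/dimension-counting argument is needed to show this cannot relax the system below an instance of Constraints \ref{Transformed-Problem-II}. This is the substantive half of the equivalence --- it is what makes Theorem \ref{Necessary-IA-Feasibility} a genuine necessary condition on $\mathcal{L}$ --- and it needs a proof, not a remark.
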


Note that the precoder solutions in (\ref{eq:structure})-(\ref{eq:intra_cell_relation-1})
automatically satisfies the \emph{CSI knowledge constraint} (\ref{eq:IA_hidden}).
Furthermore, the IA Constraints \ref{Transformed-Problem-II} contain
the intercell\emph{ }IA constraints from type-I BSs only as in (\ref{eq:transformed_II}).
Consequently, the aforementioned Challenge 2 is tackled by using Constraint
\ref{Transformed-Problem-II} and Lemma \ref{Equivalence-of-Problem-II}.

\subsection{Feasibility Conditions on $\mathcal{L}$}

Based on Lemma \ref{Equivalence-of-Problem-II} and Constraints 2,
we obtain the following necessary feasibility conditions for Constraint
1.

\begin{thm}
[Necessary Conditions for IA Feasible on $\mathcal{L}$]\label{Necessary-IA-Feasibility}If
Constraints \ref{IA-design-based-general-L} is feasible, the CSI
feedback profile $\mathcal{L}$ should satisfy: 1) $m_{jk}-\sum_{i\in\mathbb{B}_{g}^{II}\backslash\{j\}}Kd-d\geq0,\forall j,k$,
2) $N\geq Kd$, $n_{i}\geq Kd$, $i\in\mathbb{B}_{g}^{I}$, 3) $\forall\mathcal{J}_{sub}^{[r]}\subseteq\{(j,k):\forall j,k\},\mathcal{J}_{sub}^{[t]}\subseteq\mathbb{B}_{g}^{I}$,
\begin{equation}
\sum_{(j,k)\in\mathcal{J}_{sub}^{[r]}}\left(m_{jk}-\sum_{i\in\mathbb{B}_{g}^{II}\backslash\{j\}}Kd-d\right)+\sum_{i\in\mathcal{J}_{sub}^{[t]}}K(n_{i}-Kd)\geq\sum_{j\in\mathcal{J}_{sub}^{[r]}}\sum_{i\in\mathcal{J}_{sub}^{[t]}\backslash\{j\}}Kd.\label{eq:necessary_feasibility_condition}
\end{equation}

\end{thm}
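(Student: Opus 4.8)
The plan is to work with the transformed constraints (Constraints \ref{Transformed-Problem-II}) rather than the original ones, invoking Lemma \ref{Equivalence-of-Problem-II}(a) which tells us that feasibility of Constraints \ref{IA-design-based-general-L} is equivalent to feasibility of the system \eqref{eq:transformed_II}. The first two conditions are immediate dimensional sanity checks: condition 1) is just $A_{jk}\ge d$ (recall $A_{jk}=m_{jk}-\sum_{i\in\mathbb{B}_g^{II}\backslash\{j\}}Kd$ from \eqref{eq:A_definition}), which is needed for $\mathbf{\tilde U}_{jk}\in\mathbb{U}(A_{jk},d)$ to exist and also for $\mathbf{R}_{jk}\in\mathbb{U}(m_{jk},A_{jk})$ in \eqref{eq:s_r_expression} to carve out a nonempty left null space; condition 2), $N\ge Kd$ and $n_i\ge Kd$, is needed for the semi-unitary outer precoders $\mathbf{T}_i^{II}\in\mathbb{U}(N,Kd)$ and $\mathbf{\tilde T}_i^{I}\in\mathbb{U}(n_i,Kd)$ to exist. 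So the substance is entirely in inequality \eqref{eq:necessary_feasibility_condition}.

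For condition 3), I would follow the classical Yetis–Gou–Jafar style dimension-counting / Bernstein-type argument adapted to the partial-CSI setting. The idea is: for an arbitrary subset of ``data-stream MSs'' $\mathcal{J}_{sub}^{[r]}$ and an arbitrary subset of type-I BSs $\mathcal{J}_{sub}^{[t]}$, consider the subsystem of polynomial equations obtained from \eqref{eq:transformed_II} by keeping only those equations $(\mathbf{\tilde U}_{jk})^\dagger \mathbf{\tilde H}_{jk,i}^e \mathbf{\tilde T}_i^I = \mathbf{0}$ with $(j,k)\in\mathcal{J}_{sub}^{[r]}$ and $i\in\mathcal{J}_{sub}^{[t]}\backslash\{j\}$. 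If the full system is solvable, so is every subsystem. Now count: each such equation block contributes $A_{jk}^e$-many... rather, each equation $(\mathbf{\tilde U}_{jk})^\dagger\mathbf{\tilde H}_{jk,i}^e\mathbf{\tilde T}_i^I=\mathbf 0$ is a $d\times Kd$ matrix equation, but because the variables $\mathbf{\tilde U}_{jk}$ and $\mathbf{\tilde T}_i^I$ are only defined up to right-multiplication by $GL(d)$ and $GL(Kd)$ respectively, the effective number of scalar constraints it imposes is $d\cdot Kd - d\cdot d \cdot(\text{something})$; the correct bookkeeping (as in \cite{yetis2010feasibility,liu2013feasibility}) gives that each equation for a given $(j,k),i$ pair counts as $d(Kd-d)=Kd^2-d^2$ after removing the stabilizer redundancy — wait, I should instead count it as contributing $Kd\cdot d$ equations against the standard ``$Nd - d^2$'' per precoder/decorrelator variable count. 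I would set up the count as: variables contributed by $\{\mathbf{\tilde U}_{jk}:(j,k)\in\mathcal{J}_{sub}^{[r]}\}$ is $\sum_{(j,k)\in\mathcal{J}_{sub}^{[r]}}(A_{jk}-d)d$, variables contributed by $\{\mathbf{\tilde T}_i^I:i\in\mathcal{J}_{sub}^{[t]}\}$ is $\sum_{i\in\mathcal{J}_{sub}^{[t]}}(n_i-Kd)Kd$, and the number of scalar equations is $\sum_{j\in\mathcal{J}_{sub}^{[r]}}\sum_{i\in\mathcal{J}_{sub}^{[t]}\backslash\{j\}}Kd\cdot d$ (with the summation over $k$ absorbed since for each BS $j$ in $\mathcal{J}_{sub}^{[r]}$ there are as many data streams as MSs chosen — I would phrase $\mathcal{J}_{sub}^{[r]}$ as a set of $(j,k)$ pairs and let $\mathcal{J}_{sub}^{[r]}$ on the RHS abusively denote the set of distinct BS indices appearing). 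Requiring (variables) $\ge$ (equations) and dividing through by $d$ yields exactly \eqref{eq:necessary_feasibility_condition} once we substitute $A_{jk}-d = m_{jk}-\sum_{i\in\mathbb{B}_g^{II}\backslash\{j\}}Kd - d$.

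The rigorous justification that ``more equations than free variables $\Rightarrow$ generically no solution'' is the delicate part. The standard tool is an algebraic-geometry argument: the solution set, if the projection onto the channel-coefficient space were dominant, would force $\dim(\text{domain variety}) \ge \dim(\text{channel space restricted to the subsystem})$, and a dimension count of the incidence variety (à la \cite{razaviyayn2011degrees}) gives the inequality. The main obstacle I anticipate is handling the fact that the ``channels'' $\mathbf{\tilde H}_{jk,i}^e = a_{jk,i}(\mathbf R_{jk})^\dagger \mathbf H_{jk,i}^s[\cdots]$ appearing in \eqref{eq:transformed_II} are \emph{not} independent generic matrices — they share the common factor $\mathbf R_{jk}$, which itself is a (measurable, a.e.-defined) function of the type-II cross-link channels via \eqref{eq:s_r_expression}. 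I would argue that, conditioned on the type-II channels (hence on $\mathbf R_{jk}$), the matrices $(\mathbf R_{jk})^\dagger \mathbf H_{jk,i}^s$ for $i\in\mathbb{B}_g^{I}\cup\{j\}$ are still mutually independent and have i.i.d.\ Gaussian entries (since $\mathbf H_{jk,i}$ for these $i$ are independent of the type-II cross-links and of each other, and left-multiplication by a fixed semi-unitary matrix preserves the rotational invariance / genericity needed for the dimension count), so the generic-infeasibility argument goes through on a full-measure set of type-II channels and hence on a full-measure set overall. Assembling this conditioning argument cleanly, and making sure the per-variable and per-equation counts correctly account for the semi-unitary (rather than full $\mathbb{C}^{a\times b}$) parametrization, is where I would spend the most care.
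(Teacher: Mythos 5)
Your proposal is correct and follows essentially the same route as the paper: reduce to the transformed system of Constraints \ref{Transformed-Problem-II} via Lemma \ref{Equivalence-of-Problem-II}(a), read off conditions 1) and 2) as existence requirements for $\mathbf{\tilde{U}}_{jk}\in\mathbb{U}(A_{jk},d)$ and $\mathbf{\tilde{T}}_{i}^{I}\in\mathbb{U}(n_{i},Kd)$, and obtain condition 3) by the standard generic dimension count on every subsystem (Grassmannian variable count $\sum(A_{jk}-d)d+\sum Kd(n_{i}-Kd)$ versus $Kd\cdot d$ scalar equations per cross link, then divide by $d$). Your treatment of the only genuinely new wrinkle --- that $\mathbf{\tilde{H}}_{jk,i}^{e}=(\mathbf{R}_{jk})^{\dagger}\mathbf{H}_{jk,i}^{s}$ is not a raw generic channel --- by conditioning on the type-II cross links and using that a fixed semi-unitary rotation of an i.i.d.\ Gaussian matrix remains i.i.d.\ Gaussian is exactly the right fix and matches the paper's argument.
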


For instance, if we have 0 type-I BS ($g=0$) and $m_{jk}=m$, $\forall j,k$,
in $\mathcal{L}$, then Theorem 1 requires $N\geq Kd$, $m\geq(G-1)Kd+d$
for $\mathcal{L}$ to be IA feasible (see Example 1); if we have 0
type-II BS $(g=G)$ and $m_{jk}=m$, $\forall j,k$, $n_{i}=n,$ $\forall i$,
in $\mathcal{L}$, then Theorem 1 requires $m\geq d$, $n\geq Kd$,
$m+n\geq(GK+1)d$ for $\mathcal{L}$ to be IA feasible (see Example
2). Using the max-flow theory \cite{cormen2001introduction,edmonds1972theoretical},
Theorem \ref{Necessary-IA-Feasibility} can be expressed in an alternative
way.
\begin{cor}
[Equivalent Condition]\label{Equivalent-Condition}$\mathcal{L}$
satisfies the three conditions in Theorem \ref{Necessary-IA-Feasibility}
iff $N\geq Kd$ and there exist non-negative variables $\{f_{jk,i}^{t},f_{jk,i}^{r}\}$,
$f_{jk,i}^{r}\geq0,f_{jk,i}^{t}\geq0$, $\forall j,k,i\in\mathbb{B}_{g}^{I}\backslash\{j\}$,
that satisfy 
\begin{equation}
f_{jk,i}^{r}+f_{jk,i}^{t}\geq Kd,\,\forall j,k,i\in\mathbb{B}_{g}^{I}\backslash\{j\};\label{eq:equiva1}
\end{equation}
\begin{equation}
\left(m_{jk}-\sum_{i\in\mathbb{B}_{g}^{II}\backslash\{j\}}Kd-d\right)\geq\sum_{i\in\mathbb{B}_{g}^{I}\backslash\{j\}}^{G}f_{jk,i}^{r},\:\forall j,k;\label{eq:equiva2}
\end{equation}
\begin{equation}
(n_{i}-Kd)K\geq\sum_{j\neq i}^{G}\sum_{k=1}^{K}f_{jk,i}^{t},\:\forall i\in\mathbb{B}_{g}^{I}.\label{eq:equiva3}
\end{equation}

\end{cor}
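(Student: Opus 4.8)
The plan is to show that the system of subset inequalities \eqref{eq:necessary_feasibility_condition} in Theorem~\ref{Necessary-IA-Feasibility} is precisely a deficiency-type (Hall/Gale--Ryser) condition for the feasibility of a transportation/flow problem whose capacities and demands are exactly those appearing in \eqref{eq:equiva1}--\eqref{eq:equiva3}. Concretely, I would build a bipartite flow network with a source $s$, a sink $t$, one ``receiver'' node for each MS index $(j,k)$, and one ``transmitter'' node for each type-I BS $i\in\mathbb{B}_g^I$. The source feeds each receiver node $(j,k)$ through an edge of capacity $c^r_{jk}:=m_{jk}-\sum_{i\in\mathbb{B}_g^{II}\backslash\{j\}}Kd-d$; each transmitter node $i$ feeds the sink through an edge of capacity $c^t_i:=K(n_i-Kd)$; and for each admissible pair with $i\neq j$ there is an edge $(j,k)\to i$ that must carry at least a demand of $Kd$ (equivalently, split this $Kd$ demand into the two ``half-flows'' $f^r_{jk,i}$ on the receiver side and $f^t_{jk,i}$ on the transmitter side, which is what \eqref{eq:equiva1} encodes). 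The existence of nonnegative $\{f^r_{jk,i},f^t_{jk,i}\}$ satisfying \eqref{eq:equiva1}--\eqref{eq:equiva3} is then, after the degenerate condition $N\ge Kd$ is peeled off separately, equivalent to the feasibility of this bounded/lower-bounded flow problem.

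The main technical step is to invoke the max-flow / feasible-flow characterization (Gale's theorem on flows with lower bounds, or equivalently the Hoffman circulation theorem \cite{cormen2001introduction,edmonds1972theoretical}): such a flow exists iff for every partition of the node set into $\mathcal{S}\ni s$ and its complement $\bar{\mathcal{S}}\ni t$, the sum of lower bounds on edges crossing from $\bar{\mathcal{S}}$ to $\mathcal{S}$ minus the sum of upper bounds on edges crossing from $\mathcal{S}$ to $\bar{\mathcal{S}}$ is $\le 0$. I would then argue that the only cuts that can be binding are the ones obtained by choosing a subset $\mathcal{J}_{sub}^{[r]}$ of receiver nodes to place on the sink side and a subset $\mathcal{J}_{sub}^{[t]}$ of transmitter nodes to place on the source side (all other placements are dominated since moving a node to the ``natural'' side only relaxes the cut inequality). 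Writing out the cut inequality for such a choice gives exactly
\[
\sum_{(j,k)\in\mathcal{J}_{sub}^{[r]}}c^r_{jk}+\sum_{i\in\mathcal{J}_{sub}^{[t]}}c^t_i\;\ge\;\sum_{j\in\mathcal{J}_{sub}^{[r]}}\sum_{i\in\mathcal{J}_{sub}^{[t]}\backslash\{j\}}Kd,
\]
which is \eqref{eq:necessary_feasibility_condition}. The equivalence in both directions then follows, with the condition $m_{jk}-\sum_{i\in\mathbb{B}_g^{II}\backslash\{j\}}Kd-d\ge 0$ and $n_i\ge Kd$ (conditions 1 and 2 of Theorem~\ref{Necessary-IA-Feasibility}) coming out of the special cases $\mathcal{J}_{sub}^{[r]}=\{(j,k)\}$, $\mathcal{J}_{sub}^{[t]}=\emptyset$ and $\mathcal{J}_{sub}^{[t]}=\{i\}$, $\mathcal{J}_{sub}^{[r]}=\emptyset$ (nonnegativity of the $f$'s forces nonnegativity of the capacities), so that only $N\ge Kd$ must be carried as a separate clause.

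The step I expect to be the main obstacle is the reduction of the full set of exponentially many cuts to the ``canonical'' rectangular cuts indexed by $(\mathcal{J}_{sub}^{[r]},\mathcal{J}_{sub}^{[t]})$ — i.e.\ showing that if a receiver node $(j,k)$ with $(j,k)\notin\mathcal{J}_{sub}^{[r]}$ is (sub-optimally) put on the sink side, or a transmitter node outside $\mathcal{J}_{sub}^{[t]}$ is put on the source side, the resulting cut is no tighter than a rectangular one. This requires a careful bookkeeping of which $(j,k)\to i$ edges cross the cut (in particular handling the $i\neq j$ restriction, so that a BS node $i$ and ``its own'' receivers $(i,k)$ never contribute a $Kd$ demand across the cut). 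I would handle this by a standard exchange/uncrossing argument: starting from an arbitrary violated cut, greedily move mis-placed nodes to their natural side and check that each move does not increase the left-hand-side deficit, terminating at a rectangular cut that is at least as violated. Once that reduction is in place, the remainder is the routine translation between the flow formulation and \eqref{eq:equiva1}--\eqref{eq:equiva3} via the substitution $f^r_{jk,i}+f^t_{jk,i}\ge Kd$, plus conservation at the receiver and transmitter nodes giving \eqref{eq:equiva2} and \eqref{eq:equiva3}.
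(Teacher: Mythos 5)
Your overall strategy---recasting the subset conditions \eqref{eq:necessary_feasibility_condition} as the max-flow / feasible-flow (Hall-type) condition of a network whose edge flows are the variables $f^{r}_{jk,i},f^{t}_{jk,i}$, and peeling off $N\ge Kd$ separately---is exactly the route the paper takes (it cites max-flow theory for this corollary and builds the relevant network explicitly in Step 2 of Algorithm 3). The easy direction is also fine: summing \eqref{eq:equiva1}--\eqref{eq:equiva3} over $\mathcal{J}_{sub}^{[r]}$ and $\mathcal{J}_{sub}^{[t]}$ yields \eqref{eq:necessary_feasibility_condition}, and the singleton/empty choices recover conditions 1) and the $n_i\ge Kd$ part of 2).

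The gap is in the specific network you construct: it does not model the system \eqref{eq:equiva1}--\eqref{eq:equiva3}. In your two-layer bipartite graph (source $\to$ receivers $\to$ transmitters $\to$ sink), a unit of flow on the edge $(j,k)\to i$ is charged \emph{both} to the receiver budget $c^{r}_{jk}$ (conservation at node $(j,k)$) \emph{and} to the transmitter budget $c^{t}_{i}$ (conservation at node $i$); a lower bound of $Kd$ on every such edge therefore forces, e.g., $c^{r}_{jk}\ge|\mathbb{B}_{g}^{I}\setminus\{j\}|\,Kd$ by conservation at the receiver node alone---strictly stronger than Theorem \ref{Necessary-IA-Feasibility}. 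In the corollary, by contrast, the demand $Kd$ of the pair $((j,k),i)$ is covered by the \emph{sum} $f^{r}_{jk,i}+f^{t}_{jk,i}$, with only $f^{r}$ charged to the receiver and only $f^{t}$ to the transmitter: with one receiver, one transmitter and $Kd=2$, the system \eqref{eq:equiva1}--\eqref{eq:equiva3} is feasible for $c^{r}=2$, $c^{t}=0$, but your network is not. Your parenthetical ``split the demand into two half-flows'' describes the correct LP but is not realized by your graph (subdividing the edge would force $f^{r}=f^{t}$ by conservation). The fix is the three-layer network of Algorithm 3: hang \emph{both} the receiver nodes $u_{jk}$ and the transmitter nodes $v_{i}$ off the source with capacities $c^{r}_{jk}$ and $c^{t}_{i}$, introduce one demand node $c_{jk,i}$ per admissible pair fed by $u_{jk}$ and $v_{i}$, and connect $c_{jk,i}$ to the sink with capacity $Kd$. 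Feasibility of \eqref{eq:equiva1}--\eqref{eq:equiva3} is then equivalent to the max flow saturating all sink edges, and the min-cut computation collapses immediately to the rectangular cuts indexed by $(\mathcal{J}_{sub}^{[r]},\mathcal{J}_{sub}^{[t]})$ (a demand node escapes the $Kd$ charge iff both of its feeders are cut at the source), so the uncrossing step you flagged as the main obstacle becomes automatic.
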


Similar to conventional IA \cite{yetis2010feasibility,ruan2012feasibility},
checking condition (\ref{eq:necessary_feasibility_condition}) in
Theorem 1 involves an exponential number of comparisons (i.e., $O(2^{KG})$).
By Corollary \ref{Equivalent-Condition}, this exponential complexity
can be reduced to a polynomial number. On the other hand, Corollary
\ref{Equivalent-Condition} also provides a constructive approach
to verify the IA feasibility conditions (i.e., construct $f_{jk,i}^{r}$,
$f_{jk,i}^{t}$ in terms of the parameters in $\mathcal{L}$ and check
the conditions (\ref{eq:equiva1})-(\ref{eq:equiva3})). 

We also have that the conditions in Theorem \ref{Necessary-IA-Feasibility}
are sufficient in the divisible cases. 
\begin{thm}
[Sufficient IA Feasibility Conditions]\label{Sufficient-IA-Feasibility}Suppose
$\mathcal{L}$ satisfies the three conditions in Theorem \ref{Necessary-IA-Feasibility}.
If $\mathcal{L}$ further satisfies $d\mid n_{i}$, $\forall i\in\mathbb{B}_{g}^{I}$,
or $Kd\mid(m_{jk}-d)$, $\forall j,k$, Constraints \ref{IA-design-based-general-L}
is feasible. 
\end{thm}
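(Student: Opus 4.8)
The plan is to establish sufficiency by reducing Constraints~\ref{IA-design-based-general-L} — via Lemma~\ref{Equivalence-of-Problem-II} — to the transformed intercell IA problem in Constraints~\ref{Transformed-Problem-II}, which only involves finding $\{\mathbf{\tilde T}_i^I\in\mathbb{U}(n_i,Kd)\}$ and $\{\mathbf{\tilde U}_{jk}\in\mathbb{U}(A_{jk},d)\}$ with $(\mathbf{\tilde U}_{jk})^\dagger \mathbf{\tilde H}_{jk,i}^e \mathbf{\tilde T}_i^I = \mathbf{0}$ for $i\in\mathbb{B}_g^I\setminus\{j\}$. The effective channels $\mathbf{\tilde H}_{jk,i}^e$ are (scaled versions of) generic $A_{jk}\times n_i$ matrices whose entries are, up to the fixed semi-unitary $\mathbf{R}_{jk}$, linear images of i.i.d.\ Gaussians, so they are generic in the Zariski sense. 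Thus the problem is exactly a (one-sided, multi-cell) linear IA feasibility problem, and the two divisibility hypotheses are precisely the regimes in which the counting bound of Theorem~\ref{Necessary-IA-Feasibility} is known to be achievable by an explicit construction.

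First I would handle the case $d\mid n_i$ for all $i\in\mathbb{B}_g^I$. Here the natural move is a \emph{stream-splitting / grouping} argument: write $n_i = d\,\nu_i$ and partition the $Kd$ columns of each type-I outer precoder $\mathbf{\tilde T}_i^I$ into $K$ blocks of $d$ columns, and likewise view each MS as requesting $d$ streams. Because $d\mid n_i$, one can build $\mathbf{\tilde T}_i^I$ block-diagonally over groups of $d$ antennas at the BS, which reduces the problem to a $d=1$ IA problem on a derived antenna configuration; in the $d=1$ case Corollary~\ref{Equivalent-Condition} gives integer flow variables $\{f_{jk,i}^r,f_{jk,i}^t\}$ with $f_{jk,i}^r+f_{jk,i}^t\ge Kd$, and these flows can be read as an explicit assignment of ``who zero-forces whom'': each MS $(j,k)$ devotes $f_{jk,i}^r$ of its effective receive dimensions (beyond the $d$ it keeps and the $\sum_{i\in\mathbb{B}_g^{II}\setminus\{j\}}Kd$ it already spent on type-II cancellation) to nulling BS~$i$, and BS~$i$ devotes $f_{jk,i}^t$ transmit dimensions to being nulled by $(j,k)$, with the constraints \eqref{eq:equiva2}--\eqref{eq:equiva3} guaranteeing neither side is over-subscribed. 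One then picks the decorrelators $\mathbf{\tilde U}_{jk}$ and outer precoders $\mathbf{\tilde T}_i^I$ greedily/sequentially so that the accumulated orthogonality constraints stay consistent; genericity of the $\mathbf{\tilde H}_{jk,i}^e$ ensures the relevant intersections of subspaces have exactly the dimension predicted by the flow count, so each step leaves a nonempty (indeed open dense) set of valid choices. The rank conditions \eqref{eq:MIA-1} and the intracell nulling \eqref{eq:IA_condition-1} are then automatically met by the explicit formulas \eqref{eq:intra_cell_relation}--\eqref{eq:intra_cell_relation-1} in Lemma~\ref{Equivalence-of-Problem-II}, almost surely.

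The case $Kd\mid(m_{jk}-d)$ is dual: now it is the \emph{receive} side whose dimension is a multiple of $Kd$, so I would split the receive dimensions of each MS into blocks and assign whole blocks of size $Kd$ — one per interfering type-I BS — to cancel that BS's entire outer precoder by pure zero-forcing at the MS (no transmit-side nulling needed), consuming $d$ for the desired signal and $\sum_{i\in\mathbb{B}_g^{II}\setminus\{j\}}Kd$ for the type-II BSs; condition~1 of Theorem~\ref{Necessary-IA-Feasibility} together with $Kd\mid(m_{jk}-d)$ guarantees $m_{jk}-d-\sum_{i\in\mathbb{B}_g^{II}\setminus\{j\}}Kd$ is a nonnegative multiple of $Kd$, and \eqref{eq:necessary_feasibility_condition} restricted to $\mathcal{J}_{sub}^{[t]}=\mathbb{B}_g^I$ forces it to be at least $(|\mathbb{B}_g^I|-1)Kd$ in the worst case, i.e.\ enough blocks to cover every type-I BS other than the serving one. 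In flow terms this is the corner solution $f_{jk,i}^r=Kd$, $f_{jk,i}^t=0$ for all relevant $i$, which trivially satisfies \eqref{eq:equiva1} and, by the hypotheses, \eqref{eq:equiva2}; \eqref{eq:equiva3} holds vacuously since all $f^t=0$. Then $\mathbf{\tilde T}_i^I$ is an arbitrary generic semi-unitary matrix and each $\mathbf{\tilde U}_{jk}$ is chosen in the orthogonal complement of $\mathrm{span}\big(\{\mathbf{\tilde H}_{jk,i}^e\mathbf{\tilde T}_i^I:i\in\mathbb{B}_g^I\setminus\{j\}\}\big)$, which has dimension $A_{jk}-(|\mathbb{B}_g^I\setminus\{j\}|)Kd\ge d$ generically, so a valid $\mathbf{\tilde U}_{jk}$ exists.

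The main obstacle I anticipate is the first case: unlike the purely receive-side case, when $d\mid n_i$ but $Kd\nmid(m_{jk}-d)$ one genuinely needs both transmit- and receive-side nulling, so the construction is not a one-shot corner solution but a sequential/alternating one, and the delicate point is showing that the generic-dimension count is preserved at every step — i.e.\ that after fixing some $\mathbf{\tilde T}_i^I$'s and $\mathbf{\tilde U}_{jk}$'s, the remaining channels stay ``generic enough'' relative to the already-chosen subspaces for the next intersection to have the expected dimension. This is where one must be careful that the $\mathbf{R}_{jk}$ prefactor (which is a deterministic function of the type-II cross channels) does not introduce an algebraic dependency that collapses a dimension; the resolution is that $\mathbf{R}_{jk}$ depends only on $\{\mathbf{H}_{jk,i}:i\in\mathbb{B}_g^{II}\setminus\{j\}\}$ whereas the channels appearing in \eqref{eq:transformed_II} involve $\{\mathbf{H}_{jk,i}:i\in\mathbb{B}_g^I\}$, which are independent of the former, so conditioning on $\mathbf{R}_{jk}$ the matrices $\mathbf{\tilde H}_{jk,i}^e=(\mathbf{R}_{jk})^\dagger\mathbf{H}_{jk,i}^s$ remain i.i.d.\ Gaussian of the right size, restoring genericity. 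I would package this as: the solution set is the common zero set of the bilinear equations \eqref{eq:transformed_II}, which is a determinantal variety whose expected dimension equals $\big(\text{variable count}\big)-\big(\text{equation count}\big)$, and the divisibility hypothesis is exactly what makes the block-structured ansatz hit that expected dimension with a nonempty real/complex solution, concluding feasibility of Constraints~\ref{Transformed-Problem-II} and hence, by Lemma~\ref{Equivalence-of-Problem-II}, of Constraints~\ref{IA-design-based-general-L}.
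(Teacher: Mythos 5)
Your skeleton is the right one and matches what the paper's machinery is built for: pass to Constraints~\ref{Transformed-Problem-II} via Lemma~\ref{Equivalence-of-Problem-II}, observe that the effective channels $\mathbf{\tilde{H}}_{jk,i}^{e}=(\mathbf{R}_{jk})^{\dagger}\mathbf{H}_{jk,i}^{s}$ remain generic because $\mathbf{R}_{jk}$ depends only on the type-II cross links (independent of the type-I links entering (\ref{eq:transformed_II})), and then realize the flow certificate of Corollary~\ref{Equivalent-Condition} by a block construction keyed to the divisibility hypothesis. However, your treatment of the case $Kd\mid(m_{jk}-d)$ contains a genuine error. You assert that the corner flow $f_{jk,i}^{r}=Kd$, $f_{jk,i}^{t}=0$ is admissible, i.e.\ that each MS can null every interfering type-I BS by pure receive-side zero-forcing, and you justify this by claiming that (\ref{eq:necessary_feasibility_condition}) with $\mathcal{J}_{sub}^{[t]}=\mathbb{B}_{g}^{I}$ forces $m_{jk}-\sum_{i\in\mathbb{B}_{g}^{II}\backslash\{j\}}Kd-d\geq(g-1)Kd$. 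It does not: the left-hand side of (\ref{eq:necessary_feasibility_condition}) also contains the transmit-side slack $\sum_{i}K(n_{i}-Kd)$. Concretely, take $g=G=3$, $K=d=1$, $m_{jk}=n_{i}=2$: the divisibility $Kd\mid(m_{jk}-d)$ is vacuous, all three conditions of Theorem~\ref{Necessary-IA-Feasibility} hold (with equality in condition 3 for the full subsets), and the theorem correctly asserts feasibility (the classic $3$-user $2\times2$ example) — yet $m_{jk}-d=1<(g-1)Kd=2$, so your corner flow violates (\ref{eq:equiva2}) and the orthogonal complement you propose for $\mathbf{\tilde{U}}_{jk}$ has dimension $A_{jk}-2Kd=0<d$. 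The divisibility cannot eliminate transmit-side nulling; it can only serve to round the receive-side portion $f_{jk,i}^{r}$ of whatever mixed flow Corollary~\ref{Equivalent-Condition} supplies into whole $Kd$-blocks.

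The case $d\mid n_{i}$ also remains incomplete, and you say so yourself: the crux is showing that a sequential choice of the shared variables $\{\mathbf{\tilde{T}}_{i}^{I}\}$ and $\{\mathbf{\tilde{U}}_{jk}\}$ preserves the generic dimension count at every step, and that a single $d\times Kd$ bilinear constraint can be split into $f^{t}$ transmit-nulled and $f^{r}$ receive-nulled dimensions by an explicit subspace assignment. This is precisely where the work lies, and it is asserted rather than established. Moreover, your reduction ``to a $d=1$ problem'' via block-diagonal $\mathbf{\tilde{T}}_{i}^{I}$ uses $d\mid n_{i}$ only on the transmit side, while the receive dimensions $A_{jk}$ need not be divisible by $d$, so the problem does not literally collapse to a $d=1$ instance. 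As written, the proposal identifies the correct ingredients (Lemma~\ref{Equivalence-of-Problem-II}, Corollary~\ref{Equivalent-Condition}, genericity) but does not constitute a proof: one case is based on a false intermediate claim, and the other leaves its central step open.
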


\begin{remrk}
[Backward Compatibility with Previous Results]Suppose $g=G$, $K=1$,
$m_{jk}=M$, $n_{i}=N$ in $\mathcal{L}$ (full CDI feedback). Then
the required conditions on parameter $G$, $M$, $N$, $d$ from Theorem
2 in this paper, are the same as from Corollary 3.4 of \cite{ruan2012feasibility}.
Suppose $g=G$, $m_{jk}=M$, $n_{i}=N$, $\forall j,k,i$, in $\mathcal{L}$
(full CDI feedback) and $d\mid N$, $d\mid M$. Then the required
conditions on parameter $G$, $K$, $M$, $N$, $d$ from Theorem
2 in this paper, are the same as from Theorem 2 in \cite{liu2013feasibility}. 
\end{remrk}

\subsection{Transceiver Design under $\mathcal{L}$}

In this section, we derive the IA solutions $\{\mathbf{T}_{j}^{I}\}$,
$\{\mathbf{V}_{jk}^{s},\mathbf{U}_{jk}\}$ to Constraints 1. Note
conventional IA designs \cite{ruan2012interference,gomadam2011distributed}
require full CSIT and hence can not be directly applied to Constraints
\ref{IA-design-based-general-L} which have the \emph{CSI knowledge
constraint} (\ref{eq:IA_hidden}). Specifically, we adopt the alternating
interference leakage minimization (AILM) techniques \cite{gomadam2011distributed}
and solve the equivalent Constraints \ref{Transformed-Problem-II}.
Similar to \cite{gomadam2011distributed}, we propose the following
problem to find the solutions to satisfy Constraints \ref{Transformed-Problem-II}.
\begin{problem}
[Interference Leakage Minimization]\label{Interference-Leakage-Minimizatio}
\end{problem}
\begin{eqnarray}
\min_{\{\mathbf{\tilde{T}}_{i}^{I},\mathbf{\tilde{U}}_{jk}\}} &  & I\triangleq\sum_{(j,k)}\sum_{i\in\mathbb{B}_{g}^{I}\backslash\{j\}}\textrm{tr}\left((\mathbf{\tilde{U}}_{jk})^{\dagger}\mathbf{\tilde{H}}_{jk,i}^{e}\mathbf{\tilde{T}}_{i}^{I}\left((\mathbf{\tilde{U}}_{jk})^{\dagger}\mathbf{\tilde{H}}_{jk,i}^{e}\mathbf{\tilde{T}}_{i}^{I}\right)^{\dagger}\right)\label{eq:solve_u}\\
\textrm{s.t.} &  & \mathbf{\tilde{T}}_{i}^{I}\in\mathbb{U}(n_{i},Kd),i\in\mathbb{B}_{g}^{I};\;\mathbf{\tilde{U}}_{jk}\in\mathbb{U}(A_{jk},d),\forall j,k.\nonumber 
\end{eqnarray}
\hfill \QED

Problem \ref{Interference-Leakage-Minimizatio} has closed-form optimal
$\{\mathbf{\tilde{T}}_{i}^{I}\}$ for fixed $\{\mathbf{\tilde{U}}_{jk}\}$
and closed-form optimal $\{\mathbf{\tilde{U}}_{jk}\}$ for fixed $\{\mathbf{\tilde{T}}_{i}^{I}\}$,
and hence we shall apply \emph{alternating optimization} techniques
\cite{gomadam2011distributed} to derive solutions. 

\textit{Algorithm 1 }\textit{\emph{(}}\emph{Iterative Solution to
Constraints}\textit{ \ref{Transformed-Problem-II} under $\mathcal{L}$):}
\begin{itemize}
\item \textbf{Step 1} \textbf{(}\emph{Initialization}\textbf{)}: Randomly
initialize $\mathbf{\tilde{T}}_{i}^{I}\in\mathbb{U}(n_{i},Kd)$, $\forall i\in\mathbb{B}_{g}^{I}$,
$\mathbf{\tilde{U}}_{jk}\in\mathbb{U}(A_{jk},d)$, $\forall j,k$.
\item \textbf{Step 2 (}\emph{Update}\textbf{ $\{\mathbf{\tilde{U}}_{jk}\}$)}:
Update $\mathbf{\tilde{U}}_{jk}=v_{d}\left(\sum_{i\in\mathbb{B}_{g}^{I}\backslash\{j\}}\left(\mathbf{\tilde{H}}_{jk,i}^{e}\mathbf{\tilde{T}}_{i}^{I}\right)\left(\mathbf{\tilde{H}}_{jk,i}^{e}\mathbf{\tilde{T}}_{i}^{I}\right)^{\dagger}\right)$,
$\forall j,k$.
\item \textbf{Step 3} \textbf{(}\emph{Update}\textbf{ $\{\mathbf{\tilde{T}}_{i}^{I}\}$)}:
Update $\mathbf{\tilde{T}}_{i}^{I}=v_{(Kd)}\left(\sum_{(j,k)}\left(\mathbf{\tilde{H}}_{jk,i}^{e}\mathbf{\tilde{U}}_{jk}\right)\left(\mathbf{\tilde{H}}_{jk,i}^{e}\mathbf{\tilde{U}}_{jk}\right)^{\dagger}\right)$,
$\forall i\in\mathbb{B}_{g}^{I}$.
\item Repeat \textbf{Step 2} and \textbf{Step 3} until the value of $I$
in (\ref{eq:solve_u}) converges. \hfill \QED
\end{itemize}

Note that based on the converged solution of $\{\mathbf{\tilde{T}}_{i}^{I}\}$
and $\{\mathbf{\tilde{U}}_{jk};\}$ from Algorithm 1, we can obtain
the overall solutions $\{\mathbf{T}_{j}^{I}\}$ $\{\mathbf{V}_{jk}^{s},\mathbf{U}_{jk}\}$
to Constraints \ref{IA-design-based-general-L} by using Lemma \ref{Equivalence-of-Problem-II}.
\begin{remrk}
[Characterization of Algorithm 1]Note Algorithm 1 can automatically
adapt to the \emph{partial CSI knowledge} constraint (\ref{eq:IA_hidden}).
On the other hand, the value of $I$ converges in Algorithm 1 because:
1) the total interference leakage $I$ in (\ref{eq:solve_u}) is monotonically
decreasing in the alternating updates of Step 2 and Step 3; 2) $I$
is non-negative so that $I$ is bounded below. However, the convergence
to global optimality is not guaranteed due to the nonconvexity of
Problem \ref{Interference-Leakage-Minimizatio} \cite{gomadam2011distributed}.
Note that if the total interference leakage $I$ at the converged
point is 0, then the converged solution is a feasible solution to
Constraints \ref{Transformed-Problem-II}. Furthermore, from extensive
simulations, it is observed that the converged value of $I$ is always
0 when Constraints \ref{Transformed-Problem-II} is feasible (similar
to conventional AILM works \cite{gomadam2011distributed,ruan2012interference,gonzalez2012general}).
\end{remrk}

\subsection{Implementation Consideration}

In this section, we give a summary on how to implement the proposed
IA scheme with partial CSI feedback $\mathcal{L}$ in MIMO cellular
networks. 

\textit{Algorithm 2 }\textit{\emph{(}}\emph{Implementation of Proposed
IA Scheme}\textit{ under $\mathcal{L}$):}
\begin{itemize}
\item \textbf{Step 1} \textbf{\emph{(}}\emph{CSI Observation}\textbf{\emph{)}}:
The $(j,k)$-th MS observes the local CSI $\mathcal{H}_{jk}=(\mathbf{H}_{jk,1},\mathbf{H}_{jk,2},\cdots\mathbf{H}_{jk,G})$,
$\forall j,k$. 
\item \textbf{Step 2} (\emph{Partial CSI Feedback under} $\mathcal{L}$):
The $(j,k)$-th MS feedbacks the filtered CSI generated by $\mathbb{\mathcal{H}}_{jk}^{fed}=F_{jk}(\mathcal{H}_{jk})$
to BS $j$, where $F_{jk}$ is the CSI filtering function as in Definition
\ref{Structural-Form-of-CSI-Feedback} according to feedback profile
$\mathcal{L}$. 
\item \textbf{Step 3 (}\emph{Transceiver Computation}\textbf{)}: BS $j$
obtains $\mathbb{H}_{j}$ in (\ref{eq:available_CSI}) from the feedback
$\{\mathbb{\mathcal{H}}_{jk}^{fed}:\forall k\}$. One BS collects
the $\{\mathbb{H}_{j}:\forall j\}$ from other BSs through the backhaul
and computes $\{\mathbf{\tilde{T}}_{i}^{I}:i\in\mathbb{B}_{g}^{I}\}$,
$\{\mathbf{\tilde{U}}_{jk};\forall j,k\}$ according to Algorithm
1 in a centralized manner. 
\item \textbf{Step 4} \textbf{(}\emph{Transceiver Distribution}\textbf{)}:
The BS mentioned in Step 3 distributes the obtained $\mathbf{\tilde{T}}_{j}^{I},\{\mathbf{\tilde{U}}_{jk}:\forall j,k\}$
to BS $j$ for $j\in\mathbb{B}_{g}^{I}$ and $\{\mathbf{\tilde{U}}_{jk}:\forall k\}$
to BS $j$ for $j\in\mathbb{B}_{g}^{II}$. BS $j$ forward $\mathbf{\tilde{U}}_{jk}$
to the $(j,k)$-th MS, $\forall j,k$. 

\begin{itemize}
\item BS $j$ uses $\mathbf{T}_{j}^{I}$ as the outer precoder for type-I
BSs, $\mathbf{V}_{jk}^{s}$ as the inner precoder for the $(j,k)$-th
MS designed via equations (\ref{eq:structure})-(\ref{eq:intra_cell_relation-1})
in Lemma \ref{Equivalence-of-Problem-II}.
\item The $(j,k)$-th MS uses $\mathbf{U}_{jk}$ as the decorrelator designed
via equation (\ref{eq:structure}) in Lemma \ref{Equivalence-of-Problem-II}.\hfill \QED
\end{itemize}

\end{itemize}

\section{Feedback Dimension Minimization and Asymptotic Optimal Feedback Profile}

\subsection{Problem Formulation}

In this section, we solve Challenge 1 by solving the following problem
of CSI feedback dimension minimization subject to the requirement
of IA DoFs (Constraints 1) under partial CSI feedback $\mathcal{L}$
in MIMO cellular networks. 
\begin{problem}
[Feedback Dimension Minimization]\label{Feedback-Dimension-Optimization-2}
\begin{eqnarray}
\min_{\mathcal{L}} &  & D(\mathcal{L})\label{eq:objective-2}\\
\textrm{s.t.} &  & n_{i}\leq N,\forall i\in\mathbb{B}_{g}^{I},\, m_{jk}\leq M,\forall j,k;\label{eq:con_1-2}\\
 &  & 0\leq g\leq G,\quad g,n_{i},m_{jk}\in\mathbb{Z};\\
 &  & \textrm{Constraints 1 under \ensuremath{\mathcal{L}}}.\label{eq:con_1-4}
\end{eqnarray}
\hfill \QED
\end{problem}

Note that Problem 2 is an offline optimization where we try to find
the optimal feedback profile $\mathcal{L}^{*}$ to minimize the feedback
dimension $D(\mathcal{L})$ so that the BS can still deliver $d$
data streams to each MS in the MIMO cellular network with the given
antenna configurations. Note that the Constraints 1 in (\ref{eq:con_1-4})
is an \emph{implicit constraint} on $\mathcal{L}$ and the feasibility
conditions are specified in Theorem 1 and 2. Figure \ref{fig:Relationship-between-Problem}
summarizes the relationship between Problem 2 and Theorem 1, 2. By
using the necessary conditions in Theorem 1, we first have the following
property for any feasible $\mathcal{L}$ to Problem \ref{Feedback-Dimension-Optimization-2}.
Denote $N_{1}=\textrm{min}(GKd,N)$, $g_{1}=\left\lfloor \frac{G\left((G-1)Kd-M+d\right)}{N_{1}-Kd}\right\rfloor $. 
\begin{lemma}
[Number of Type-II BSs]\label{Maximum-Number-of}Suppose $\mathcal{L}=\left\{ \{m_{jk}:\forall j,k\},g,\{n_{i}:i\in\mathbb{B}_{g}^{I}\}\right\} $
is a feasible solution to Problem \ref{Feedback-Dimension-Optimization-2},
then $\mathcal{L}$ has no more than $(G-g_{1})$ type-II BSs, i.e.,
$g\geq g_{1}$.
\end{lemma}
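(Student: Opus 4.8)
The plan is to derive the bound purely from the necessary feasibility conditions of Theorem~\ref{Necessary-IA-Feasibility}, combining its first condition (the per-MS antenna count) with one well-chosen instance of its third condition (the sub-network inequality). Write $\Delta := (G-1)Kd+d-M$, so that $g_{1}=\lfloor G\Delta/(N_{1}-Kd)\rfloor$. If $\Delta\le 0$ then $g_{1}\le 0\le g$ and there is nothing to prove, and the same holds if $N_{1}=Kd$; so I assume $\Delta>0$ and $N_{1}>Kd$ throughout. Note also $M\ge d$, since the first condition of Theorem~\ref{Necessary-IA-Feasibility} forces $m_{jk}\ge d$ while $m_{jk}\le M$.

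First I would specialize the third condition of Theorem~\ref{Necessary-IA-Feasibility} to $\mathcal{J}_{sub}^{[r]}$ equal to the full set of all $GK$ MS indices and $\mathcal{J}_{sub}^{[t]}=\mathbb{B}_{g}^{I}$. The double sums simplify cleanly: for every $j$ the sets $\mathbb{B}_{g}^{I}\setminus\{j\}$ and $\mathbb{B}_{g}^{II}\setminus\{j\}$ partition $\{1,\dots,G\}\setminus\{j\}$, so the $Kd$-weighted cardinality terms on the two sides combine to $GK(G-1)\,Kd$, and the inequality collapses to $\sum_{j,k}m_{jk}+K\sum_{i\in\mathbb{B}_{g}^{I}}n_{i}\ge GKd+gK^{2}d+GK^{2}d(G-1)$. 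Inserting the antenna caps $m_{jk}\le M$ and $n_{i}\le N$ (the second sum has $g$ terms), dividing by $K$ and rearranging yields $g(N-Kd)\ge G\Delta$. Since $N_{1}=N$ whenever $N\le GKd$, this already gives $g\ge G\Delta/(N_{1}-Kd)\ge\lfloor G\Delta/(N_{1}-Kd)\rfloor=g_{1}$ in that regime.

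It remains to treat $N>GKd$, where $N_{1}=GKd$ and the estimate $n_{i}\le N$ above is too lossy. Here I would invoke the first condition of Theorem~\ref{Necessary-IA-Feasibility}: if $g=0$ it forces $m_{jk}\ge (G-1)Kd+d$ for all MSs, which together with $m_{jk}\le M$ contradicts $\Delta>0$, so $g\ge 1$; then applying the first condition to an MS served by a type-I BS $j$ (for which $\mathbb{B}_{g}^{II}\setminus\{j\}=\mathbb{B}_{g}^{II}$ has size $G-g$) gives $M\ge m_{jk}\ge (G-g)Kd+d$, i.e.\ $g\ge (GKd-M+d)/(Kd)$. A one-line computation shows $(GKd-M+d)/(Kd)-G\Delta/\big((G-1)Kd\big)=(M-d)/\big((G-1)Kd\big)\ge 0$, so in this regime $(GKd-M+d)/(Kd)\ge G\Delta/(N_{1}-Kd)$ and hence $g\ge g_{1}$ again. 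Combining the two regimes, $g\ge\max\{\,G\Delta/(N-Kd),\,(GKd-M+d)/(Kd)\,\}\ge G\Delta/(N_{1}-Kd)\ge g_{1}$. The step I expect to be the real obstacle is precisely the dichotomy forced by $N_{1}=\min(GKd,N)$: the naive bound $n_{i}\le N$ in the third-condition instance is not tight enough once $N$ exceeds $GKd$, and one must notice that in that range it is the per-MS antenna constraint rather than the sub-network inequality that pins $g$ down; the collapse of the double sums is routine once the partition observation is made, but the combinatorial bookkeeping still warrants a careful check.
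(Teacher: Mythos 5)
Your proof is correct and follows the route the paper itself announces for this lemma: extracting a lower bound on $g$ from the necessary feasibility conditions of Theorem~\ref{Necessary-IA-Feasibility}. The aggregate instance of condition 3 with both index sets taken to be everything, combined with the caps $m_{jk}\le M$, $n_i\le N$, correctly yields $g(N-Kd)\ge G\bigl((G-1)Kd+d-M\bigr)$, and your switch to the per-MS condition $m_{jk}\ge (G-g)Kd+d$ in the regime $N>GKd$ (where the crude bound $n_i\le N$ is too weak relative to $N_1=GKd$) is a valid way to close the gap --- the algebra in both regimes, including the identity reducing to $(M-d)/((G-1)Kd)\ge 0$, checks out.
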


Lemma \ref{Maximum-Number-of} indicates that we may only allow a
finite number of type-II BSs to satisfy the required IA DoF in the
network.\vspace{-1cm}

\begin{center}
\framebox{\begin{minipage}[t]{1\columnwidth}%
Challenge 3: Design a low-complexity solution to Problem \ref{Feedback-Dimension-Optimization-2}
despite the implicit constraint (\ref{eq:con_1-4}) on $\mathcal{L}$
and the combinatorial nature of the optimization variable ($\mathcal{L}$). %
\end{minipage}}
\par\end{center}

\begin{figure}
\begin{centering}
\includegraphics{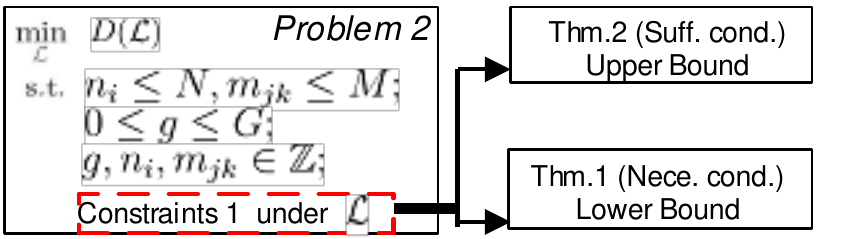}
\par\end{centering}

\caption{\label{fig:Relationship-between-Problem}Relationship between Problem
\ref{Feedback-Dimension-Optimization-2} and Theorem 1, 2.}

\end{figure}

\subsection{Proposed Greedy Algorithm of Feedback Profile Design }

To tackle the challenges, we obtain an achievable upper bound of feedback
dimension by (a) restricting constraint (\ref{eq:con_1-4}) with its
sufficient conditions in Theorem 2 and (b) find a low complexity greedy
algorithm that gives a feedback profile $\mathcal{L}_{0}$ satisfying
the sufficient condition. Specifically, the greedy feedback profile
solution $\mathcal{L}_{0}$ is designed to aggressively select the
\emph{largest} number of \emph{type-II} BSs. While the solution is
a suboptimal upper bound of the minimum feedback dimension $D(\mathcal{L}^{*})$,
we will show later that it is asymptotically optimal as $G\rightarrow\infty$.
Denote $N_{0}=\min\left(GKd,\left\lfloor \frac{N}{d}\right\rfloor d\right)$.
The details of the greedy algorithm are as follows:

\emph{Algorithm 3 (Greedy Solution $\mathcal{L}_{0}$ to Problem \ref{Feedback-Dimension-Optimization-2})}
\begin{itemize}
\item \textbf{Step 1} \emph{(Initialization):} Initialize $\mathcal{L}_{0}=\left\{ \{m_{jk}=M:\forall j,k\},g_{0},\{n_{i}=N_{0}:i\in\mathbb{B}_{g_{0}}^{I}\}\right\} $,
where 
\begin{equation}
g_{0}=\left\lceil \frac{G\left((G-1)Kd-M+d\right)}{N_{0}-Kd}\right\rceil .\label{eq:s_b_expression}
\end{equation}

\item \textbf{Step 2} \emph{(Antenna Pruning Preparation):} Construct the
max flow graph $\mathcal{N}=(\mathcal{V},\mathcal{E})$ \cite{cormen2001introduction}: \end{itemize}
\begin{enumerate}
\item The vertices are given by $\mathcal{V}=\{a,b,u_{jk},v_{i},c_{ji,k}\}$,
$\forall j,k,i\in\mathbb{B}_{g_{0}}^{I}$, where $a$, $b$ are the
source, destination node respectively and $u_{jk},v_{i},c_{ji,k}$
are the intermediate nodes in $\mathcal{N}$. 
\item The edges are given by $\mathcal{E}=\left\{ (a,u_{jk}),(a,v_{i})\right.$,$(u_{jk},c_{jk,i})$,
$(v_{i},c_{jk,i})$, $\left.(c_{jk,i},b):\forall j,k,i\in\mathbb{B}_{g_{0}}^{I}\right\} $,
where $(u,v)$ denotes the edge from node $u$ to node $v$.
\item The edge capacities are given by $c(a,u_{jk})=c(u_{jk},c_{jk,i})=(m_{jk}-\sum_{i\in\mathbb{B}_{g_{0}}^{II}\backslash\{j\}}Kd-d)$,
$c(a,v_{i})=c(v_{i},c_{jk,i})=K(n_{i}-Kd),$ $c(c_{jk,i},t)=Kd$,
$\forall j,k,i\in\mathbb{B}_{g_{0}}^{I}$, where $c(u,v)$ denotes
the edge capacity on the edge $(u,v)$.
\item Find the max flow solutions $\{f(a,b):(a,b)\in\mathcal{E}\}$ for
$\mathcal{N}$ \cite{cormen2001introduction}.\end{enumerate}
\begin{itemize}
\item \textbf{Step 3} \emph{(Antenna Pruning):} Based on the max-flow $\{f(a,b):(a,b)\in\mathcal{E}\}$
obtained in \textbf{Step 2}, perform antenna reduction as 
\[
n_{i}=N_{0}-\left\lfloor \frac{c(a,v_{i})-f(a,v_{i})}{Kd}\right\rfloor d,\, i\in\mathbb{B}_{g_{0}}^{I};
\]
\[
m_{jk}=M-\left\lfloor c(a,u_{jk})-f(a,u_{jk})\right\rfloor ,\forall j,k.
\]
\hfill \QED
\end{itemize}

\begin{remrk}
[Interpretation of Algorithm 3]The feedback profile $\mathcal{L}_{0}$
design in Algorithm 3 contains two stages and in the first stage (Step
1), we design $g_{0}$ in $\mathcal{L}$ by choosing the largest number
of type-II BSs, in the second stage (Step 2, 3), we further reduce
the feedback antennas via max-flow techniques. As the computation
mainly comes from finding the max flow solutions, the overall worst
case complexity of Algorithm 3 is $\mathcal{O}(G^{4}K^{2})$ \cite{cormen2001introduction}. 
\end{remrk}

By deploying Corollary \ref{Equivalent-Condition} and by using the
max-flow graph in Algorithm 3, we derive that $\mathcal{L}_{0}$ satisfies
the conditions in Theorem 2, and is therefore a feasible solution
to Problem \ref{Feedback-Dimension-Optimization-2}.
\begin{thm}
[Feasibility of $\mathcal{L}_{0}$]The obtained feedback profile
$\mathcal{L}_{0}$ from Algorithm 3 is a feasible solution to the
feedback dimension optimization problem (Problem \ref{Feedback-Dimension-Optimization-2}).
\end{thm}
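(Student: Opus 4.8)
The plan is to certify that the profile $\mathcal{L}_0$ returned by Algorithm~3 satisfies every constraint of Problem~\ref{Feedback-Dimension-Optimization-2}: the box/integrality constraints in \eqref{eq:con_1-2}, and the IA feasibility requirement \eqref{eq:con_1-4}. For the IA feasibility part I would not argue directly but instead verify the \emph{sufficient} conditions of Theorem~\ref{Sufficient-IA-Feasibility}, i.e., show that $\mathcal{L}_0$ obeys the three conditions of Theorem~\ref{Necessary-IA-Feasibility} together with the divisibility condition $d\mid n_i$ for all $i\in\mathbb{B}_{g_0}^I$.

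The box/integrality and divisibility parts are immediate from the construction. In Algorithm~3, $m_{jk}$ starts at $M$ and $n_i$ starts at $N_0=\min(GKd,\lfloor N/d\rfloor d)\le N$, and Step~3 only decreases them, so $m_{jk}\le M$ and $n_i\le N$; since $N_0$ is a multiple of $d$ and Step~3 subtracts a multiple of $d$, also $d\mid n_i$. Integrality of $g_0$, $n_i$, $m_{jk}$ is clear, and $g_0\in\{0,\dots,G\}$ follows because $M\le(G-1)Kd+d$ makes the numerator of \eqref{eq:s_b_expression} non-negative, while $g_0\le G$ reduces to $GKd-M+d\le N_0$, which holds on the feasible regime of Problem~\ref{Feedback-Dimension-Optimization-2} (consistently with Lemma~\ref{Maximum-Number-of}).

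The substance is the three conditions of Theorem~\ref{Necessary-IA-Feasibility}. Instead of checking the exponentially many inequalities in \eqref{eq:necessary_feasibility_condition}, I would invoke Corollary~\ref{Equivalent-Condition} and \emph{construct} its auxiliary variables directly from the max-flow $\{f(\cdot,\cdot)\}$ of the network $\mathcal{N}$ built in Step~2, setting $f_{jk,i}^r:=f(u_{jk},c_{jk,i})$ and $f_{jk,i}^t:=f(v_i,c_{jk,i})$ for all $j,k$ and $i\in\mathbb{B}_{g_0}^I\backslash\{j\}$; these are automatically non-negative. Flow conservation at $u_{jk}$ gives $\sum_i f_{jk,i}^r=f(a,u_{jk})$, and, since $c(a,u_{jk})-f(a,u_{jk})$ is an integer, the Step~3 update of $m_{jk}$ yields \emph{exactly} $m_{jk}-\sum_{i\in\mathbb{B}_{g_0}^{II}\backslash\{j\}}Kd-d=f(a,u_{jk})$, which is \eqref{eq:equiva2} (and incidentally makes condition~1 of Theorem~\ref{Necessary-IA-Feasibility} hold for the pruned profile). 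Flow conservation at $v_i$ gives $\sum_{j\ne i,k}f_{jk,i}^t=f(a,v_i)$, and the Step~3 update of $n_i$ together with $Kd\lfloor(c(a,v_i)-f(a,v_i))/(Kd)\rfloor\le c(a,v_i)-f(a,v_i)$ gives $K(n_i-Kd)\ge f(a,v_i)$, which is \eqref{eq:equiva3}. Flow conservation at $c_{jk,i}$ gives $f_{jk,i}^r+f_{jk,i}^t=f(c_{jk,i},b)$, so the last condition \eqref{eq:equiva1} (that is, $f_{jk,i}^r+f_{jk,i}^t\ge Kd$) holds \emph{iff the max flow saturates every sink edge} $(c_{jk,i},b)$, whose capacity is $Kd$. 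Granting this, Corollary~\ref{Equivalent-Condition} (with $N\ge Kd$) delivers the three conditions of Theorem~\ref{Necessary-IA-Feasibility} for $\mathcal{L}_0$, and then Theorem~\ref{Sufficient-IA-Feasibility} (using $d\mid n_i$) would yield the IA feasibility of $\mathcal{L}_0$, which would finish the proof.

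The main obstacle is thus the saturation claim: that the maximum flow of $\mathcal{N}$ equals the total sink capacity $\sum_{j,k}\sum_{i\in\mathbb{B}_{g_0}^I\backslash\{j\}}Kd=K^2 d\,g_0(G-1)$. By the max-flow/min-cut theorem this is equivalent to every $a$--$b$ cut having capacity at least $K^2 d\,g_0(G-1)$, and because a cost-minimal cut of $\mathcal{N}$ keeps a subset of the nodes $\{u_{jk}\}$ and $\{v_i\}$ on the source side and severs the sink edges of the $c$-nodes that those retained sources dominate, this is precisely the assertion that the \emph{unpruned} profile $\{m_{jk}=M,\ g_0,\ n_i=N_0\}$ obeys \eqref{eq:necessary_feasibility_condition} (and condition~1 of Theorem~\ref{Necessary-IA-Feasibility}, which is what makes the capacities $c(a,u_{jk})$ non-negative so that $\mathcal{N}$ is well defined in the first place). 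I would prove this by minimizing the left-minus-right side of \eqref{eq:necessary_feasibility_condition} over $\mathcal{J}_{sub}^{[r]}\subseteq\{(j,k)\}$ and $\mathcal{J}_{sub}^{[t]}\subseteq\mathbb{B}_{g_0}^I$: a selected BS may always contribute all $K$ of its MSs (harmless, each bracket of the first sum being non-negative by condition~1), reducing the minimization to a choice of a BS-subset and a type-I-BS-subset, whose discrete optimum is then evaluated in closed form. The configuration ``all MSs, all type-I BSs'' recovers exactly $g_0(N_0-Kd)\ge G((G-1)Kd-M+d)$, i.e., the definition of $g_0$ as the ceiling in \eqref{eq:s_b_expression}. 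The delicate point is that this configuration need not be the worst one --- in the antenna-scarce regime the first sum of \eqref{eq:necessary_feasibility_condition} becomes the binding term --- so the optimization must be split into the antenna-rich and antenna-scarce cases and each shown to be dominated by the choice of $g_0$; carrying this case split through is where the real work lies.
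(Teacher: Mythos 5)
Your road map is the same as the paper's: the paper proves this theorem precisely by invoking Corollary~\ref{Equivalent-Condition} together with the max-flow graph $\mathcal{N}$ of Algorithm~3, reading off $f_{jk,i}^{r}=f(u_{jk},c_{jk,i})$ and $f_{jk,i}^{t}=f(v_{i},c_{jk,i})$, and then passing through the divisibility clause $d\mid n_{i}$ (guaranteed because $N_{0}$ is a multiple of $d$ and Step~3 prunes in multiples of $d$) to land in Theorem~\ref{Sufficient-IA-Feasibility}. Your verification of (\ref{eq:equiva2}) and (\ref{eq:equiva3}) from flow conservation and the Step~3 update formulas is correct (including the observation that integrality of the max flow makes the $m_{jk}$ update exact), and your identification of the crux --- that (\ref{eq:equiva1}) holds iff every sink edge $(c_{jk,i},b)$ is saturated, which by max-flow/min-cut is equivalent to the \emph{unpruned} profile $\{m_{jk}=M,\,g_{0},\,n_{i}=N_{0}\}$ satisfying (\ref{eq:necessary_feasibility_condition}) --- is exactly the right reduction.

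The gap is that you stop at the reduction. The entire content of the theorem is that the \emph{specific} initialization $g_{0}=\bigl\lceil G\bigl((G-1)Kd-M+d\bigr)/(N_{0}-Kd)\bigr\rceil$ with $n_{i}=N_{0}$ forces every one of the subset inequalities in (\ref{eq:necessary_feasibility_condition}) to hold, and you verify only the single configuration $\mathcal{J}_{sub}^{[r]}=$ all MSs, $\mathcal{J}_{sub}^{[t]}=\mathbb{B}_{g_{0}}^{I}$, while explicitly conceding that this need not be the binding one and deferring the antenna-rich/antenna-scarce case split ``where the real work lies.'' Note also that even condition~1 of Theorem~\ref{Necessary-IA-Feasibility} for the initialized profile, i.e.\ $M-(G-g_{0})Kd-d\geq0$ (equivalently $g_{0}\geq G-(M-d)/(Kd)$), which you need just so the capacities $c(a,u_{jk})$ are non-negative and $\mathcal{N}$ is well defined, does not follow from the crude bound $N_{0}-Kd\leq(G-1)Kd$ alone --- one must actually use the ceiling and the integrality of the parameters. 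Until that finite optimization over $\mathcal{J}_{sub}^{[r]},\mathcal{J}_{sub}^{[t]}$ is carried out and shown to be dominated by the choice of $g_{0}$, the proof is a correct skeleton with its load-bearing step missing.
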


\subsection{Asymptotic Optimality of the Proposed Greedy Solution}

In this section, we further show that $\mathcal{L}_{0}$ is in fact
asymptotically optimal. To do this, we relax the constraint (\ref{eq:con_1-4})
in Problem \ref{Feedback-Dimension-Optimization-2} with its necessary
conditions in Theorem 1, and find a strict lower bound on the minimum
feedback dimension under the necessary conditions (through algebraic
manipulations). Specifically, we have the following bounds on the
optimal feedback dimension.
\begin{thm}
[Bounds on the Optimal Feedback Dimension]\label{Bounds-on-the-dimension}Suppose
$\mathcal{L}^{*}$ is the optimal solution of Problem \ref{Feedback-Dimension-Optimization-2},
then
\begin{equation}
D_{low}\leq D(\mathcal{L}^{*})\leq D(\mathcal{L}_{0})\label{eq:feedback_dimen_bound}
\end{equation}
where $D(\mathcal{L}_{0})$ is the feedback dimension induced by feedback
profile $\mathcal{L}_{0}$ and $D_{low}$ is given by: 
\begin{equation}
D_{low}=KGN_{1}g_{1}\left(M-(G-g_{1})Kd\right)-KG^{2}.\label{eq:lower_bound_fd}
\end{equation}

\end{thm}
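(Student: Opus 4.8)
The statement has two halves, and the upper bound $D(\mathcal{L}^*)\le D(\mathcal{L}_0)$ is immediate: by the preceding theorem the profile $\mathcal{L}_0$ returned by Algorithm~3 is a feasible point of Problem~\ref{Feedback-Dimension-Optimization-2}, and since $\mathcal{L}^*$ is by hypothesis a minimizer, $D(\mathcal{L}^*)\le D(\mathcal{L})$ for every feasible $\mathcal{L}$, hence in particular for $\mathcal{L}=\mathcal{L}_0$. All the work is in the lower bound $D_{low}\le D(\mathcal{L}^*)$.

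For the lower bound the plan is a relaxation argument. First I would discard the implicit constraint ``Constraints~\ref{IA-design-based-general-L} under $\mathcal{L}$'' in Problem~\ref{Feedback-Dimension-Optimization-2} and replace it by the weaker necessary conditions 1)--3) of Theorem~\ref{Necessary-IA-Feasibility}, keeping only the box constraints $m_{jk}\le M$, $n_i\le N$, $g\in\{0,\dots,G\}$. Every feasible $\mathcal{L}$ of Problem~\ref{Feedback-Dimension-Optimization-2} survives this relaxation, so $D(\mathcal{L}^*)$ is bounded below by the minimum of $D(\mathcal{L})$ over the enlarged set, and by Lemma~\ref{Maximum-Number-of} every profile there also satisfies $g\ge g_1$. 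I would then rewrite the objective (\ref{eq:sum_feedback_dimension_expression}) in the aggregated form
\[
D(\mathcal{L})=\Big(\sum_{i\in\mathbb{B}_g^{I}}n_i\Big)\Big(\sum_{j,k}A_{jk}\Big)+Kd\!\!\sum_{j\in\mathbb{B}_g^{II}}\sum_{k}A_{jk}-K\big(gG+G-g\big),
\]
with $A_{jk}=m_{jk}-\sum_{i\in\mathbb{B}_g^{II}\setminus\{j\}}Kd$. Since the middle term is non-negative and $gG+G-g=g(G-1)+G\le G^2$, it is enough to prove the bilinear bound $\big(\sum_{i\in\mathbb{B}_g^{I}}n_i\big)\big(\sum_{j,k}A_{jk}\big)\ge KGN_1\,g_1\,(M-(G-g_1)Kd)$ over the enlarged set.

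This bilinear bound is the heart of the proof. The key point is that when the number of type-I BSs is \emph{small}, the subset inequalities (\ref{eq:necessary_feasibility_condition}) are binding and push the surviving feedback resources $\{n_i\}$ and $\{A_{jk}\}$ up against their ceilings $N_1=\min(GKd,N)$ and $M-(G-g)Kd$. Concretely I would instantiate (\ref{eq:necessary_feasibility_condition}) --- or, equivalently, use a min-cut reading of the flow characterization in Corollary~\ref{Equivalent-Condition} --- at the full index sets $\mathcal{J}_{sub}^{[r]}=\{(j,k):\forall j,k\}$, $\mathcal{J}_{sub}^{[t]}=\mathbb{B}_g^{I}$, and combine it with the ceilings $A_{jk}\le M-(G-g)Kd$ (for $j\in\mathbb{B}_g^{I}$), $A_{jk}\le M-(G-g-1)Kd$ (for $j\in\mathbb{B}_g^{II}$) and $n_i\le N_1$ to extract matching lower bounds on each of the two factors $\sum_{i\in\mathbb{B}_g^{I}}n_i$ and $\sum_{j,k}A_{jk}$; the lower bound on $\sum_{i}(n_i-Kd)$ that drops out is in fact the very inequality underlying the definition of $g_1$. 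Multiplying the two per-factor bounds and minimizing the result over the integer $g\in[g_1,G]$ --- where the minimum is attained at $g=g_1$, since incrementing $g$ either introduces a fresh type-I feedback block or only enlarges the bilinear estimate --- yields $D(\mathcal{L})\ge KGN_1 g_1(M-(G-g_1)Kd)-KG^2=D_{low}$. (If the right side is non-positive the bound is vacuous, so one may assume $g_1\ge 1$, whence $m_{jk}\le M$ together with condition~1) gives $M-(G-g_1)Kd\ge d>0$.)

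I expect the last step --- the bilinear estimate uniform in $g\ge g_1$ --- to be the main obstacle. The naive bounds $\sum_i n_i\ge gKd$ and $\sum_{j,k}A_{jk}\ge GKd$ are far too loose, so one must mine the combinatorial family (\ref{eq:necessary_feasibility_condition}) for two simultaneously tight per-factor inequalities and then perform the discrete minimization over $g$ with the floor/ceiling bookkeeping concealed in $g_1$ and $N_1$. By contrast, the relaxation to Theorem~\ref{Necessary-IA-Feasibility}, the discarding of the non-negative intracell-type term, and the elementary inequality $gG+G-g\le G^2$ are all routine.
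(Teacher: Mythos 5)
Your treatment of the upper bound and your reduction of the lower bound are sound and follow exactly the route the paper announces: $D(\mathcal{L}^*)\le D(\mathcal{L}_0)$ by feasibility of $\mathcal{L}_0$ (Theorem 3), and for the lower bound one relaxes (\ref{eq:con_1-4}) to the necessary conditions of Theorem \ref{Necessary-IA-Feasibility}, invokes Lemma \ref{Maximum-Number-of} for $g\ge g_1$, and rewrites (\ref{eq:sum_feedback_dimension_expression}) as a bilinear term plus a non-negative term minus $K\bigl(g(G-1)+G\bigr)\ge -KG^{2}$. All of that bookkeeping is correct.

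The genuine gap is that the bilinear inequality $\bigl(\sum_{i}n_i\bigr)\bigl(\sum_{j,k}A_{jk}\bigr)\ge KGN_1g_1\bigl(M-(G-g_1)Kd\bigr)$ --- which is the entire substance of the lower bound --- is not proved; you explicitly defer it as "the main obstacle," and the plan you sketch for it does not go through as stated. Multiplying two per-factor lower bounds is only valid if each factor bound holds \emph{uniformly} over the relaxed feasible set, and while the first one does (the full-set instance of (\ref{eq:necessary_feasibility_condition}) together with $m_{jk}\le M$ gives $\sum_i n_i\ge gKd+G\bigl((G-1)Kd+d-M\bigr)\ge g_1N_1$ via the definition of $g_1$), the second, $\sum_{j,k}A_{jk}\ge GK\bigl(M-(G-g_1)Kd\bigr)$, is not a "ceiling" consequence at all: $M-(G-g)Kd$ is an \emph{upper} bound on $A_{jk}$, and for profiles with $g>g_1$ the necessary conditions permit $A_{jk}$ to be driven down toward $d$ by inflating the $n_i$, so $\sum_{j,k}A_{jk}$ can be as small as $GKd$. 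In the regime where $M-(G-g_1)Kd>d$ (which occurs, e.g., when $N<GKd$), the product of the uniformly valid per-factor bounds is $g_1N_1\cdot GKd$, which is strictly smaller than the claimed $KGN_1g_1\bigl(M-(G-g_1)Kd\bigr)$. What is actually needed is a minimization of the product $xy$ ($x=\sum_i n_i$, $y=\sum_{j,k}A_{jk}$) over the polytope cut out by the coupling constraint $Kx+y\ge GKd(gK+1)$ and the box bounds $x\le gN$, $y\le GKM-(G-g)(G-1)K^{2}d$; since $xy$ restricted to the binding hyperplane is concave, the minimum sits at a corner, and one must verify that every corner, for every integer $g\in[g_1,G]$ and both cases of $N_1=\min(GKd,N)$, dominates $KGN_1g_1\bigl(M-(G-g_1)Kd\bigr)$. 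That corner-by-corner argument (the paper's "algebraic manipulations") is absent from your proposal, so the lower bound is not established.
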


From Theorem \ref{Bounds-on-the-dimension}, we derive that $\mathcal{L}_{0}$
can achieve the asymptotic optimality of Problem \ref{Feedback-Dimension-Optimization-2}.
\begin{cor}
[Asymptotic Optimality of $\mathcal{L}_{0}$]\label{Asymptotic-Optimality-of}Suppose
the number of antennas $N$, $M$ are given by $N=\left\lfloor C_{1}KG\right\rfloor $,
$M=\left\lfloor C_{2}KG\right\rfloor $, where $0<C_{1},C_{2}<d$,
$d<C_{1}+C_{2}$. As $G\rightarrow\infty$, we have
\begin{equation}
\lim_{G\rightarrow\infty}\frac{D(\mathcal{L}^{*})}{G^{4}K^{3}}=\lim_{G\rightarrow\infty}\frac{D(\mathcal{L}_{0})}{G^{4}K^{3}}=\frac{\left(d-C_{1}\right)\left(d-C_{2}\right)^{2}}{C_{1}}.\label{eq:asymptotic}
\end{equation}
 \end{cor}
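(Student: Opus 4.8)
The plan is to establish \eqref{eq:asymptotic} by a squeeze argument built on Theorem~\ref{Bounds-on-the-dimension}, which gives $D_{low}\le D(\mathcal{L}^{*})\le D(\mathcal{L}_{0})$, together with the established feasibility of $\mathcal{L}_{0}$ (so that also $D(\mathcal{L}_{0})\ge D(\mathcal{L}^{*})\ge D_{low}$). Thus it suffices to show that, after division by $G^{4}K^{3}$, both $D_{low}$ and a convenient upper bound on $D(\mathcal{L}_{0})$ tend to the same constant $\tfrac{(d-C_{1})(d-C_{2})^{2}}{C_{1}}$; the asserted limits of $D(\mathcal{L}^{*})$ and $D(\mathcal{L}_{0})$ then follow at once.

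First I would extract the leading-order (in $G$) asymptotics of the quantities entering \eqref{eq:lower_bound_fd}. Since $0<C_{1}<d$, for all large $G$ we have $N=\lfloor C_{1}KG\rfloor<GKd$, so $N_{1}=\min(GKd,N)=N=C_{1}KG+O(1)$ and $N_{1}-Kd=C_{1}KG+O(1)$; likewise $(G-1)Kd-M+d=(d-C_{2})KG+O(1)$, whence
\[
g_{1}=\left\lfloor\frac{G\big((G-1)Kd-M+d\big)}{N_{1}-Kd}\right\rfloor=\frac{d-C_{2}}{C_{1}}\,G+O(1).
\]
The hypotheses $C_{2}<d$ and $d<C_{1}+C_{2}$ guarantee $0\le g_{1}\le G$ with $G-g_{1}=\tfrac{C_{1}+C_{2}-d}{C_{1}}\,G+O(1)>0$, so the type-I / type-II partition is non-degenerate; using the algebraic identity $C_{2}-\tfrac{d(C_{1}+C_{2}-d)}{C_{1}}=\tfrac{(d-C_{1})(d-C_{2})}{C_{1}}$ one then obtains $M-(G-g_{1})Kd=\tfrac{(d-C_{1})(d-C_{2})}{C_{1}}\,KG+O(1)$. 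Substituting these four estimates into \eqref{eq:lower_bound_fd} and keeping only the top-degree term yields
\[
D_{low}=\frac{(d-C_{1})(d-C_{2})^{2}}{C_{1}}\,G^{4}K^{3}+o(G^{4}K^{3}),
\]
since $-KG^{2}$ and all mixed terms are of lower order; hence $D_{low}/(G^{4}K^{3})\to\tfrac{(d-C_{1})(d-C_{2})^{2}}{C_{1}}$.

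Next I would bound $D(\mathcal{L}_{0})$ from above. In Algorithm 3 the number of type-II BSs is frozen at $g_{0}$, and Step 3 only \emph{decreases} each $n_{i}$ from $N_{0}$ and each $m_{jk}$ from $M$, hence also decreases $A_{jk}=m_{jk}-\sum_{i\in\mathbb{B}_{g_{0}}^{II}\backslash\{j\}}Kd$, which stays $\ge d>0$ by the feasibility of $\mathcal{L}_{0}$. Since every summand $n_{i}A_{jk}-1$ and $KdA_{jk}-1$ in \eqref{eq:sum_feedback_dimension_expression} is nondecreasing in $n_{i}$ and $m_{jk}$, we get $D(\mathcal{L}_{0})\le D(\mathcal{L}_{0}^{\mathrm{init}})$, where $\mathcal{L}_{0}^{\mathrm{init}}$ is the initial profile with $m_{jk}=M$, $n_{i}=N_{0}$, $g=g_{0}$. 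Because $C_{1}<d$ we have $N_{0}=\min(GKd,\lfloor N/d\rfloor d)=\lfloor N/d\rfloor d=C_{1}KG+O(1)$, and $g_{0}=\lceil\cdot\rceil$ differs from $g_{1}$ by $O(1)$, so repeating the estimates above gives
\[
D(\mathcal{L}_{0}^{\mathrm{init}})=GKg_{0}N_{0}\big(M-(G-g_{0})Kd\big)+o(G^{4}K^{3})=\frac{(d-C_{1})(d-C_{2})^{2}}{C_{1}}\,G^{4}K^{3}+o(G^{4}K^{3}),
\]
the remaining contributions (the type-I/type-II discrepancy in $A_{jk}$, the second double sum in \eqref{eq:sum_feedback_dimension_expression}, and the $-1$'s) being of lower order. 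Therefore $\limsup_{G}D(\mathcal{L}_{0})/(G^{4}K^{3})\le\tfrac{(d-C_{1})(d-C_{2})^{2}}{C_{1}}$, while $D(\mathcal{L}_{0})\ge D(\mathcal{L}^{*})\ge D_{low}$ supplies the matching $\liminf$; the analogous squeeze on $D(\mathcal{L}^{*})$ then completes the proof.

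The substance of the argument is the bookkeeping, and that is where I expect the main obstacle. One must verify that each floor and ceiling --- in $N$, $M$, $N_{0}$, $N_{1}$, $g_{0}$, $g_{1}$ --- perturbs the quantity it defines by only $O(1)$, and that all discarded pieces --- the linear-in-$G$ corrections inside the four linearly growing factors whose product forms the leading $G^{4}K^{3}$ term, the $-KG^{2}$ in \eqref{eq:lower_bound_fd}, the second double sum in \eqref{eq:sum_feedback_dimension_expression}, and the $-1$'s --- are genuinely $o(G^{4}K^{3})$. No single estimate is difficult, but the product of four factors of the form $cG+O(1)$ would need to be expanded with care so that the mixed lower-order contributions are correctly identified and dropped.
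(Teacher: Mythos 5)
Your proposal is correct and follows essentially the same route as the paper: a squeeze between $D_{low}$ and $D(\mathcal{L}_{0})$ via Theorem~\ref{Bounds-on-the-dimension}, driven by the asymptotics $g_{0},g_{1}=\frac{d-C_{2}}{C_{1}}G+\mathcal{O}(1)$. Your additional care in bounding $D(\mathcal{L}_{0})$ by the pre-pruning profile $\mathcal{L}_{0}^{\mathrm{init}}$ (using monotonicity of \eqref{eq:sum_feedback_dimension_expression} in $n_{i}$ and $m_{jk}$) is a detail the paper's one-line proof glosses over, but it does not change the argument.
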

\begin{proof}
As $G\rightarrow\infty$, we have $g_{0}=\frac{d-C_{2}}{C_{1}}G+\mathcal{O}(1)$
and $g_{1}=\frac{d-C_{2}}{C_{1}}G+\mathcal{O}(1)$. Substituting $g_{0}$
and $g_{1}$ into $D(\mathcal{L}_{0})$ and $D_{low}$, we obtain
\[
\lim_{G\rightarrow\infty}\frac{D(\mathcal{L}_{0})}{G^{4}K^{3}}=\lim_{G\rightarrow\infty}\frac{D_{low}}{G^{4}K^{3}}=\frac{\left(d-C_{1}\right)\left(d-C_{2}\right)^{2}}{C_{1}}.
\]
From this and (\ref{eq:feedback_dimen_bound}), the corollary is proved. 
\end{proof}

\begin{remrk}
[Interpretation of Corollary \ref{Asymptotic-Optimality-of}]Corollary
\ref{Asymptotic-Optimality-of} depicts the scaling law of the optimal
feedback dimension w.r.t. the size of the network $G$ and (\ref{eq:asymptotic})
indicates that the proposed greedy solution $\mathcal{L}_{0}$ is
an asymptotically optimal solution to Problem \ref{Feedback-Dimension-Optimization-2}.
Furthermore, using Lemma \ref{Maximum-Number-of} and Corollary \ref{Asymptotic-Optimality-of},
we can infer that the asymptotic optimal $\mathcal{L}_{0}$ has the
largest number of type-II BSs. 
\end{remrk}

From (\ref{eq:asymptotic}), the value of $\lim_{G\rightarrow\infty}\frac{D(\mathcal{L}^{*})}{G^{4}K^{3}}$
gets larger as $d$ increases ($0<C_{1},C_{2}<d$). This agrees with
our intuition that we should pay a larger CSI feedback overhead as
the required IA DoF increases in the network for a given number of
antennas. 
\begin{cor}
[Performance Comparison]Under the same setup as in Corollary 2,
the ratio between the feedback dimension of $\mathcal{L}_{0}$ and
the full CDI feedback scheme (sum feedback dimension $D_{full}=G^{2}K(MN-1)$)
is given by
\begin{equation}
\Upsilon\triangleq\lim_{G\rightarrow\infty}\frac{D(\mathcal{L}_{0})}{D_{full}}=\frac{(d-C_{1})(d-C_{2})^{2}}{(C_{1})^{2}C_{2}}\overset{(x)}{<}1.\label{eq:inequality}
\end{equation}
\hfill \QED
\end{cor}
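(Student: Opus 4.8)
The plan is to obtain the result by combining Corollary~\ref{Asymptotic-Optimality-of} with a short asymptotic estimate of the full-CDI feedback dimension $D_{full}$ and an elementary algebraic inequality. First I would observe that Corollary~\ref{Asymptotic-Optimality-of} already supplies the numerator growth rate, $\lim_{G\to\infty} D(\mathcal{L}_0)/(G^4K^3) = (d-C_1)(d-C_2)^2/C_1$, so the whole computation reduces to finding the growth rate of $D_{full} = G^2K(MN-1)$ and taking the ratio of the two limits.

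For $D_{full}$, I would use $N = \lfloor C_1KG\rfloor = C_1KG + \mathcal{O}(1)$ and $M = \lfloor C_2KG\rfloor = C_2KG + \mathcal{O}(1)$, which give $MN = C_1C_2K^2G^2 + \mathcal{O}(G)$ and hence $D_{full} = G^2K(MN-1) = C_1C_2K^3G^4 + \mathcal{O}(G^3)$. Dividing by $G^4K^3$ yields $\lim_{G\to\infty} D_{full}/(G^4K^3) = C_1C_2 > 0$, so the limit of the ratio equals the ratio of the limits:
\[
\Upsilon = \frac{\lim_{G\to\infty} D(\mathcal{L}_0)/(G^4K^3)}{\lim_{G\to\infty} D_{full}/(G^4K^3)} = \frac{(d-C_1)(d-C_2)^2/C_1}{C_1C_2} = \frac{(d-C_1)(d-C_2)^2}{C_1^2C_2},
\]
which is the closed form asserted in the corollary.

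It then remains to verify the strict inequality $(x)$, i.e. $(d-C_1)(d-C_2)^2 < C_1^2C_2$. Here I would set $a \triangleq d - C_1 > 0$ and $b \triangleq d - C_2 > 0$; the hypothesis $d < C_1 + C_2$ is equivalent to $a + b < d$, which forces $C_1 = d - a > b$ and $C_2 = d - b > a$. Since $d - a > b > 0$ we get $(d-a)^2 > b^2$, and since $d - b > a > 0$ as well, multiplying the two positive inequalities gives $(d-a)^2(d-b) > a b^2$, i.e. $C_1^2 C_2 > (d-C_1)(d-C_2)^2$, which is precisely $\Upsilon < 1$.

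I do not expect any genuine obstacle: every step is either a direct appeal to Corollary~\ref{Asymptotic-Optimality-of}, a routine floor-function asymptotic, or the two-line inequality above. The only points that need a little care are keeping track of the $\mathcal{O}(1)$ corrections coming from the floor functions and of the $-1$ inside $MN-1$ when passing to the limit, and noting that the standing assumptions $0 < C_1, C_2 < d$ and $d < C_1 + C_2$ are exactly what is needed to make the final ratio strictly less than one.
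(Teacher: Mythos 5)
Your proposal is correct and follows essentially the same route the paper intends: divide the limit from Corollary~2 by the asymptotic rate $D_{full}/(G^4K^3)\to C_1C_2$, and then deduce the strict inequality $(x)$ from $0<C_1,C_2<d$ and $d<C_1+C_2$. Your explicit two-line verification of $(d-C_1)(d-C_2)^2<C_1^2C_2$ (via $C_1>d-C_2>0$ and $C_2>d-C_1>0$) is a valid filling-in of the step the paper only asserts in its remark.
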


Note that $(x)$ comes from $\forall i$, $0<C_{1},C_{2}<d$, $d<C_{1}+C_{2}$
as in Corollary \ref{Asymptotic-Optimality-of}. (\ref{eq:inequality})
further implies that larger values of $C_{1}$, $C_{2}$ with $0<C_{1},C_{2}<d$,
$d<C_{1}+C_{2}$ tends to have smaller $\Upsilon$ and hence the proposed
scheme achieves larger CSI feedback reduction gain. This is because
a larger number of antennas at the BSs and MSs (larger $C_{1}$, $C_{2}$)
leads to a larger design space for CSI filtering and hence better
schemes may be obtained.

\section{Relationship Between CSI Feedback Dimension and Feedback Bits}

Recall that in Section II, we propose a novel metric (feedback dimension
$D$) to quantify the effectiveness of CSI feedback filtering. In
this section, we justify the physical meaning of \emph{$D$} in MIMO
cellular networks by deriving the scaling relationship between the
CSI feedback bits $B_{tot}$ and the CSI feedback dimension $D(\mathcal{L})$.
Specifically, we show that when $B_{tot}$ scales with $D$ and SNR
as $B_{tot}=D(\mathcal{L})\log\textrm{SNR}$, the sum DoF of $KGd$
can be achieved in the MIMO cellular network. This result indicates
that the proposed \emph{feedback dimension} can serve as a first-order
measurement of the CSI feedback overhead, and highlights the importance
of feedback dimension optimization in MIMO cellular networks.

\subsection{MIMO Cellular Networks with Limited CSI Feedback Bits}

Suppose that we deploy a feasible feedback profile $\mathcal{L}$
(feasible solution to Problem \ref{Feedback-Dimension-Optimization-2})
in the MIMO cellular network with a total of $B_{tot}$ CSI feedback
bits to quantize and feedback the partial CSI $\{F_{jk}:\forall j,k\}$
generated at the MSs (block (b) in Figure \ref{fig:Role-of-CSI}).
Assume $b$ bits per each feedback dimension and then $B_{tot}=bD(\mathcal{L})$. 

To begin with, we illustrate how the elements in $F_{jk}$ are quantized
using the Grassmannian codebook. We quantize the direction information
$\mathbb{P}(\mathbf{H})$ of the matrix $\mathbf{H}$ by first stacking
$\mathbf{H}$ into a long vector $\textrm{vec}(\mathbf{H})$, and
then quantizing the normalized vector $\mathbf{h}\triangleq\frac{1}{||\mathbf{H}||}\textrm{vec}(\mathbf{H})$
to be $\mathbf{\hat{h}}$, $||\mathbf{\hat{h}}||=1$, with the Grassmannian
vector codebooks \cite{dai2008quantization}. We recover the quantized
version of $\mathbf{H}$ (denoted as $\hat{\mathbf{H}}$) by reverse-stacking
$\mathbf{\hat{h}}$. Based on this quantization approach, we denote
the quantized version of $\mathbf{H}_{jk,i}^{e}$ in $\{F_{jk}\}$
as $\mathbf{\hat{H}}_{jk,i}^{e}$. The relationship between $\mathbf{H}_{jk,i}^{e}$
and $\mathbf{\hat{H}}_{jk,i}^{e}$ can be expressed as 
\[
\mathbf{H}_{jk,i}^{e}=C_{jk,i}\mathbf{\hat{H}}_{jk,i}^{e}+\triangle_{jk,i},\,\forall j,k,i\in\mathbb{B}_{g}^{I}\bigcup\{j\}
\]
where $\{C_{jk,i}\}$ are certain scalars, $\triangle_{jk,i}$ is
the quantization distortion part and $\textrm{vec}(\triangle_{jk,i})$
lies in the orthogonal complement space of $\textrm{vec}(\mathbf{\hat{H}}_{jk,i}^{e})$
\cite{dai2008quantization}. 
\begin{lemma}
[CSI Quantization Distortion]\label{CSI-Quantization-Distortion}Denote
$\mathbb{E}\left(||\triangle_{jk,i}||^{2}\right)$ as the average
quantization distortion of $\mathbf{H}_{jk,i}^{e}$, we have
\[
\mathbb{E}\left(||\triangle_{jk,i}||^{2}\right)=(B_{jk,i}-1)2^{-b},\quad\forall j,k,i\in\mathbb{B}_{g}^{I}\bigcup\{j\}
\]
where $B_{jk,i}$ is given in (\ref{eq:B_definition}).
\end{lemma}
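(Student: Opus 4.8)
The plan is to reduce the claim to two ingredients: (i) that the effective CSI $\mathbf{H}_{jk,i}^{e}$ still has i.i.d.\ $\mathcal{CN}(0,1)$ entries, so that its squared Frobenius norm has mean $B_{jk,i}$ and its vectorized direction is isotropic; and (ii) the Grassmannian quantization rate--distortion law of \cite{dai2008quantization}. First I would carry out a Pythagorean decomposition of the distortion. Since $\hat{\mathbf{H}}_{jk,i}^{e}$ is obtained by reverse-stacking a unit vector it has unit Frobenius norm, and $C_{jk,i}$ is exactly the projection coefficient that makes $\triangle_{jk,i}$ orthogonal to $\hat{\mathbf{H}}_{jk,i}^{e}$; hence
\[
||\triangle_{jk,i}||^{2}=||\mathbf{H}_{jk,i}^{e}||^{2}-|C_{jk,i}|^{2}=||\mathbf{H}_{jk,i}^{e}||^{2}\bigl(1-\cos^{2}\theta_{jk,i}\bigr)=||\mathbf{H}_{jk,i}^{e}||^{2}\sin^{2}\theta_{jk,i},
\]
where $\cos\theta_{jk,i}=|\mathbf{h}_{jk,i}^{\dagger}\hat{\mathbf{h}}_{jk,i}|$, with $\mathbf{h}_{jk,i}=\textrm{vec}(\mathbf{H}_{jk,i}^{e})/||\mathbf{H}_{jk,i}^{e}||$ the true CSI direction and $\hat{\mathbf{h}}_{jk,i}$ its Grassmannian quantization.

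Next I would establish the law of $\mathbf{H}_{jk,i}^{e}$. The crucial point is that the semi-unitary matrix $\mathbf{R}_{jk}$ in (\ref{eq:s_r_expression}) is a function only of the type-II cross links $\{\mathbf{H}_{jk,i'}^{s}\mathbf{T}_{i'}^{II}:i'\in\mathbb{B}_{g}^{II}\backslash\{j\}\}$, which by Assumption \ref{Channel-MatricesAssume} is statistically independent of $\mathbf{H}_{jk,i}$ for every $i\in\mathbb{B}_{g}^{I}\bigcup\{j\}$. Conditioning on $\mathbf{R}_{jk}$, the effective CSI $\mathbf{H}_{jk,i}^{e}$ in (\ref{eq:effective_CSi}) is the image of an i.i.d.\ $\mathcal{CN}(0,1)$ submatrix under left multiplication by the semi-unitary $(\mathbf{R}_{jk})^{\dagger}$ and, for the $i=j$ term with $j\in\mathbb{B}_{g}^{II}$, right multiplication by the semi-unitary $\mathbf{T}_{j}^{II}$; such maps preserve the i.i.d.\ $\mathcal{CN}(0,1)$ property, and the number of surviving entries is precisely $B_{jk,i}$ by (\ref{eq:B_definition}). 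Since this conditional distribution does not depend on $\mathbf{R}_{jk}$, it is also the unconditional one, i.e.\ $\textrm{vec}(\mathbf{H}_{jk,i}^{e})\sim\mathcal{CN}(\mathbf{0},\mathbf{I}_{B_{jk,i}})$. In particular $\mathbb{E}||\mathbf{H}_{jk,i}^{e}||^{2}=B_{jk,i}$, and by isotropy the direction $\mathbf{h}_{jk,i}$ is uniform on the unit sphere of $\mathbb{C}^{B_{jk,i}}$ and is statistically independent of the magnitude $||\mathbf{H}_{jk,i}^{e}||$.

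I would then assemble the pieces. Because the codebook is fixed and independent of the channel realization, $\theta_{jk,i}$ depends on the channel only through $\mathbf{h}_{jk,i}$, so $\sin^{2}\theta_{jk,i}$ is independent of $||\mathbf{H}_{jk,i}^{e}||^{2}$ and
\[
\mathbb{E}||\triangle_{jk,i}||^{2}=\mathbb{E}||\mathbf{H}_{jk,i}^{e}||^{2}\cdot\mathbb{E}[\sin^{2}\theta_{jk,i}]=B_{jk,i}\,\mathbb{E}[\sin^{2}\theta_{jk,i}].
\]
Finally, $\mathbb{P}(\mathbf{H}_{jk,i}^{e})$ is a point of $\mathbb{G}(1,B_{jk,i})$, whose feedback dimension is $B_{jk,i}-1$; with $b$ bits per feedback dimension this is a codebook of size $2^{b(B_{jk,i}-1)}$, and the Grassmannian rate--distortion result of \cite{dai2008quantization} gives $\mathbb{E}[\sin^{2}\theta_{jk,i}]=\frac{B_{jk,i}-1}{B_{jk,i}}2^{-b}$. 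Substituting yields $\mathbb{E}||\triangle_{jk,i}||^{2}=(B_{jk,i}-1)2^{-b}$.

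The step I expect to be the main obstacle is the distributional argument of the second paragraph: one must argue carefully that, even though $\mathbf{R}_{jk}$ is a channel-dependent random semi-unitary matrix, it is built solely from the type-II cross links $\mathbf{H}_{jk,i'}$, $i'\in\mathbb{B}_{g}^{II}\backslash\{j\}$, and is therefore independent of exactly those channels $\mathbf{H}_{jk,i}$, $i\in\mathbb{B}_{g}^{I}\bigcup\{j\}$, that feed into $F_{jk}$. This independence is what legitimizes conditioning on $\mathbf{R}_{jk}$ and then dropping the conditioning; without it the direction of $\mathbf{H}_{jk,i}^{e}$ need not be isotropic, and the clean factorization into $B_{jk,i}\cdot\mathbb{E}[\sin^{2}\theta_{jk,i}]$ would break down. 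The Pythagorean decomposition and the invocation of the known Grassmannian rate--distortion behaviour are then routine.
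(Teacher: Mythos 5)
Your proposal is correct and follows what is essentially the paper's (and the standard) argument: the Pythagorean split $||\triangle_{jk,i}||^{2}=||\mathbf{H}_{jk,i}^{e}||^{2}\sin^{2}\theta_{jk,i}$, the observation that $\mathbf{R}_{jk}$ depends only on the type-II cross links and hence (by unitary invariance of the i.i.d.\ Gaussian submatrices) $\textrm{vec}(\mathbf{H}_{jk,i}^{e})\sim\mathcal{CN}(\mathbf{0},\mathbf{I}_{B_{jk,i}})$ with norm independent of direction, and the Grassmannian distortion-rate law of \cite{dai2008quantization} giving $\mathbb{E}[\sin^{2}\theta_{jk,i}]=\frac{B_{jk,i}-1}{B_{jk,i}}2^{-b}$ for a codebook of size $2^{b(B_{jk,i}-1)}$. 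You also correctly identify the one nontrivial point, namely the independence of $\mathbf{R}_{jk}$ from the channels entering $F_{jk}$, which is what preserves isotropy of the effective CSI.
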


Denote $\{\mathbf{\hat{T}}_{j}^{I}\in\mathbb{U}(N,Kd):j\in\mathbb{B}_{g}^{I}\}$,
$\{\mathbf{\hat{V}}_{jk}^{s}\in\mathbb{U}(Kd,d):\forall j,k\}$, $\{\mathbf{\hat{U}}_{jk}\in\mathbb{U}(N,d):\forall j,k\}$
as the designed outer precoders for type-I BSs, inner precoders for
all BSs, decorrelators for all MSs respectively, based on the quantized
CSI $\left\{ \mathbf{\hat{H}}_{jk,i}^{e}:\forall j,k,i\in\mathbb{B}_{g}^{I}\bigcup\{j\}\right\} $.
Due to the quantization of the feedback CSI, IA cannot be perfectly
achieved and there will be residual interference leakage. Denote the
residual interference covariance matrix at the $(j,k)$-th MS as $\mathbf{\Phi}_{jk}$,
then, 
\begin{equation}
\mathbf{\Phi}_{jk}=\frac{P}{Kd}\sum_{(i,p)\neq(j,k)}\left(\left(\mathbf{\hat{U}}_{jk}^{\dagger}\mathbf{H}_{jk,i}\hat{\mathbf{V}}_{ip}\right)\left(\mathbf{\hat{U}}_{jk}^{\dagger}\mathbf{H}_{jk,i}\hat{\mathbf{V}}_{ip}\right)^{\dagger}\right)\label{eq:interference_covariance_matrix}
\end{equation}
where $\hat{\mathbf{V}}_{ip}=\mathbf{\hat{T}}_{i}^{I}\mathbf{\hat{V}}_{ip}^{s}$,
$i\in\mathbb{B}_{g}^{I}$, $\hat{\mathbf{V}}_{ip}=\mathbf{T}_{i}^{II}\mathbf{\hat{V}}_{ip}^{s}$,
$i\in\mathbb{B}_{g}^{II}$. We have the following lemma on the average
residual interference leakage.
\begin{lemma}
[Residual Interference Bound]\label{Residual-Interference-BoundThe}Denote
$\mathbb{E}\left(\textrm{tr}(\mathbf{\Phi}_{jk})\right)$ as the average
interference leakage, then $\mathbb{E}\left(\textrm{tr}(\mathbf{\Phi}_{jk})\right)$
is upper bounded by
\[
\mathbb{E}\left(\textrm{tr}(\mathbf{\Phi}_{jk})\right)\leq\frac{P}{d}c_{jk}\cdot2^{-b}
\]
where $c_{jk}=\sum_{i\in\mathbb{B}_{g}^{I}\bigcup\{j\}}(B_{jk,i}-1)$.
\end{lemma}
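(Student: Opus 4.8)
The plan is to trace every residual-interference block in~(\ref{eq:interference_covariance_matrix}) back to a CSI quantization error $\triangle_{jk,i}$ and then invoke Lemma~\ref{CSI-Quantization-Distortion}. First I would observe that, exactly as in Lemma~\ref{Equivalence-of-Problem-II}, the quantized decorrelator retains the form $\mathbf{\hat{U}}_{jk}=[\mathbf{R}_{jk}\mathbf{\hat{\tilde{U}}}_{jk};\,\mathbf{0}]$, with $\mathbf{R}_{jk}$ computed \emph{exactly} at the MS from~(\ref{eq:s_r_expression}) and with $\mathbf{\hat{\tilde{U}}}_{jk}$, the type-I outer precoders $\mathbf{\hat{\tilde{T}}}_i^I$ (zero-padded to $\mathbf{\hat{T}}_i^I$) and the inner precoders $\mathbf{\hat{V}}_{jk}^s$ obtained from the \emph{quantized} effective CSI $\{\mathbf{\hat{H}}_{jk,i}^e\}$ via Algorithm~1 and~(\ref{eq:intra_cell_relation})--(\ref{eq:intra_cell_relation-1}). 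Substituting these forms into $\mathbf{\hat{U}}_{jk}^\dagger\mathbf{H}_{jk,i}\mathbf{\hat{V}}_{ip}$ and using $\mathbf{R}_{jk}^\dagger\mathbf{H}_{jk,i}^s\mathbf{T}_i^{II}=\mathbf{0}$ shows that the intercell contribution of every type-II BS $i\neq j$ vanishes identically, so the only surviving terms in~(\ref{eq:interference_covariance_matrix}) are the $K-1$ intracell terms from BS $j$ and the $K$ intercell terms from each type-I BS $i\in\mathbb{B}_g^I\setminus\{j\}$; each such block collapses algebraically to $\mathbf{\hat{\tilde{U}}}_{jk}^\dagger\mathbf{H}_{jk,i}^e\mathbf{\hat{\tilde{T}}}_i^I\mathbf{\hat{V}}_{ip}^s$ (when $j$ is a type-II BS the factor $\mathbf{T}_j^{II}$ is already absorbed into $\mathbf{H}_{jk,j}^e$).

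Next I would insert the quantization decomposition $\mathbf{H}_{jk,i}^e=C_{jk,i}\mathbf{\hat{H}}_{jk,i}^e+\triangle_{jk,i}$. Because the transceivers are designed to satisfy Constraints~\ref{Transformed-Problem-II} \emph{for the quantized CSI}, we have $\mathbf{\hat{\tilde{U}}}_{jk}^\dagger\mathbf{\hat{H}}_{jk,i}^e\mathbf{\hat{\tilde{T}}}_i^I=\mathbf{0}$ for $i\in\mathbb{B}_g^I\setminus\{j\}$, and~(\ref{eq:intra_cell_relation})--(\ref{eq:intra_cell_relation-1}) force the quantized intracell terms from BS $j$ to vanish as well (almost surely, since the relevant Gram matrix has a $d$-dimensional null space, which $v_{(d)}(\cdot)$ returns). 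Hence the $C_{jk,i}\mathbf{\hat{H}}_{jk,i}^e$ part drops out and each surviving block equals a pure quantization-error term of the form $\mathbf{\hat{\tilde{U}}}_{jk}^\dagger\triangle_{jk,i}\mathbf{\hat{\tilde{T}}}_i^I\mathbf{\hat{V}}_{ip}^s$. Since $\mathbf{\hat{\tilde{U}}}_{jk}$, $\mathbf{\hat{\tilde{T}}}_i^I$ and $\mathbf{\hat{V}}_{ip}^s$ are semi-unitary, multiplication by them cannot increase the Frobenius norm, so $\|\mathbf{\hat{U}}_{jk}^\dagger\mathbf{H}_{jk,i}\mathbf{\hat{V}}_{ip}\|^2\leq\|\triangle_{jk,i}\|^2$ for every channel realization.

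Finally I would sum over~(\ref{eq:interference_covariance_matrix}): the $K-1$ intracell contributions are each at most $\|\triangle_{jk,j}\|^2$ and, for each $i\in\mathbb{B}_g^I\setminus\{j\}$, the $K$ intercell contributions are each at most $\|\triangle_{jk,i}\|^2$, hence $\textrm{tr}(\mathbf{\Phi}_{jk})\leq\frac{P}{Kd}\cdot K\sum_{i\in\mathbb{B}_g^I\cup\{j\}}\|\triangle_{jk,i}\|^2=\frac{P}{d}\sum_{i\in\mathbb{B}_g^I\cup\{j\}}\|\triangle_{jk,i}\|^2$. Taking expectations and applying Lemma~\ref{CSI-Quantization-Distortion}, i.e.\ $\mathbb{E}(\|\triangle_{jk,i}\|^2)=(B_{jk,i}-1)2^{-b}$, gives $\mathbb{E}(\textrm{tr}(\mathbf{\Phi}_{jk}))\leq\frac{P}{d}\sum_{i\in\mathbb{B}_g^I\cup\{j\}}(B_{jk,i}-1)2^{-b}=\frac{P}{d}c_{jk}2^{-b}$, as claimed. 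I expect the delicate point to be the first step: confirming that the quantized transceivers inherit exactly the structure of Lemma~\ref{Equivalence-of-Problem-II} (zero padding, exact $\mathbf{R}_{jk}$, and exact nulling of the \emph{quantized} effective CSI), which is precisely what reduces each interference block to a pure quantization-error term; once that is in place, the remaining norm bookkeeping is routine.
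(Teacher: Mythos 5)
Your proof is correct and follows essentially the route the paper's bound is built on: exploit the exact (unquantized) cancellation of type-II intercell interference via $\mathbf{R}_{jk}$, reduce each surviving block to a pure quantization-error term $\hat{\tilde{\mathbf{U}}}_{jk}^{\dagger}\triangle_{jk,i}(\cdot)$ using the fact that the transceivers perfectly null the quantized effective CSI, bound each block's Frobenius norm by $\|\triangle_{jk,i}\|^{2}$ via semi-unitarity, and apply Lemma 5 — the counting ($K-1$ intracell plus $K$ per type-I BS, each weighted by $P/(Kd)$) reproduces $\frac{P}{d}c_{jk}2^{-b}$ exactly. The only implicit premise, that the quantized-CSI IA solution achieves zero leakage on $\{\hat{\mathbf{H}}_{jk,i}^{e}\}$, is the same assumption the paper itself relies on, so no gap.
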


\subsection{Throughput Analysis under Limited CSI Feedback Bits}

Denote $\{\mathbf{T}_{j}\}$, $\{\mathbf{V}_{jk}^{s}:\forall j,k\}$,
$\{\mathbf{U}_{jk}\forall j,k\}$ as the perfect CSIT IA transceivers.
Then the network throughput under perfect CSIT can be expressed as
\cite{gomadam2011distributed},
\begin{equation}
R_{per}=\sum_{j=1}^{G}\sum_{k=1}^{K}\mathbb{E}\left\{ \log\textrm{det}\left(\mathbf{I}_{d}+\frac{P}{Kd}(\mathbf{U}_{jk}^{\dagger}\mathbf{H}_{jk,j}\mathbf{T}_{j}\mathbf{V}_{jk}^{s})(\mathbf{U}_{jk}^{\dagger}\mathbf{H}_{jk,j}\mathbf{T}_{j}\mathbf{V}_{jk}^{s})^{\dagger}\right)\right\} .\label{eq:throughput_per_1-1}
\end{equation}

Following the above definition and treating the residual interference
due to CSI quantization as noise, the network throughput under limited
feedback can be expressed as
\begin{equation}
R_{lim}=\sum_{j=1}^{G}\sum_{k=1}^{K}\mathbb{E}\left\{ \log\textrm{det}\left(\mathbf{I}_{d}+\frac{P}{Kd}(\mathbf{\hat{U}}_{jk}^{\dagger}\mathbf{H}_{jk,j}\hat{\mathbf{V}}_{jk})(\mathbf{\hat{U}}_{jk}^{\dagger}\mathbf{H}_{jk,j}\hat{\mathbf{V}}_{jk})^{\dagger}\left(\mathbf{I}_{d}+\mathbf{\Phi}_{jk}\right)^{-1}\right)\right\} .\label{eq:pratical_throughput_1-1}
\end{equation}
We have the following throughput bounds regarding $R_{lim}$.
\begin{lemma}
[Throughput Bounds]\label{Throughput-Bounds}$R_{lim}$ is bounded
by
\end{lemma}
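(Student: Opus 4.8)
The plan is to bound each summand of $R_{lim}$ in (\ref{eq:pratical_throughput_1-1}) realization by realization, and to invoke the averaged residual-interference estimate (Lemma \ref{Residual-Interference-BoundThe}) only at the very end, on a single scalar quantity. Write $\mathbf{S}_{jk}\triangleq\frac{P}{Kd}\left(\mathbf{\hat{U}}_{jk}^{\dagger}\mathbf{H}_{jk,j}\hat{\mathbf{V}}_{jk}\right)\left(\mathbf{\hat{U}}_{jk}^{\dagger}\mathbf{H}_{jk,j}\hat{\mathbf{V}}_{jk}\right)^{\dagger}\succeq\mathbf{0}$ for the effective received-signal covariance under the quantized transceivers. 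Since $\mathbf{\Phi}_{jk}\succeq\mathbf{0}$ (from (\ref{eq:interference_covariance_matrix})), $\mathbf{I}_{d}+\mathbf{\Phi}_{jk}$ is invertible, and the identity $\det\!\left(\mathbf{I}_{d}+\mathbf{S}_{jk}(\mathbf{I}_{d}+\mathbf{\Phi}_{jk})^{-1}\right)=\det(\mathbf{I}_{d}+\mathbf{\Phi}_{jk}+\mathbf{S}_{jk})/\det(\mathbf{I}_{d}+\mathbf{\Phi}_{jk})$ lets each log-det term split into a numerator piece and a denominator piece that can be handled separately.

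For the lower bound, I would first use $\mathbf{\Phi}_{jk}\succeq\mathbf{0}$ to get $\mathbf{I}_{d}+\mathbf{\Phi}_{jk}+\mathbf{S}_{jk}\succeq\mathbf{I}_{d}+\mathbf{S}_{jk}$, hence $\log\det(\mathbf{I}_{d}+\mathbf{\Phi}_{jk}+\mathbf{S}_{jk})\geq\log\det(\mathbf{I}_{d}+\mathbf{S}_{jk})$ pointwise by monotonicity of $\det$ on PSD matrices. For the denominator, concavity of $x\mapsto\log(1+x)$ gives $\log\det(\mathbf{I}_{d}+\mathbf{\Phi}_{jk})=\sum_{l=1}^{d}\log\!\left(1+\lambda_{l}(\mathbf{\Phi}_{jk})\right)\leq d\log\!\left(1+\tfrac{1}{d}\textrm{tr}(\mathbf{\Phi}_{jk})\right)$; a second application of Jensen's inequality on the outer expectation, followed by Lemma \ref{Residual-Interference-BoundThe}, yields $\mathbb{E}\!\left[\log\det(\mathbf{I}_{d}+\mathbf{\Phi}_{jk})\right]\leq d\log\!\left(1+\tfrac{P}{d^{2}}c_{jk}2^{-b}\right)$. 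Recombining and summing over $(j,k)$ gives the lower bound $R_{lim}\geq\sum_{j,k}\mathbb{E}\!\left[\log\det(\mathbf{I}_{d}+\mathbf{S}_{jk})\right]-\sum_{j,k}d\log\!\left(1+\tfrac{P}{d^{2}}c_{jk}2^{-b}\right)$.

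For the upper bound, I would apply Sylvester's determinant identity to rewrite $\log\det\!\left(\mathbf{I}_{d}+\mathbf{S}_{jk}(\mathbf{I}_{d}+\mathbf{\Phi}_{jk})^{-1}\right)=\log\det\!\left(\mathbf{I}_{d}+\mathbf{S}_{jk}^{1/2}(\mathbf{I}_{d}+\mathbf{\Phi}_{jk})^{-1}\mathbf{S}_{jk}^{1/2}\right)$, and then use $(\mathbf{I}_{d}+\mathbf{\Phi}_{jk})^{-1}\preceq\mathbf{I}_{d}$ together with congruence-invariance of the Loewner order to get $\mathbf{S}_{jk}^{1/2}(\mathbf{I}_{d}+\mathbf{\Phi}_{jk})^{-1}\mathbf{S}_{jk}^{1/2}\preceq\mathbf{S}_{jk}$. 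Monotonicity of $\det$ then bounds each summand by $\mathbb{E}[\log\det(\mathbf{I}_{d}+\mathbf{S}_{jk})]$, i.e. $R_{lim}\leq\sum_{j,k}\mathbb{E}[\log\det(\mathbf{I}_{d}+\mathbf{S}_{jk})]$, so the two-sided bound is a constant (in SNR) gap around the ``quantized-transceiver, interference-free'' rate $\sum_{j,k}\mathbb{E}[\log\det(\mathbf{I}_{d}+\mathbf{S}_{jk})]$; setting $b=\log\textrm{SNR}$ later makes the gap $\sum_{j,k}d\log(1+\tfrac{c_{jk}}{d^{2}})=\mathcal{O}(1)$, which is the mechanism behind the DoF claim in Section V.

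The main obstacle is that $\mathbf{S}_{jk}$ and $\mathbf{\Phi}_{jk}$ are statistically dependent — both are deterministic functions of the true channels $\{\mathbf{H}_{jk,i}\}$ and of their quantizations — so the expectation of the log-det difference cannot be factored into a product of expectations. The remedy, as above, is to keep every matrix inequality pointwise, cancelling the dependent $\mathbf{S}_{jk}$-numerator against the $\mathbf{S}_{jk}$ term and deferring averaging until only $\mathbb{E}[\textrm{tr}(\mathbf{\Phi}_{jk})]$ remains, which is precisely the quantity controlled by Lemma \ref{Residual-Interference-BoundThe}. The only other technical points — validity of the two Jensen steps and of Sylvester's identity for the $d\times d$ factors — are routine and can be dispatched in a line each.
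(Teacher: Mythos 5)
Your pointwise matrix manipulations are all valid: the identity $\det(\mathbf{I}_{d}+\mathbf{S}_{jk}(\mathbf{I}_{d}+\mathbf{\Phi}_{jk})^{-1})=\det(\mathbf{I}_{d}+\mathbf{\Phi}_{jk}+\mathbf{S}_{jk})/\det(\mathbf{I}_{d}+\mathbf{\Phi}_{jk})$, the Loewner-monotonicity step for the numerator, the Sylvester/congruence argument for the upper bound, and the two Jensen applications that reduce $\mathbb{E}[\log\det(\mathbf{I}_{d}+\mathbf{\Phi}_{jk})]$ to $d\log(1+\tfrac{P}{d^{2}}c_{jk}2^{-b})$ via Lemma \ref{Residual-Interference-BoundThe} are all correct and in the right direction. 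However, what you end up sandwiching $R_{lim}$ around is $\sum_{j,k}\mathbb{E}[\log\det(\mathbf{I}_{d}+\mathbf{S}_{jk})]$, whereas the lemma asserts bounds around $R_{per}$, which is defined in (\ref{eq:throughput_per_1-1}) with the \emph{perfect-CSIT} transceivers and is further claimed to equal the explicit Wishart integral $GKd\int_{0}^{\infty}\log(1+\tfrac{P}{Kd}v)f(v)\,\mathrm{d}v$. You never establish $\sum_{j,k}\mathbb{E}[\log\det(\mathbf{I}_{d}+\mathbf{S}_{jk})]=R_{per}$, and without it neither inequality of the lemma as stated is proved.

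This identification is the distributional heart of the lemma and is not automatic: it requires observing that $\hat{\mathbf{U}}_{jk}$ and $\hat{\mathbf{V}}_{jk}=\hat{\mathbf{T}}_{j}\hat{\mathbf{V}}_{jk}^{s}$ are statistically independent of the direct link $\mathbf{H}_{jk,j}$ (by the two-stage structure, $\hat{\mathbf{U}}_{jk}$ is built from cross-link CSI $\mathbf{H}_{jk,i}$, $i\neq j$, and $\hat{\mathbf{V}}_{jk}^{s}$ from the direct links of the \emph{other} MSs $\mathbf{H}_{jp,j}$, $p\neq k$, all independent of $\mathbf{H}_{jk,j}$), and that both are semi-unitary. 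By unitary invariance of the i.i.d.\ complex Gaussian ensemble, $\hat{\mathbf{U}}_{jk}^{\dagger}\mathbf{H}_{jk,j}\hat{\mathbf{V}}_{jk}$ is then a $d\times d$ matrix with i.i.d.\ $\mathcal{CN}(0,1)$ entries — exactly as in the perfect-CSIT case — so $(\hat{\mathbf{U}}_{jk}^{\dagger}\mathbf{H}_{jk,j}\hat{\mathbf{V}}_{jk})(\cdot)^{\dagger}\sim\mathbf{W}_{d}(\mathbf{I},d)$ and each term $\mathbb{E}[\log\det(\mathbf{I}_{d}+\mathbf{S}_{jk})]$ equals $d\int_{0}^{\infty}\log(1+\tfrac{P}{Kd}v)f(v)\,\mathrm{d}v=R_{per}/(GK)$. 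Adding this step closes the argument; as written, the proof bounds $R_{lim}$ around an unidentified quantity and the stated equalities with $R_{per}$ and the Wishart integral are left unjustified.
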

\begin{equation}
GKd\int_{0}^{\infty}\log\left(1+\frac{P}{Kd}\cdot v\right)\cdot f(v)\textrm{d}v=R_{per}\geq R_{lim}\geq R_{lb}=R_{per}-\sum_{j=1}^{G}\sum_{k=1}^{K}d\cdot\log\left(1+\frac{P}{d^{2}}c_{jk}\cdot2^{-b}\right)\label{eq:throughput-bound}
\end{equation}
where $f(v)$ is the marginal probability density function (p.d.f.)
of the unordered eigenvalues of the \textit{$(d\times d)$ central
Wishart} matrix with $d$ degrees of freedom and covariance matrix
$\mathbf{I}$ ($\mathbf{W}_{d}(\mathbf{I},\; d)$) \cite{tulino2004random}
(pp 32-33).

Based on Lemma \ref{Throughput-Bounds}, we have the following Theorem.
\begin{thm}
[Scaling Law Between CSI Feedback Bits and Feedback Dimension]\label{Feedback-Bits-Saling}When
the total number of CSI feedback bits $B_{tot}$ is given by: 
\begin{equation}
B_{tot}=D(\mathcal{L})\log P\label{eq:bits_scaling}
\end{equation}
the MIMO cellular network can achieve the sum DoF of $GKd$ data streams,
i.e., 
\[
\lim_{P\rightarrow\infty}\frac{R_{lim}}{\log P}=GKd.
\]
\end{thm}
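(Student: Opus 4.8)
The plan is to invoke the sandwich $R_{per}\ge R_{lim}\ge R_{lb}$ from Lemma~\ref{Throughput-Bounds} and show that both $R_{per}$ and $R_{lb}$ grow like $GKd\log P$, so that the squeeze theorem pins down the DoF of $R_{lim}$. The first ingredient is to translate the hypothesis: since $B_{tot}=bD(\mathcal{L})$ and, by assumption, $B_{tot}=D(\mathcal{L})\log P$, we get $b=\log P$, hence $2^{-b}$ is of order $1/P$ (taking $b=\lfloor\log P\rfloor$ if an integer is required only changes this by a factor $\le 2$). Substituting into the penalty term of $R_{lb}$ in (\ref{eq:throughput-bound}), $\tfrac{P}{d^{2}}c_{jk}2^{-b}$ becomes a quantity bounded by a constant depending only on $c_{jk}=\sum_{i\in\mathbb{B}_{g}^{I}\cup\{j\}}(B_{jk,i}-1)$, which in turn depends only on the feedback profile $\mathcal{L}$ and the antenna configuration, not on $P$. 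Therefore $\sum_{j,k}d\log\!\big(1+\tfrac{P}{d^{2}}c_{jk}2^{-b}\big)=\mathcal{O}(1)$, and in particular it is $o(\log P)$.

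Second, I would show $\lim_{P\to\infty}R_{per}/\log P=GKd$. Using $\int_{0}^{\infty}f(v)\,\textrm{d}v=1$ and writing $\log\!\big(1+\tfrac{P}{Kd}v\big)=\log P-\log(Kd)+\log\!\big(\tfrac{Kd}{P}+v\big)$, the residual integrand $\log(\tfrac{Kd}{P}+v)$ decreases monotonically to $\log v$ for each $v>0$ as $P\to\infty$. Since $f$ is the p.d.f.\ of the (positive) unordered eigenvalues of $\mathbf{W}_{d}(\mathbf{I},d)$, for which $\mathbb{E}[\log V]$ is finite and $\mathbb{E}[V]$ is finite, the monotone convergence theorem gives $\int_{0}^{\infty}\log\!\big(1+\tfrac{P}{Kd}v\big)f(v)\,\textrm{d}v=\log P+\mathcal{O}(1)$, whence $R_{per}=GKd\log P+\mathcal{O}(1)$. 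This implicitly uses that the effective direct channel $\mathbf{U}_{jk}^{\dagger}\mathbf{H}_{jk,j}\mathbf{T}_{j}\mathbf{V}_{jk}^{s}$ is full rank $d$ almost surely, which holds by (\ref{eq:MIA-1}) whenever $\mathcal{L}$ is feasible, so that the eigenvalues entering $f$ are genuinely positive.

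Combining the two estimates, $R_{lb}=R_{per}-\mathcal{O}(1)=GKd\log P+\mathcal{O}(1)$, so $\lim_{P\to\infty}R_{lb}/\log P=GKd$; together with $\lim_{P\to\infty}R_{per}/\log P=GKd$ and $R_{per}\ge R_{lim}\ge R_{lb}$, the squeeze theorem yields $\lim_{P\to\infty}R_{lim}/\log P=GKd$, which is exactly the assertion.

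The step I expect to be the main obstacle is the asymptotic evaluation $R_{per}=GKd\log P+\mathcal{O}(1)$: one must verify that the density $f$ near $v=0$ is mild enough that $\mathbb{E}[\log V]>-\infty$, and then justify the exchange of limit and integral --- this is precisely where the explicit Laguerre-polynomial form of the Wishart eigenvalue density (as referenced via \cite{tulino2004random}) and its finite log-moment are used. Everything else --- the bookkeeping showing $2^{-b}\propto 1/P$ renders the quantization penalty $\mathcal{O}(1)$, and the final squeeze --- is routine.
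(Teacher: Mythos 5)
Your proposal is correct and follows essentially the same route as the paper's own proof: substitute $b=\log P$ so that the quantization penalty $\tfrac{P}{d^{2}}c_{jk}2^{-b}$ collapses to the constant $\tfrac{1}{d^{2}}c_{jk}$, observe that the resulting gap $R_{per}-R_{lb}$ is bounded, and squeeze $R_{lim}$ between $R_{per}$ and $R_{lb}$ using Lemma~\ref{Throughput-Bounds}. The only difference is that you supply a justification (via the finite log-moment of the Wishart eigenvalue density) for $\lim_{P\to\infty}R_{per}/\log P=GKd$, a step the paper simply asserts.
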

\begin{proof}
From (\ref{eq:bits_scaling}), we obtain $b=\log P$. Hence
\[
R_{lb}=R_{per}-\sum_{j=1}^{G}\sum_{k=1}^{K}d\cdot\log\left(1+\frac{1}{d^{2}}c_{jk}\right).
\]
Note $\sum_{j=1}^{G}\sum_{k=1}^{K}d\cdot\log\left(1+\frac{1}{d^{2}}c_{jk}\right)$
is bounded. Therefore,
\[
\lim_{P\rightarrow\infty}\frac{R_{lb}}{\log P}=\lim_{P\rightarrow\infty}\frac{R_{per}}{\log P}=GKd.
\]
From this and (\ref{eq:throughput-bound}), Theorem \ref{Feedback-Bits-Saling}
is proved. 
\end{proof}

\begin{remrk}
[Interpretation of Theorem \ref{Feedback-Bits-Saling}]Theorem \ref{Feedback-Bits-Saling}
demonstrates a linear scaling relationship between the CSI feedback
bits and feedback dimension in MIMO cellular networks. This result
indicates that the proposed metric of CSI feedback dimension can separate
the CSI filtering and CSI quantization in MIMO cellular networks,
and can serve as a first-order measurement of the feedback overhead.
\end{remrk}

\section{Numerical Results}

In this section, we verify the performance of the proposed feedback
scheme in MIMO cellular networks through simulation. We consider limited
feedback with Grassmannian codebooks \cite{dai2008quantization} to
quantize the partial CSI $\{F_{jk}\}$ at each MS. The precoders /
decorrelators are designed using the Algorithm 1 developed in Section
III-C. We consider $10^{4}$ i.i.d. Rayleigh fading channel realizations
and compare the performance of the proposed feedback scheme with the
following 3 baselines. 
\begin{itemize}
\item \textbf{Baseline 1} \emph{(Feedback Full CDI As in }\cite{thukral2009interference,krishnamachari2009interference,rao2012limited}\emph{):}
Each MS quantizes and feedbacks the full CDI using Grassmannian codebooks,
i.e., $F_{jk}=\left(\begin{array}{ccc}
\cdots, & \mathbb{P}\left(\mathbf{H}_{jk,i}\right), & \cdots\end{array}\right)_{\forall i},\forall j,k$.
\item \textbf{Baseline 2} \emph{(Feedback Truncated CDI As in }\cite{de2012interference}\emph{):}
Each MS quantizes and feedbacks the CDI of the smallest CSI submatrices,
i.e., $F_{jk}=\left(\begin{array}{ccc}
\cdots, & \mathbb{P}(\mathbf{H}_{jk,i}^{s}), & \cdots\end{array}\right)_{\forall i},$ $\forall j,k$, where $\mathbf{H}_{jk,i}^{s}=\left[\begin{array}{cc}
\mathbf{I}_{m} & \mathbf{0}\end{array}\right]\mathbf{H}_{ji,k}$, and $m$ are chosen to make the network tightly feasible by $m=GKd+d-N$
\cite{liu2013feasibility}. 
\item \textbf{Baseline 3} \emph{(Random Beamforming)}: The BS, MS randomly
choose the transceivers $\{\mathbf{T}_{j},\mathbf{V}_{jk}^{s}\}$,
$\{\mathbf{U}_{jk}:\forall j,k\}$. 
\end{itemize}

Consider a MIMO cellular network with $G=3$, $K=2$, $N=M=4$, $d=1$
for simulation tests. We obtain the following feedback profile for
the proposed scheme via Algorithm 3, $\mathcal{L}=\{\{m_{jk}=4:\forall j,k\},g=2,\{n_{1}=4,n_{2}=3\}\}$.
Note the sum feedback dimension for the proposed scheme, baseline
1 and baseline 2 are 114, 198, and 270 respectively under the considered
network topology.

\subsection{Throughput Comparison w.r.t. Transmit SNR}

Figure \ref{fig:Throughput-versus-transmit-snr} illustrates the network
throughput versus the transmit SNR $P$ under a sum feedback bits
of $B_{tot}=800$. The proposed scheme achieves substantial throughput
gain over the baselines. This is because the proposed scheme significantly
reduces the CSI feedback dimension while preserving the IA feasibility.
Under the same number of feedback bits, more CSI feedback bits can
be utilized to reduce the quantization error per dimension. The dramatic
performance gain highlights the importance of reducing the feedback
dimension in MIMO cellular networks. Furthermore, we observe that
the gain is larger at high SNR because residual interference, which
is the major performance bottleneck in high SNR regimes, is significantly
reduced by the proposed scheme. On the other hand, we observe that
the throughputs of all the schemes saturate at high SNR. This is because
under fixed number of CSI feedback bits, the leakage interference
power due to CSI quantization scales with the transmit SNR. 

\begin{figure}
\begin{centering}
\includegraphics[scale=0.6]{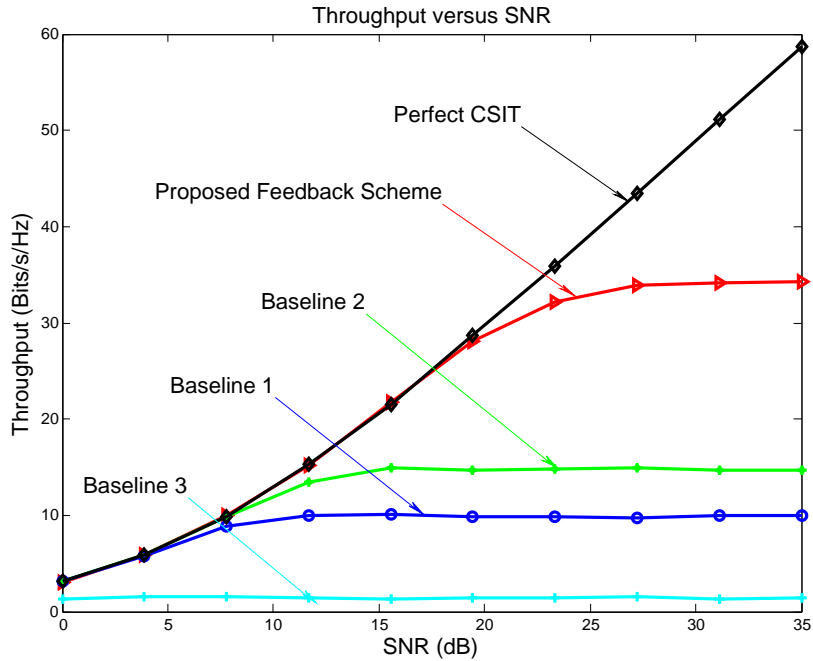}
\par\end{centering}

\caption{\label{fig:Throughput-versus-transmit-snr}Throughput versus transmit
SNR under $B_{tot}=800$ in a $G=3$, $K=2$, $N=M=4$, $d=1$ network.}
\end{figure}

\subsection{Relationship between CSI Feedback Dimension and Feedback Bits}

Figure \ref{fig:Throughput-scaling} illustrates the network throughput
versus the transmit SNR when the number of CSI feedback bits scales
as $B_{tot}=D\log\mbox{SNR}$ as in Theorem \ref{Feedback-Bits-Saling}.
Note $D=114$ as derived for the proposed feedback scheme. As we can
see, the throughput of the proposed scheme achieves the same slope
as that of the perfect CSIT throughput, which justifies that the sum
DoFs of the network are maintained under the given CSI feedback bits
scaling condition as in Theorem \ref{Feedback-Bits-Saling}. However,
the baseline 1, 2 can not achieve the same slope because they have
larger CSI feedback dimension and hence require more feedback bits.

\begin{figure}
\begin{centering}
\includegraphics[scale=0.6]{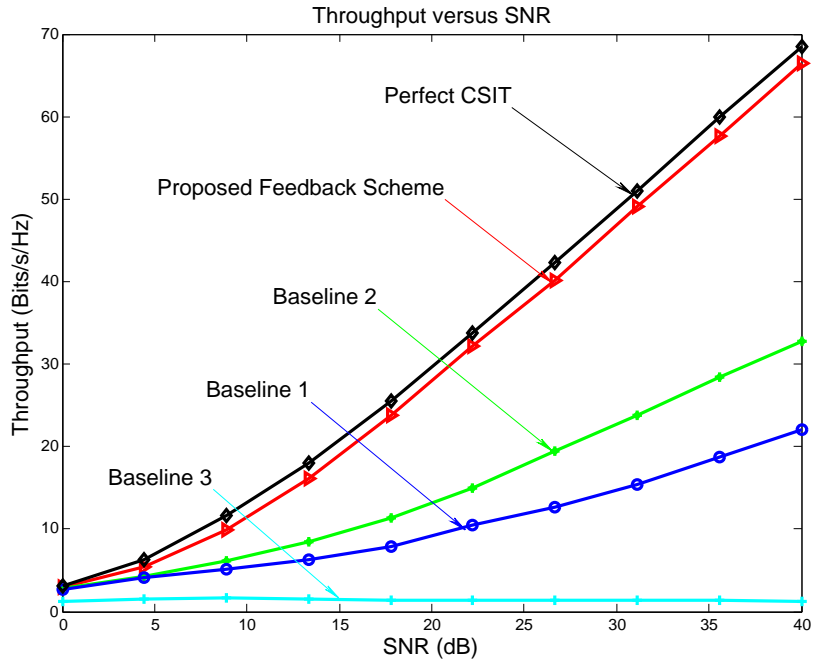}
\par\end{centering}

\caption{\label{fig:Throughput-scaling}Throughput scaling versus transmit
SNR under $B_{tot}=D\log\mbox{SNR}$ in a $G=3$, $K=2$, $N=M=4$,
$d=1$ network. }
\end{figure}

\section{Conclusions}

In this paper, we consider IA processing with CSI feedback filtering
in MIMO cellular networks. We characterize the feedback cost by the
feedback dimension and demonstrate that it can serve as a first order
metric of the CSI feedback overhead. Based on these, we formulate
the problem of feedback dimension minimization subject to the required
IA DoF for a given antenna configuration and we further propose an
asymptotic optimal solution. Both analytical and simulation results
show that the proposed scheme can significantly reduce the CSI feedback
cost of IA in MIMO cellular networks.

\bibliographystyle{ieeetr}
\bibliography{Rx_Zf_IA_ref}

\end{document}